\newtheorem{theorem}{Theorem}[section]
\newtheorem{definition}[theorem]{Definition}
\newtheorem{proposition}[theorem]{Proposition}
\newtheorem{corollary}[theorem]{Corollary}
\newtheorem{lemma}[theorem]{Lemma}
\newtheorem{remark}[theorem]{Remark}
\newtheorem{question}[theorem]{Question}
\newtheorem{conjecture}[theorem]{Conjecture}
\def\id{\mathrm{id}}
\begin{document}

\title{On some classes of bipartite unitary operators}
\date{\today}

\author{Julien Deschamps}
\address{Dipartimento di Matematica, Universit\`a degli Studi di Genova, Via Dodecaneso, 35, 16146 Genova, Italy}
\email{deschamps@dima.unige.it}

\author{Ion Nechita}
\address{Zentrum Mathematik, M5, Technische Universit\"at M\"unchen, Boltzmannstrasse 3, 85748 Garching, Germany, and CNRS, Laboratoire de Physique Th\'eorique, Toulouse, France}
\email{nechita@irsamc.ups-tlse.fr}

\author{Cl\'ement Pellegrini}
\address{Institut de Math\'ematiques de Toulouse, Laboratoire de Statistique et de Probabilit\'e, Universit\'e Paul Sabatier (Toulouse III), 31062 Toulouse Cedex 9, France}
\email{clement.pellegrini@math.univ-toulouse.fr}

\subjclass[2000]{}
\keywords{}

\begin{abstract}
We investigate unitary operators acting on a tensor product space, with the property that the quantum channels they generate, via the Stinespring dilation theorem, are of a particular type, independently of the state of the ancilla system in the Stinespring relation. The types of quantum channels we consider are those of interest in quantum information theory: unitary conjugations, constant channels, unital channels, mixed unitary channels, PPT channels, and entanglement breaking channels. For some of the classes of bipartite unitary operators corresponding to the above types of channels, we provide explicit characterizations, necessary and/or sufficient conditions for membership, and we compute the dimension of the corresponding algebraic variety. Inclusions between these classes are considered, and we show that for small dimensions, many of these sets are identical. 
\end{abstract}

\maketitle

\tableofcontents

\section*{Introduction}

In this work, we study some families of unitary operators acting on a tensor product of two finite dimensional Hilbert spaces, having some special properties in relation to the Stinespring dilation theorem. This fundamental result in operator algebras \cite{sti} states that any linear, completely positive, trace preserving map $L$ acting on $\mathcal M_n(\mathbb C)$ can be written as 
\begin{equation}\label{eq:Stinespring-intro}
L(\rho) = [\mathrm{id} \otimes \mathrm{Tr}](U(\rho \otimes \beta)U^*),
\end{equation}
where $U$ is a unitary operator acting on the tensor product $\mathbb C^n \otimes \mathbb C^k$, $\beta \in \mathcal M_k(\mathbb C)$ is a positive semidefinite matrix of unit trace, and $k$ is a large enough parameter ($k=n^2$ suffices). In quantum information theory \cite{nch}, the map $L$ is called a quantum channel, and the matrix $\beta$ is called a density matrix (or simply a quantum state). The Hilbert space $\mathbb C^k$ by which the original space $\mathbb C^n$ needs to be extended is called the environment, or the ancilla space. 

The starting point of our investigation is the remark that the channel $L$ in \eqref{eq:Stinespring-intro} depends, a priori, on the quantum state $\beta$. In the practice of quantum theory, the environment space $\mathbb C^k$ is usually large (most of the times much larger than the system space $\mathbb C^n$), and thus it is inconvenient to describe the aforementioned dependence of $L$ on $\beta$. More precisely, we would like to characterize the unitary operators $U$, for which, independently on the value of $\beta$, the channel $L$ given by \eqref{eq:Stinespring-intro} belongs to some given class $\mathcal L$ of quantum channels. 

In this work, we answer the question above for several classes $\mathcal L$ of relevance in quantum information theory: unitary conjugations $V \cdot V^*$, constant channels ($L(\rho)$ does not depend on the quantum state $\rho$), unital channels, ($L(I)=I$), mixed unitary channels (convex combinations of unitary conjugations), PPT channels (channels for which the Choi matrix has a positive semidefinite partial transpose), and entanglement breaking channels (channel which, acting on one half of any entangled state, yield separable states). In particular within our work, we give a partial positive answer to the conjecture formulated in \cite{adp,adp1}. In \cite{adp,adp1}, it was conjectured that convex combination of unitary conjugations can be obtained only by a very specific class of bipartite unitary operators (the set $\mathcal U_{mixed}$ in the present work). We show that this conjecture is true, under some additional assumptions.

It is very important to state at this time that we are not concerned with classes of channels, but with classes of bipartite unitary operators. Although these classes are defined in terms of channels, we are interested in characterizing the ``interaction'' unitaries $U$ with the property that, for all ancilla states $\beta$, the channel $L$ given by \eqref{eq:Stinespring-intro} has some fixed set of properties. A similar question was studied in \cite{jpc}, where the authors characterize the unitary operators $U$ having the property that the only matrices $\beta$ which give quantum channels in \eqref{eq:Stinespring-intro} are quantum states.

The paper is structured as follows. In Section \ref{sec:classes-def} we define the classes of unitary operators we are interested in, and we present some general properties. Sections \ref{sec:aut-const} - \ref{sec:block-diag} deal each with one or more of these classes. In Section \ref{sec:relations}, we collect some more relations between the different classes, and present some equality cases. We close the work with a section containing some open problems. Finally, in Appendix \ref{sec:block-svd}, we discuss the block singular value decomposition of operators. 

\bigskip

\noindent \emph{Acknowledgements.} We would like to thank St\'ephane Attal for suggesting the question, and Denis Bernard and Michael Kech for inspiring discussions. We are also grateful to Siddharth Karumanchi for pointing out to us reference \cite{kmwy} and sharing with us the unpublished notes \cite{kmwy2}. I.N.'s research has been supported by a von Humboldt fellowship and by the ANR projects {OSQPI} {2011 BS01 008 01} and  {RMTQIT}  {ANR-12-IS01-0001-01}. The three authors have been supported by the ANR project {StoQ} {ANR-14-CE25-0003-01}. 

\section{Some classes of unitary operators}
\label{sec:classes-def}

Let us fix, once and for all, the space on which the unitary operators we investigate will act. Put $\mathcal H_A = \mathbb C^n$, $\mathcal H_B = \mathbb C^k$ and let $\mathcal H_{AB} = \mathcal H_A \otimes \mathcal H_B =  \mathbb C^n \otimes \mathbb C^k$. The compact group of unitary operators acting on $\mathcal H_{AB} = \mathbb C^{nk}$ will be denoted by $\mathcal U_{nk}$ or, simply, $\mathcal U$:
\begin{equation*}
\mathcal U:= \mathcal U_{nk} = \{U \in \mathcal M_{nk}(\mathbb C) \, | \, UU^*=U^*U = I\}.
\end{equation*}

Starting from the trace linear form $\mathrm{Tr} : \mathcal M_k(\mathbb C) \to \mathbb C$ and the transposition operation $\mathrm{t}: \mathcal M_k(\mathbb C) \to \mathcal M_k(\mathbb C)$, define the \emph{partial trace} and the \emph{partial transposition} respectively by
\begin{align*}
\mathrm{Tr}_B &= \mathrm{id} \otimes \mathrm{Tr}\\
\mathrm{t}_B &= \mathrm{id} \otimes \mathrm{t}.
\end{align*}
In other words,
\begin{align*}
\mathrm{Tr}_B : \mathcal M_n(\mathbb C) \otimes \mathcal M_k(\mathbb C) &\to \mathcal M_n(\mathbb C)\\
A \otimes B &\mapsto (\mathrm{Tr} B)A
\end{align*}
and
\begin{align*}
\mathrm{t}_B : \mathcal M_n(\mathbb C) \otimes \mathcal M_k(\mathbb C) &\to \mathcal M_n(\mathbb C) \otimes \mathcal M_k(\mathbb C)\\
A \otimes B &\mapsto A \otimes B^\top.
\end{align*}
Note that we use the following notation for the transposition $\mathrm{t}(B)=B^\top$. For obvious aesthetic reasons, we shall write $X^\Gamma = \mathrm{t}_B(X)$.

We denote by $\mathcal M_k^{1,+}(\mathbb C)$ the set of $k$-dimensional density matrices (or quantum states)
$$\mathcal M_k^{1,+}(\mathbb C) := \{ \beta \in \mathcal M_k(\mathbb C) \, | \, \beta \geq 0 \text{ and } \mathrm{Tr}(\beta) = 1\}.$$

To a unitary transformation $U \in \mathcal U$ and a quantum state $\beta \in \mathcal M_k^{1,+}(\mathbb C)$, we associate the quantum channel $L_{U,\beta}:\mathcal M_n(\mathbb C) \to \mathcal M_n(\mathbb C)$ defined by the Stinespring formula (see, e.g., \cite[Section 8.2.2]{nch})
\begin{equation}\label{eq:Stinespring}
L_{U,\beta}(X) = \mathrm{Tr}_B(U\cdot X \otimes \beta \cdot U^*).
\end{equation}

Let us introduce some classes of quantum channels having the \emph{unitary invariance property}:
$$\forall \, L \in \mathcal L,\, V_1,V_2 \in \mathcal U_n, \qquad V_2L(V_1 \cdot V_1^*)V_2^* \in \mathcal L.$$

We present next a list of such classes of channels, leaving the task of verifying the unitary invariance property as an exercise for the reader:
\begin{itemize}
\item Unitary conjugations
$$
\mathcal L_{aut} = \{X \mapsto V X V^*\}_{V \in \mathcal U_n}.
$$
\item Constant channels
$$
\mathcal L_{const} = \{L : \mathcal M_n(\mathbb C) \to \mathcal M_n(\mathbb C) \, : \, \exists \, \sigma \in \mathcal M_n^{1,+}(\mathbb C) \text{ s.t. } \forall \, \rho \in \mathcal M_n^{1,+}(\mathbb C), \, L(\rho) = \sigma \}.
$$
\item Unital channels
$$
\mathcal L_{unital} = \{L : \mathcal M_n(\mathbb C) \to \mathcal M_n(\mathbb C) \, : \, L(I_n) = I_n \}.
$$
\item Mixed unitary channels
$$\mathcal L_{mixed} = \mathrm{conv} \{ X \mapsto VXV^*\}_{V \in \mathcal U_n}.$$
\item Positive partial transpose (PPT) channels
$$\mathcal L_{PPT} = \{L : \mathcal M_n(\mathbb C) \to \mathcal M_n(\mathbb C) \, : \,  \forall \, \rho \in \mathcal M_{n^2}^{1,+}(\mathbb C), \, \left[[L \otimes \mathrm{id}_n](\rho)\right]^\Gamma \geq 0\}.$$
\item Entanglement breaking channels
$$\mathcal L_{EB} = \{L : \mathcal M_n(\mathbb C) \to \mathcal M_n(\mathbb C) \, : \,  \forall \, \rho \in \mathcal M_{n^2}^{1,+}(\mathbb C), \, [L \otimes \mathrm{id}_n](\rho) \text{ is separable}\}.$$
\end{itemize}

We move next to the main definition of this paper, the classes of bipartite unitary channels we are interested in. These classes are defined in a natural way as the set of unitary operators inducing, via the Stinesping formula \eqref{eq:Stinespring}, independent of the state $\beta$ of the environment system $B$, quantum channels belonging to one of the classes above. This exact notion, in the case of degradable channels and entanglement breaking channels, has been considered respectively in \cite[Definition 15]{kmwy} and \cite{kmwy2}. More precisely, we define, for any $* \in \{aut,const, unital,mixed,PPT,EB\}$,
$$\mathcal U_{*} = \{U \in \mathcal U_{nk} \, | \, \forall \beta \in \mathcal M_k^{1,+}(\mathbb C), L_{U,\beta}\in \mathcal L_*\}.$$

 We have, in order:

\begin{align}
\label{eq:def-U-aut}
\mathcal U_{aut} &= \{U \in \mathcal U_{nk} \, | \, \forall \beta \in \mathcal M_k^{1,+}(\mathbb C), L_{U,\beta}(\rho) = V_\beta \rho V_\beta^* \text{ for some } V_\beta \in \mathcal U_n\}\\
\label{eq:def-U-const}
\mathcal U_{const} &= \{U \in \mathcal U_{nk} \, | \, \forall \beta \in \mathcal M_k^{1,+}(\mathbb C), L_{U,\beta} \text{ is a constant channel}\}\\
\label{eq:def-U-unital}
\mathcal U_{unital} &= \{U \in \mathcal U_{nk} \, | \, \forall \beta \in \mathcal M_k^{1,+}(\mathbb C), L_{U,\beta}(I) = I\}\\
\label{eq:def-U-mixed}
\mathcal U_{mixed} &= \{U \in \mathcal U_{nk} \, | \,  \forall \beta \in \mathcal M_k^{1,+}(\mathbb C), L_{U,\beta}(X) = \sum_{i=1}^r p_i(\beta) U_i(\beta) X U_i(\beta)^*\\
\notag
& \qquad \qquad \text{ with } p_i(\beta)\geq 0 , \sum_i p_i(\beta) = 1\text{ and } U_i(\beta) \in \mathcal U_n\}\\
\label{eq:def-U-PPT}
\mathcal U_{PPT} &= \{U \in \mathcal U_{nk} \, | \, \forall \beta \in \mathcal M_k^{1,+}(\mathbb C), L_{U,\beta} \text{ is a PPT channel}\}\\
\label{eq:def-U-EB}
\mathcal U_{EB} &= \{U \in \mathcal U_{nk} \, | \, \forall \beta \in \mathcal M_k^{1,+}(\mathbb C), L_{U,\beta} \text{ is an entanglement breaking channel}\}
 \end{align}
 
In relation to the class $\mathcal U_{aut}$, we also define (see Section \ref{sec:aut-const})
\begin{equation}\label{eq:def-U-single}
\mathcal U_{single} = \{U \in \mathcal U_{nk} \, | \, \text{the set }\{ L_{U,\beta}\, : \, \beta \in \mathcal M_k^{1,+}(\mathbb C)\} \text{ has 1 element}\}.
\end{equation}
 
 One of the original motivations of this work was to obtain a characterization of the set $\mathcal U_{mixed}$. As stepping stones towards a description of this set, we introduce the following classes of bipartite unitary operators:
\begin{align}
\label{eq:def-U-prob}
\mathcal U_{prob} &= \{U \in \mathcal U_{nk} \, | \, \exists U_i \in \mathcal U_n \text{ s.t. } \forall \beta \in \mathcal M_k^{1,+}(\mathbb C), L_{U,\beta}(X) = \sum_{i=1}^r p_i(\beta) U_i X U_i^*\\
\notag
& \qquad \qquad \text{ with } p_i(\beta)\geq 0 \text{ and } \sum_i p_i(\beta) = 1\}\\
\label{eq:def-U-prob-lin}
\mathcal U_{prob-lin} &= \{U \in \mathcal U_{nk} \, | \, \exists U_i \in \mathcal U_n \text{ s.t. } \forall \beta \in \mathcal M_k^{1,+}(\mathbb C), L_{U,\beta}(X) = \sum_{i=1}^r p_i(\beta) U_i X U_i^*\\
\notag
& \qquad \qquad \text{ with linear functions } p_i(\beta)\geq 0 \text{ and } \sum_i p_i(\beta) = 1\}\\
\label{eq:def-U-block-diag-A}
\mathcal U_{block-diag}^A &= \{U \in \mathcal U_{nk} \, | \, U = \sum_{i=1}^k U_i \otimes e_if_i^*, \\
\notag
&  \qquad  \qquad \text{ with } U_i \in \mathcal U_n \text{ and } \{ e_i \}, \{f_i\}  \text{ orthonormal bases in } \mathbb C^k\}\\
\label{eq:def-U-block-diag-B}
\mathcal U_{block-diag}^B &= \{U \in \mathcal U_{nk} \, | \, U = \sum_{i=1}^n e_if_i^* \otimes U_i, \\
\notag
&  \qquad  \qquad \text{ with } U_i \in \mathcal U_k \text{ and } \{ e_i \}, \{f_i\}  \text{ orthonormal bases in } \mathbb C^n\}
\end{align}

Let us mention at this point a simple but fundamental property of the sets of bipartite unitary matrices we have just introduced. 

\begin{lemma}
The \emph{local} unitary group $\mathcal U_n \times \mathcal U_k$ acts by left and right multiplication on $\mathcal U_{*}$, for all $* \in \{aut,const, unital, mixed, PPT, EB, single, prob, prob-lin, block-diag-A,block-diag-B\}$:
$$\forall \, U \in \mathcal U_{*}, \forall \, V_1,V_2 \in \mathcal U_n, \forall \, W_1, W_2 \in \mathcal U_k, \qquad (V_1 \otimes W_1) U (V_2 \otimes W_2) \in \mathcal U_{*}.$$
\end{lemma}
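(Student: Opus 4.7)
The plan is to reduce everything to a single key identity. For $U \in \mathcal U_{nk}$ and $\widetilde U := (V_1 \otimes W_1) U (V_2 \otimes W_2)$, I will compute
\[
L_{\widetilde U,\beta}(X) = \mathrm{Tr}_B\!\bigl(\widetilde U (X\otimes\beta)\widetilde U^*\bigr)
\]
by first pulling $(V_2 \otimes W_2)(X\otimes\beta)(V_2\otimes W_2)^* = V_2 X V_2^* \otimes W_2 \beta W_2^*$ inside, and then using that for any matrix $Y \in \mathcal M_n\otimes\mathcal M_k$ and any $W_1 \in \mathcal U_k$ one has $\mathrm{Tr}_B\!\bigl((I\otimes W_1)Y(I\otimes W_1^*)\bigr)=\mathrm{Tr}_B(Y)$, by cyclicity of $\mathrm{Tr}$ in the $B$-slot. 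The result of this elementary computation is the identity
\[
L_{\widetilde U,\beta}(X) \;=\; V_1\, L_{U,\,\beta'}(V_2\, X\, V_2^*)\, V_1^*, \qquad \beta' := W_2 \beta W_2^*.
\]
Since $\beta\mapsto W_2\beta W_2^*$ is a bijection on $\mathcal M_k^{1,+}(\mathbb C)$, the family $\{L_{\widetilde U,\beta}\}_\beta$ is obtained from $\{L_{U,\beta}\}_\beta$ by the outer operation $L \mapsto V_1 L(V_2\cdot V_2^*)V_1^*$.

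For the six classes $*\in\{aut, const, unital, mixed, PPT, EB\}$, this is exactly the content of the unitary invariance property stated just before the lemma (after trivially relabelling $V_1\leftrightarrow V_2$), so the membership $L_{\widetilde U,\beta}\in\mathcal L_*$ follows immediately from $L_{U,\beta'}\in\mathcal L_*$.

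The remaining five classes I treat directly from the identity above. For $\mathcal U_{single}$, the bijection $\beta\mapsto\beta'$ shows that the hypothesis ``$\{L_{U,\beta}\}$ has one element'' is preserved; the outer conjugation by $V_1$ and inner conjugation by $V_2$ give a single channel for $\widetilde U$. For $\mathcal U_{prob}$ with decomposition $L_{U,\beta}(X)=\sum_i p_i(\beta)U_iXU_i^*$, substitution yields
\[
L_{\widetilde U,\beta}(X) \;=\; \sum_i p_i(W_2\beta W_2^*)\,(V_1 U_i V_2)\, X\, (V_1 U_i V_2)^*,
\]
and the new Kraus unitaries $\widetilde U_i := V_1 U_i V_2$ are fixed, so $\widetilde U \in \mathcal U_{prob}$. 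For $\mathcal U_{prob-lin}$ the additional observation is that $\beta\mapsto W_2\beta W_2^*$ is linear, so the coefficients $\beta\mapsto p_i(W_2\beta W_2^*)$ remain linear. For $\mathcal U^A_{block-diag}$ with $U=\sum_i U_i\otimes e_if_i^*$, one directly expands
\[
(V_1\otimes W_1)U(V_2\otimes W_2) \;=\; \sum_i (V_1 U_i V_2)\otimes (W_1 e_i)(W_2^* f_i)^*,
\]
and the vectors $\{W_1 e_i\}$, $\{W_2^* f_i\}$ are again orthonormal bases of $\mathbb C^k$; the argument for $\mathcal U^B_{block-diag}$ is symmetric. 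No step is really a serious obstacle --- the only mild care is the bookkeeping of which conjugation acts on which tensor factor in the derivation of the key identity, and noting that $\beta\mapsto W_2\beta W_2^*$ is a linear bijection of $\mathcal M_k^{1,+}(\mathbb C)$, which is what transports all the defining properties cleanly.
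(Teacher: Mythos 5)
Your proof is correct and follows exactly the route the paper intends: the key identity $L_{\widetilde U,\beta}=V_1\,L_{U,\,W_2\beta W_2^*}(V_2\cdot V_2^*)\,V_1^*$ combined with the bi-unitary invariance of the classes $\mathcal L_*$ and the ``for all $\beta$'' quantifier (via the bijection $\beta\mapsto W_2\beta W_2^*$). The paper merely states this in one sentence and leaves the remaining five classes as ``easy verifications''; your write-up supplies those details correctly.
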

\begin{proof}
In the case of $* \in \{aut,const, unital, mixed, PPT, EB\}$, the proof follows from the bi-unitary invariance of the corresponding class $\mathcal L_*$ and from the fact that in the definition of $\mathcal U_*$, we require the condition to hold for all states on the environment $\beta \in \mathcal M_k^{1,+}(\mathbb C)$. The other cases are easy verifications, we leave the details to the reader.
\end{proof}

As suggested by the property above, the subgroup $\mathcal U_n \otimes \mathcal U_k \subseteq \mathcal U_{nk}$ plays an important role in our study. It turns out that the flip operator 
\begin{equation}\label{eq:flip}
F_n:\mathbb C^n \otimes \mathbb C^n \to \mathbb C^n \otimes \mathbb C^n, \qquad F_nx \otimes y = y \otimes x
\end{equation}
will also be of particular importance, in light of the following classical theorem.

\begin{theorem}(\cite[Theorem 3.1]{dli})
Let $G$ be a compact group such that $\mathcal U_n \otimes \mathcal U_k \subseteq G \subseteq \mathcal U_{nk}$. Then $G$ is one of the following
\begin{enumerate}
\item $G = \mathcal U_n \otimes \mathcal U_k$;
\item $G =  \mathcal U_{nk}$;
\item If $n=k$, $G = \langle \mathcal U_n \otimes \mathcal U_n, F_n \rangle$.
\end{enumerate}
\end{theorem}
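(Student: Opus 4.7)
The plan is to pass to Lie algebras and invoke representation theory of $\mathcal U_n \times \mathcal U_k$. Since $G$ is closed in the compact Lie group $\mathcal U_{nk}$, it is itself a compact Lie group. Its Lie algebra $\mathfrak{g} \subseteq \mathfrak{u}_{nk}$ contains $\mathfrak{l} := \mathfrak{u}_n \otimes I_k + I_n \otimes \mathfrak{u}_k$, and because $\mathcal U_n \otimes \mathcal U_k \subseteq G$, the subspace $\mathfrak{g}$ is invariant under the adjoint action of $\mathcal U_n \otimes \mathcal U_k$ on $\mathfrak{u}_{nk}$. I would therefore first classify the $\mathcal U_n \otimes \mathcal U_k$-invariant real subspaces of $\mathfrak{u}_{nk}$ that contain $\mathfrak{l}$, which reduces the question to a Schur-lemma computation.

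For this I pass to the complexification. The $\mathcal U_n \times \mathcal U_k$-module $M_n(\mathbb C) \otimes M_k(\mathbb C)$, where $V_1 \otimes V_2$ acts by conjugation, splits using $M_n(\mathbb C) = \mathbb C I_n \oplus \mathfrak{sl}_n(\mathbb C)$ as
$$M_n(\mathbb C) \otimes M_k(\mathbb C) = \mathbb C \,\oplus\, \mathfrak{sl}_n(\mathbb C) \,\oplus\, \mathfrak{sl}_k(\mathbb C) \,\oplus\, \bigl(\mathfrak{sl}_n(\mathbb C) \otimes \mathfrak{sl}_k(\mathbb C)\bigr),$$
and (assuming $n,k \geq 2$; the low-dimensional cases are trivial) the four summands are pairwise inequivalent irreducibles, because adjoint representations of $\mathcal U_n$ are irreducible and an external tensor product of irreducibles is irreducible. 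The complexification $\mathfrak{l}_{\mathbb C}$ is the sum of the first three summands, so the only invariant subspaces containing $\mathfrak{l}_{\mathbb C}$ are $\mathfrak{l}_{\mathbb C}$ itself and the whole $M_{nk}(\mathbb C)$. Intersecting with $\mathfrak{u}_{nk}$ forces $\mathfrak{g} \in \{\mathfrak{l},\mathfrak{u}_{nk}\}$, hence the identity component satisfies $G^{0} \in \{\mathcal U_n \otimes \mathcal U_k,\, \mathcal U_{nk}\}$. In the latter case $\mathcal U_{nk}$ is connected, giving $G = \mathcal U_{nk}$, which is case (2).

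The remaining and most delicate step is to determine the finite quotient $G/G^{0}$ when $G^{0} = \mathcal U_n \otimes \mathcal U_k$. Any $g \in G$ conjugates $G^{0}$ to itself and so induces an automorphism $\phi$ of the abstract Lie group $\mathcal U_n \otimes \mathcal U_k$. If $n \neq k$ the two factors are non-isomorphic, so $\phi$ preserves each factor and restricts to automorphisms $\phi_1,\phi_2$ of $\mathcal U_n$ and $\mathcal U_k$; I would rule out the outer (complex-conjugation) automorphism $V \mapsto \overline V$ by noting that its differential sends a skew-Hermitian $X$ to $-X^{\top}$, whose spectrum is the negative of that of $X$, whereas unitary conjugation preserves the spectrum. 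Hence $\phi_1,\phi_2$ are inner, which forces $g$ to lie in $\mathbb T \cdot (\mathcal U_n \otimes \mathcal U_k) = \mathcal U_n \otimes \mathcal U_k$, yielding case (1). If $n = k$, the same analysis admits in addition the swap automorphism $V \otimes W \mapsto W \otimes V$, which is implemented precisely by conjugation with $F_n$ up to elements of $\mathcal U_n \otimes \mathcal U_n$. Consequently $G/G^{0}$ embeds into $\mathbb Z/2\mathbb Z$, and the two possibilities recover cases (1) and (3), completing the classification.
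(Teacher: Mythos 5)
The paper does not prove this statement: it is imported verbatim from \DJ okovi\'c--Li \cite[Theorem 3.1]{dli} and used as a black box, so there is no internal proof to compare against. Your Lie-theoretic argument is, however, essentially correct and is in the same spirit as the standard proof of such overgroup theorems: the decomposition of $M_n(\mathbb C)\otimes M_k(\mathbb C)$ under conjugation by $\mathcal U_n\otimes\mathcal U_k$ into the four summands $\mathbb C I\oplus\mathfrak{sl}_n\oplus\mathfrak{sl}_k\oplus(\mathfrak{sl}_n\otimes\mathfrak{sl}_k)$ is multiplicity-free (the middle two are inequivalent as $\mathcal U_n\times\mathcal U_k$-modules even when $n=k$, since each factor acts trivially on the other's summand), which is exactly what licenses the conclusion that the only $\mathrm{Ad}$-invariant subspaces containing $\mathfrak l_{\mathbb C}$ are $\mathfrak l_{\mathbb C}$ and everything; and the passage back to real forms via $W=W_{\mathbb C}\cap\mathfrak u_{nk}$ is unproblematic. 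Two places deserve more care in a full write-up. First, $\mathcal U_n\otimes\mathcal U_k$ is not a direct product of $\mathcal U_n$ and $\mathcal U_k$ (the scalars are amalgamated), so ``$\phi$ preserves each factor'' should be phrased as: $\mathrm{Ad}(g)$ preserves the one-dimensional center $\mathbb R\, iI_{nk}$ of $\mathfrak l$ and permutes its simple ideals $\mathfrak{su}_n\otimes I$ and $I\otimes\mathfrak{su}_k$, which are non-isomorphic when $n\neq k$; after correcting by $F_n$ in the case $n=k$, your spectral argument (a skew-Hermitian $X$ with spectrum not symmetric about $0$ exists for $n\geq 3$, and $\mathfrak{su}_2$ has no outer automorphisms) correctly rules out the conjugation automorphism. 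Second, the final step ``$g$ lies in $\mathbb T\cdot(\mathcal U_n\otimes\mathcal U_k)$'' should be justified by noting that once $\mathrm{Ad}(g)$ agrees with $\mathrm{Ad}(V_1\otimes V_2)$ on $\mathfrak l$, the element $(V_1\otimes V_2)^{-1}g$ centralizes the connected group $\mathcal U_n\otimes\mathcal U_k$, which acts irreducibly on $\mathbb C^n\otimes\mathbb C^k$, so it is scalar by Schur's lemma. With those points made explicit, your sketch is a complete and self-contained proof, which is arguably more than the paper offers for this statement.
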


 By direct computation, one can show that the following chain of inclusions holds :
$$\mathcal U^{A}_{block-diag} \subseteq \mathcal U^{A}_{prob} \subseteq \mathcal U^{A}_{mixed} \subseteq \mathcal U^{A}_{unital} \subseteq \mathcal U.
$$

Note that one can define ``$B$''-versions of the above sets, in an obvious way, by swapping the tensor factors $A$ and $B$ ( above inclusions are still true for $\mathcal U^B_*$).

One of the main focuses of the current work will be to understand which are the inclusions above which are strict and which are actually equalities. 

We end this section by showing that if two interaction unitary operators $U,V$ generate the same channels for all states $\beta$ on the environment, then they are related by a unitary operator acting on the environment $\mathbb C^k$; this result will turn out to be useful later on. 

\begin{lemma}\label{lem:same-channels}
Let $U,V \in \mathcal U_{nk}$ be two bipartite unitary operators. The following two assertions are equivalent:
\begin{enumerate}
\item \label{it:FUbeta-equals-FVbeta} For all density matrix on $\beta \in \mathcal M_k^{1,+}(\mathbb C)$, the quantum channels $U$ and $V$ induce are equal: $L_{U,\beta}(\rho) = L_{V,\beta}(\rho)$ for all $\rho \in \mathcal M_n^{1,+}(\mathbb C)$;
\item \label{it:U-equals-IW-times-V} There exists a unitary operator $W \in \mathcal U_k$ such that $U = (I \otimes W) V$.
\end{enumerate}
\end{lemma}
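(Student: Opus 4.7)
The plan is to handle the easy direction directly and reduce the nontrivial direction to a commutant computation. First I would dispatch \eqref{it:U-equals-IW-times-V} $\Rightarrow$ \eqref{it:FUbeta-equals-FVbeta} as a one-line consequence of the invariance of the partial trace under conjugation by $I \otimes W$: if $U = (I \otimes W)V$ then
$$L_{U,\beta}(\rho) = \mathrm{Tr}_B\bigl((I \otimes W)\,V(\rho \otimes \beta)V^*\,(I \otimes W^*)\bigr) = \mathrm{Tr}_B\bigl(V(\rho \otimes \beta)V^*\bigr) = L_{V,\beta}(\rho),$$
the middle equality being the identity $\mathrm{Tr}_B \circ \mathrm{Ad}_{I \otimes W} = \mathrm{Tr}_B$, itself a consequence of the cyclicity of $\mathrm{Tr}$ on the $B$ factor.

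For the substantive direction \eqref{it:FUbeta-equals-FVbeta} $\Rightarrow$ \eqref{it:U-equals-IW-times-V}, my first step is to upgrade the hypothesis from states to arbitrary matrices. Density matrices linearly span $\mathcal M_n(\mathbb C)$ and $\mathcal M_k(\mathbb C)$, and elementary tensors $\rho \otimes \beta$ therefore span all of $\mathcal M_n(\mathbb C) \otimes \mathcal M_k(\mathbb C) = \mathcal M_{nk}(\mathbb C)$. Bilinearity of the map $(\rho,\beta) \mapsto L_{U,\beta}(\rho) - L_{V,\beta}(\rho)$ then gives
$$\mathrm{Tr}_B(UMU^*) = \mathrm{Tr}_B(VMV^*) \qquad \text{for every } M \in \mathcal M_{nk}(\mathbb C).$$
Setting $Z := UV^* \in \mathcal U_{nk}$ and substituting $M = V^*NV$ for arbitrary $N$, this is equivalent to
$$\mathrm{Tr}_B(ZNZ^*) = \mathrm{Tr}_B(N) \qquad \text{for every } N \in \mathcal M_{nk}(\mathbb C). \qquad (\star)$$

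The key move is then trace duality. Pairing $(\star)$ against $A \otimes I_k$ for arbitrary $A \in \mathcal M_n(\mathbb C)$ and using cyclicity yields
$$\mathrm{Tr}\bigl(\bigl[Z^*(A \otimes I_k)Z - A \otimes I_k\bigr] N\bigr) = 0 \quad \text{for all } N \in \mathcal M_{nk}(\mathbb C),$$
and non-degeneracy of the Hilbert--Schmidt pairing forces $Z^*(A \otimes I_k)Z = A \otimes I_k$ for every $A \in \mathcal M_n(\mathbb C)$. Equivalently, $Z$ lies in the commutant of the subalgebra $\mathcal M_n(\mathbb C) \otimes I_k \subset \mathcal M_{nk}(\mathbb C)$; a standard expansion in matrix units (or the double commutant theorem) identifies this commutant as $I_n \otimes \mathcal M_k(\mathbb C)$, so $Z = I_n \otimes W$, and unitarity of $Z$ forces $W \in \mathcal U_k$.

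I do not expect any real obstacle: the only mildly delicate points are the bilinear extension from states to all matrices and the commutant identification, both of which are routine. The conceptual content is simply that asking for equality of channels for \emph{all} ancilla states is strong enough to promote Stinespring's pointwise (in $\beta$) freedom to a single global environment unitary.
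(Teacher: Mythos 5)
Your proposal is correct and follows essentially the same route as the paper: both directions are handled by linear extension from states to all of $\mathcal M_{nk}(\mathbb C)$, trace duality against $A \otimes I_k$, and the identification of the commutant of $\mathcal M_n(\mathbb C) \otimes I_k$ as $I_n \otimes \mathcal M_k(\mathbb C)$. The only cosmetic difference is that you package the computation through $Z = UV^*$ and the substitution $M = V^*NV$, whereas the paper carries $U$ and $V$ through in parallel before forming $UV^*$ at the end.
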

\begin{proof}
We only prove ``\ref{it:FUbeta-equals-FVbeta} $\implies$ \ref{it:U-equals-IW-times-V}'', since the converse follows from direct calculation. We start form the hypothesis, 
\begin{equation*}
\forall \rho \in \mathcal M_n^{1,+}(\mathbb C), \, \forall \beta \in \mathcal M_k^{1,+}(\mathbb C), \quad [\mathrm{id} \otimes \mathrm{Tr}](U \rho \otimes \beta U^*) =  [\mathrm{id} \otimes \mathrm{Tr}](V \rho \otimes \beta V^*).
\end{equation*}
By linearity, we can replace $\rho$, resp.~$\beta$, by arbitrary complex matrices $A \in \mathcal M_n(\mathbb C)$, resp.~$B \in \mathcal M_k(\mathbb C)$. Again using linearity, we replace the simple tensor $A \otimes B$ by a general element $X \in \mathcal M_n(\mathbb C) \otimes \mathcal M_k(\mathbb C)$ to obtain the first line below, and we continue by obvious equivalent reformulations:
\begin{align*}
\forall X \in \mathcal M_{nk}(\mathbb C), \quad &[\mathrm{id} \otimes \mathrm{Tr}](U X U^*)  = [\mathrm{id} \otimes \mathrm{Tr}](V X V^*) \\
\forall X \in \mathcal M_{nk}(\mathbb C), \,  \forall A \in \mathcal M_{n}(\mathbb C), \quad & \mathrm{Tr}(U X U^* A \otimes I)  = \mathrm{Tr}(V X V^* A \otimes I) \\
\forall X \in \mathcal M_{nk}(\mathbb C), \,  \forall A \in \mathcal M_{n}(\mathbb C), \quad & \mathrm{Tr}( X U^* A \otimes IU)  = \mathrm{Tr}( X V^* A \otimes IV) \\
\forall A \in \mathcal M_{n}(\mathbb C), \quad &U^* A \otimes I U  = V^* A \otimes I V \\
\forall A \in \mathcal M_{n}(\mathbb C), \quad &A \otimes I (UV^*)  = (UV^*) A \otimes I  \\
\exists W \in \mathcal M_{k}(\mathbb C), \quad &UV^* = I \otimes W
\end{align*}
and we conclude the proof by noticing that $W$ has to be unitary, since $UV^*$ is.
\end{proof}

\section{Bipartite unitary operators producing unitary conjugations and constant channels}
\label{sec:aut-const}

In this section we provide characterizations of the sets $\mathcal U_{aut}$ \eqref{eq:def-U-aut},  $\mathcal U_{single}$ \eqref{eq:def-U-single}, and $\mathcal U_{const}$ \eqref{eq:def-U-const}, showing that only tensor products of unitary operators (resp.~ flipped tensor products) belong to these classes. 

We start with the set of automorphisms, that is the set of unitary conjugation channels. 

\begin{theorem}\label{thm:aut}
Let $U \in \mathcal U_{nk}$ a bipartite unitary operator such that, for all quantum states $\beta \in \mathcal M_k^{1,+}(\mathbb C)$, there exists an unitary operator $V_\beta \in \mathcal U_n$ such that $L_{U, \beta}(\rho) = V_\beta \rho V_\beta^*$ for all $\rho \in \mathcal M_n^{1,+}(\mathbb C)$. Then, there exist unitary operators $V \in \mathcal U_n$ and $W \in \mathcal U_k$ such that $U = V \otimes W$. In other words,
$$\mathcal U_{aut} = \{V \otimes W \, : \, V \in \mathcal U_n, \, W \in  \mathcal U_k\}.$$
\end{theorem}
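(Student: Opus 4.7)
The plan is to reduce the nontrivial inclusion to Lemma~\ref{lem:same-channels} by first showing that the unitary $V_\beta$ appearing in the hypothesis can, in fact, be chosen independent of $\beta$. The reverse inclusion is immediate: if $U = V \otimes W$ then $L_{U,\beta}(\rho) = V\rho V^* \cdot \mathrm{Tr}(W\beta W^*) = V\rho V^*$, using $\mathrm{Tr}(W\beta W^*) = \mathrm{Tr}(\beta) = 1$.

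For the main direction, I will fix two arbitrary quantum states $\beta_1, \beta_2 \in \mathcal{M}_k^{1,+}(\mathbb{C})$ and consider their convex combination $\beta_0 := \tfrac{1}{2}(\beta_1+\beta_2)$, which is again a density matrix. Linearity of the Stinespring formula in the environment state produces unitaries $V_{\beta_0}, V_{\beta_1}, V_{\beta_2} \in \mathcal{U}_n$ with
\begin{equation*}
V_{\beta_0}\rho V_{\beta_0}^* \;=\; L_{U,\beta_0}(\rho) \;=\; \tfrac{1}{2}\bigl( V_{\beta_1}\rho V_{\beta_1}^* + V_{\beta_2}\rho V_{\beta_2}^*\bigr)
\end{equation*}
for every $\rho \in \mathcal{M}_n^{1,+}(\mathbb{C})$.

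The crux is to evaluate this identity at rank-one inputs $\rho = |\psi\rangle\langle\psi|$. The left-hand side is then a rank-one projector, while the right-hand side is a convex combination of two rank-one positive semidefinite operators of unit trace; such a combination can have rank one only when the two summands coincide (if $P = \tfrac{1}{2}(P_1+P_2)$ with $P$ a rank-one projector and $P_i \geq 0$, then $P_i \leq 2P$ forces $\mathrm{range}(P_i) = \mathrm{range}(P)$ and then unit trace forces $P_i = P$). Hence $V_{\beta_1}|\psi\rangle$ and $V_{\beta_2}|\psi\rangle$ are proportional for every unit vector $|\psi\rangle$, so $V_{\beta_1}^* V_{\beta_2}$ sends every line to itself and must be a scalar multiple of the identity; since both unitaries have the same modulus, this scalar has absolute value one and the conjugations $\rho \mapsto V_{\beta_i}\rho V_{\beta_i}^*$ are identical. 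Thus the channel $L_{U,\beta}$ does not depend on $\beta$ at all, and I can write it as $\rho \mapsto V_0 \rho V_0^*$ for some fixed $V_0 \in \mathcal{U}_n$.

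To conclude, I compare $U$ with the bipartite unitary $V_0 \otimes I_k$, whose induced channel is $L_{V_0 \otimes I_k,\beta}(\rho) = V_0\rho V_0^* \cdot \mathrm{Tr}(\beta) = V_0\rho V_0^*$ for every $\beta$. Since $L_{U,\beta}$ and $L_{V_0 \otimes I_k,\beta}$ agree for every $\beta$, Lemma~\ref{lem:same-channels} supplies a unitary $W \in \mathcal{U}_k$ with $U = (I_n \otimes W)(V_0 \otimes I_k) = V_0 \otimes W$, as desired. The only real obstacle in this approach is the a priori dependence of $V_\beta$ on $\beta$, and the rank-one-projector trick dispatches it in one stroke; once $V_\beta$ is known to be constant up to a phase (which is absorbed into $V_0$), Lemma~\ref{lem:same-channels} carries the proof across the finish line without any additional work.
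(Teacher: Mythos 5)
Your proposal is correct and follows essentially the same two-step strategy as the paper: first show that $V_\beta$ is independent of $\beta$ up to a phase by exploiting linearity of $\beta \mapsto L_{U,\beta}$ together with a rank-one decomposition argument, then invoke Lemma~\ref{lem:same-channels} against $V_0 \otimes I_k$ to produce $W$. The only cosmetic difference is that the paper runs the rank-one argument on the vectorized (Choi) side, $2v_{\tilde\beta}v_{\tilde\beta}^* = v_\beta v_\beta^* + vv^*$, whereas you evaluate the channel identity on pure inputs and deduce that $V_{\beta_1}^*V_{\beta_2}$ preserves every line; both routes yield the same conclusion.
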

\begin{proof}
The proof is easy, and consists of two steps: we show first that the unitary operators $V_\beta$ can be chosen to not depend on $\beta$, and then, using Lemma \ref{lem:same-channels}, we show that the unitary $U$ has the required form. 

Let us first introduce some notation. To any matrix 
$$\mathcal M_n(\mathbb C) \ni A = \sum_{i,j=1}^n A_{ij} e_ie_j^*,$$
 associate its \emph{vectorization} $a = \mathrm{vect}(A)$ defined by
 $$\mathbb C^n \otimes \mathbb C^n \ni a = \sum_{i,j=1}^n A_{ij} e_i \otimes e_j,$$
 where $\{e_i\}_{i=1}^n$ is some fixed orthonormal basis of $\mathbb C^n$. 

Denote $V:=V_{I/k}$, the unitary which appears for $\beta = I/k$, and let $v = \mathrm{vect}(V)$. For an arbitrary $\beta \in \mathcal M_k^{1,+}(\mathbb C)$, let $\tilde \beta = (\beta + I/k)/2$. Since quantum channels are linear in $\beta$, we have that
$$\forall \rho \in \mathcal M_n^{1,+}(\mathbb C), \qquad 2V_\beta \rho V_\beta^* = V_{\tilde \beta} \rho V_{\tilde \beta}^* + V \rho V^*.$$
At the level of Choi matrices, the above equation reads 
$$2 v_\beta v_\beta^* = v_{\tilde \beta} v_{\tilde \beta}^* + v v^*.$$
On the left side of the equation above the operator is a rank one operator. This way, in order that the sum in the right side is a rank one operator, the vector has to be proportional. Since all the involved vectors are of norm $\sqrt n$, it follows that they should all be the same, up to a phase. The same holds for the unitary operators, which concludes the first step of the proof. 

Let $\tilde U := V \otimes I$. It is easy to see that
$$\forall \rho \in \mathcal M_n^{1,+}(\mathbb C), \, \forall \beta \in \mathcal M_k^{1,+}(\mathbb C), \qquad L_{\tilde U, \beta}(\rho) = V \rho V^* = L_{U, \beta}(\rho),$$
so, by Lemma \ref{lem:same-channels}, there exists an unitary operator $W \in \mathcal U_k$ such that $U = (I \otimes W) \tilde U$, and the proof is complete. 
\end{proof}

We show next that the class $\mathcal U_{single}$, i.e.~ the set of unitary operators $U \in \mathcal U_{nk}$ with the property that the map $\beta \mapsto L_{U,\beta}$ is constant, is actually identical to $\mathcal U_{aut}$. 

For the proof of Theorem \ref{thm:single}, we need the following lemma.

\begin{lemma}\label{lem:orthogonal-zero-trace}
Let $M \in \mathcal M_n(\mathbb C) \otimes \mathcal M_k(\mathbb C)$ be a given matrix. The following conditions are equivalent:
\begin{enumerate}
\item For all matrices $X \in \mathcal M_n(\mathbb C)$ and $Y \in \mathcal M_k(\mathbb C)$ such that $\mathrm{Tr} Y = 0$, we have $\mathrm{Tr}(M \cdot X \otimes Y) = 0$;
\item There exist a matrix $A \in \mathcal M_n(\mathbb C)$ such that $M = A \otimes I_k$.
\end{enumerate}
\end{lemma}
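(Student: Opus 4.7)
The direction \ref{it:U-equals-IW-times-V}-ish, namely (2) $\implies$ (1), is immediate: if $M = A \otimes I_k$, then $\mathrm{Tr}(M \cdot X \otimes Y) = \mathrm{Tr}(AX)\,\mathrm{Tr}(Y)$, which vanishes whenever $\mathrm{Tr}(Y) = 0$. So the content of the lemma lies entirely in the reverse implication.

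My plan for (1) $\implies$ (2) is to expand $M$ with respect to the standard matrix unit basis on the $B$-factor and extract the coefficients by testing against well-chosen traceless $Y$'s. Concretely, write
\begin{equation*}
M = \sum_{i,j=1}^k M_{ij} \otimes E_{ij}, \qquad M_{ij} \in \mathcal M_n(\mathbb C),
\end{equation*}
where $E_{ij} = e_i e_j^*$ are the standard matrix units of $\mathcal M_k(\mathbb C)$.

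For $i \neq j$, the matrix $Y = E_{ij}$ is traceless, and a direct computation using $\mathrm{Tr}(E_{ab}E_{ij}) = \delta_{bi}\delta_{aj}$ gives
\begin{equation*}
\mathrm{Tr}(M \cdot X \otimes E_{ij}) = \sum_{a,b} \mathrm{Tr}(M_{ab}X)\,\mathrm{Tr}(E_{ab}E_{ij}) = \mathrm{Tr}(M_{ji} X).
\end{equation*}
Since this must vanish for every $X \in \mathcal M_n(\mathbb C)$, non-degeneracy of the Hilbert--Schmidt trace pairing forces $M_{ji} = 0$ for all $i \neq j$. Thus the off-diagonal blocks of $M$ (indexed by the $B$-system) are zero.

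Next, for any $i \neq j$, the matrix $Y = E_{ii} - E_{jj}$ is again traceless, and the same kind of computation yields $\mathrm{Tr}((M_{ii} - M_{jj})X) = 0$ for all $X$, hence $M_{ii} = M_{jj}$. Setting $A := M_{11} = \cdots = M_{kk}$, we obtain $M = \sum_{i} A \otimes E_{ii} = A \otimes I_k$, as required. There is no serious obstacle here; the whole argument is essentially just writing $\mathcal M_k(\mathbb C) = \mathbb C \cdot I_k \oplus \{Y : \mathrm{Tr} Y = 0\}$ and noting that the hypothesis says $M$ pairs trivially with the second summand on the $B$-side.
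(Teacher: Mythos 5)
Your proof is correct and is essentially the paper's argument made explicit: the paper disposes of the nontrivial implication in one line by noting that $[\mathcal M_n(\mathbb C) \otimes (\mathcal M_k(\mathbb C) \ominus \mathbb C I_k)]^\perp = \mathcal M_n(\mathbb C) \otimes \mathbb C I_k$ with respect to the trace pairing, which is exactly the decomposition you invoke in your closing remark. Your matrix-unit computation (testing against $E_{ij}$ for $i\neq j$ and against $E_{ii}-E_{jj}$) simply verifies that orthogonality statement by hand, so there is nothing to add.
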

\begin{proof}
The non-trivial implication follows from the following equation
$$[\mathcal M_n(\mathbb C) \otimes (\mathcal M_k(\mathbb C) \ominus \mathbb C I_k)]^\perp = \mathcal M_n(\mathbb C) \otimes \mathbb C I_k.$$
\end{proof}

\begin{theorem}\label{thm:single}
Let $U \in \mathcal U_{nk}$ a bipartite unitary operator such that, for all quantum states $\rho \in \mathcal M_n^{1,+}(\mathbb C)$ and $\beta,\gamma \in \mathcal M_k^{1,+}(\mathbb C)$, we have that $L_{U, \beta}(\rho) = L_{U, \gamma}(\rho)$. Then, there exist unitary operators $V \in \mathcal U_n$ and $W \in \mathcal U_k$ such that $U = V \otimes W$. In other words, the set $\mathcal U_{single}$ defined in \eqref{eq:def-U-single} is equal to
$$\mathcal U_{single} = \{V \otimes W \, : \, V \in \mathcal U_n, \, W \in  \mathcal U_k\} = \mathcal U_{aut}.$$
\end{theorem}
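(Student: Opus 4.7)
The plan is to leverage the $\beta$-independence together with Lemma \ref{lem:orthogonal-zero-trace} in order to show that conjugation by $U^{*}$ preserves the subalgebra $\mathcal M_n(\mathbb C) \otimes I_k$. First, I would observe that since $L_{U,\beta}(\rho) - L_{U,\gamma}(\rho) = 0$ for any two states $\beta,\gamma$, and since any traceless matrix $Y \in \mathcal M_k(\mathbb C)$ can be written as a complex linear combination of differences of density matrices, one obtains
\[
\mathrm{Tr}_B\bigl( U(\rho \otimes Y) U^{*} \bigr) = 0
\qquad \text{for all } \rho \in \mathcal M_n(\mathbb C), \, Y \in \mathcal M_k(\mathbb C) \text{ with } \mathrm{Tr}(Y) = 0.
\]
Pairing this against an arbitrary $X \in \mathcal M_n(\mathbb C)$ and using cyclicity of the trace, the vanishing becomes
\[
\mathrm{Tr}\bigl( U^{*}(X \otimes I_k) U \cdot (\rho \otimes Y) \bigr) = 0 \qquad \forall \, \rho, X \in \mathcal M_n(\mathbb C), \, Y \in \mathcal M_k(\mathbb C) \text{ traceless}.
\]
Fixing $X$ and applying Lemma \ref{lem:orthogonal-zero-trace} to the matrix $M_X := U^{*}(X \otimes I_k) U$ then yields an element $A_X \in \mathcal M_n(\mathbb C)$ such that $U^{*}(X \otimes I_k) U = A_X \otimes I_k$.

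Next, I would note that $\Phi : X \mapsto A_X$ is a unital $*$-automorphism of $\mathcal M_n(\mathbb C)$: linearity is immediate, multiplicativity follows by inserting $UU^{*}=I$, unitality is obvious, and the $*$-property is inherited from $U$ being unitary. Since every unital $*$-automorphism of a matrix algebra is inner (Skolem--Noether), there exists $V \in \mathcal U_n$ such that $A_X = V^{*} X V$ for all $X$. Substituting back gives
\[
U^{*}(X \otimes I_k) U = (V^{*} \otimes I_k)(X \otimes I_k)(V \otimes I_k),
\]
which says that the unitary $(V \otimes I_k) U^{*}$ commutes with every element of $\mathcal M_n(\mathbb C) \otimes I_k$. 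Its commutant in $\mathcal M_{nk}(\mathbb C)$ being $I_n \otimes \mathcal M_k(\mathbb C)$, there exists $W \in \mathcal U_k$ with $(V \otimes I_k) U^{*} = I_n \otimes W^{*}$, i.e.\ $U = V \otimes W$. Combined with the trivial reverse inclusion (any $V \otimes W$ produces the $\beta$-independent channel $\rho \mapsto V \rho V^{*}$, so $\mathcal U_{aut} \subseteq \mathcal U_{single}$ by Theorem \ref{thm:aut}), this establishes $\mathcal U_{single} = \mathcal U_{aut}$.

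I do not expect a substantial obstacle: Lemma \ref{lem:orthogonal-zero-trace} has been designed precisely to convert the $\beta$-independence into the algebraic statement that conjugation by $U$ stabilises $\mathcal M_n(\mathbb C) \otimes I_k$, after which Skolem--Noether plus the standard commutant computation $(\mathcal M_n \otimes I_k)' = I_n \otimes \mathcal M_k$ do all the remaining work. The only point requiring a little care is the passage from states to arbitrary traceless matrices in the opening step, but this is routine once one writes a traceless Hermitian matrix as a rescaled difference of two density matrices and then invokes complex linearity.
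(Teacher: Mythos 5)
Your proof is correct, and after the shared opening move it follows a genuinely different route from the paper's. Both arguments begin by linearizing the hypothesis to the statement that $\mathrm{Tr}\bigl(U^*(X\otimes I_k)U\cdot(\rho\otimes Y)\bigr)=0$ for all $\rho,X\in\mathcal M_n(\mathbb C)$ and traceless $Y\in\mathcal M_k(\mathbb C)$, and both invoke Lemma \ref{lem:orthogonal-zero-trace} — but on different objects. The paper reshapes $U$ into a $k\times n^2k$ matrix $\hat U$, packages all the constraints into the single equation $\hat U^*\hat U=A\otimes I_k$ with $A\in\mathcal M_{n^2}(\mathbb C)$, and uses the rank bound $\mathrm{rank}(\hat U^*\hat U)\le k$ to force $A$ to be rank one, from which $U=V\otimes W$ is read off directly. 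You instead fix the test observable $X$ and apply the lemma to $M_X=U^*(X\otimes I_k)U$, concluding that conjugation by $U$ stabilises $\mathcal M_n(\mathbb C)\otimes\mathbb C I_k$; the induced map $X\mapsto A_X$ is a unital $*$-endomorphism of $\mathcal M_n(\mathbb C)$, hence injective by simplicity and bijective by dimension count, so Skolem--Noether yields $A_X=V^*XV$ and the commutant identity $(\mathcal M_n(\mathbb C)\otimes I_k)'=I_n\otimes\mathcal M_k(\mathbb C)$ finishes the job. All the steps check out (in particular $A_X$ is well defined since $A\mapsto A\otimes I_k$ is injective, and your passage from differences of density matrices to arbitrary traceless matrices is handled correctly). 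Your version is more conceptual and avoids the reshaping and the graphical calculus entirely; the paper's reshaping argument has the compensating advantage that it transfers almost verbatim to the constant-channel case of Theorem \ref{thm:const}, where the analogous matrix $A$ has rank $k/n$ rather than one and no automorphism of $\mathcal M_n(\mathbb C)$ is available to exploit.
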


\begin{proof}
By using linearity, the hypothesis translates to the following equality
$$\forall X,Y \in \mathcal M_n(\mathbb C),\, \forall Z \in \mathcal M_k(\mathbb C) \text{ s.t. } \mathrm{Tr} Z = 0, \quad \mathrm{Tr}[U(X \otimes Z) U^* (Y \otimes I_k)] = 0.$$
By reshaping the operator $U$, the previous equality can be re-written as 
\begin{equation}\label{eq:hat-U}
\mathrm{Tr}[Y^\top \otimes X \otimes Z \cdot \hat U^* \hat U] = 0.
\end{equation}
where $\hat U \in \mathcal M_{k \times n^2k}(\mathbb C)$ is the operator
$$\hat U = \sum_{i,j=1}^n \sum_{x,y = 1}^k \langle e_i \otimes f_x, U e_j \otimes f_y \rangle f_x\cdot (e_i^* \otimes e_j^* \otimes f_y^*),$$
see Figure \ref{fig:hat-U} for a graphical representation of $\hat U$ and of the previous equality (for the Penrose graphical notation for tensors, in the form used here, see \cite[Section 3]{cne10a}). From Lemma \ref{lem:orthogonal-zero-trace}, it follows that there exist an operator $A \in \mathcal M_{n^2}(\mathbb C)$ such that $\hat U^* \hat U = A \otimes I_k$. Since the rank of $\hat U^* \hat U$ is at most $k$, it follows that $A$ has rank at most 1, i.e.~ there exist a vector $a \in \mathbb C^n \otimes \mathbb C^n$ such that $A = aa^*$. From the equality $\hat U^* \hat U = aa^* \otimes I_k$, we deduce that there exists an operator $W \in \mathcal U_k$ such that $\hat U = a^* \otimes W$. Writing
$$\mathcal M_n(\mathbb C) \ni V = \sum_{i,j=1}^n \langle e_i \otimes e_j,a\rangle e_ie_j^*,$$
we get $U = V \otimes W$. The fact that $V \in \mathcal U_n$ follows from the unitarity of $U$, and this concludes the proof of the first implication. 
The fact that tensor product unitary operators belong to $\mathcal U_{single}$ can be verified by direct computation.
\end{proof}
\begin{figure}[htbp] 
\includegraphics{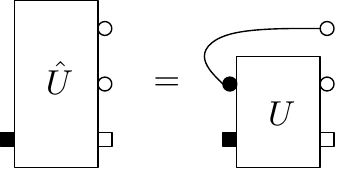} \qquad \includegraphics{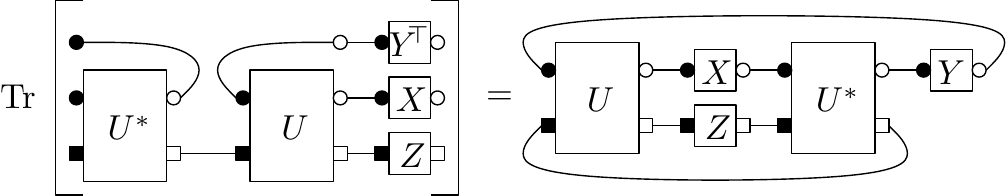}
\caption{The reshaping $\hat U$ of a unitary operator $U$ (left) and the trace inequality (right).} 
\label{fig:hat-U}
\end{figure}

For the case of $\mathcal U_{const}$, it is easy to see that it is related to the previous case via a flip operation, although there is a slight technical complication. This question has also been considered in \cite{kmwy2}.

\begin{theorem}\label{thm:const}
Consider the set $\mathcal U_{const}$ of bipartite unitary operators such that, for all quantum states $\rho,\sigma \in \mathcal M_n^{1,+}(\mathbb C)$ and $\beta \in \mathcal M_k^{1,+}(\mathbb C)$, we have that $L_{U, \beta}(\rho) = L_{U, \beta}(\sigma)$. 

If $k \neq rn$ for $r = 1,2, \ldots$, then $\mathcal U_{const}$ is empty.

 If $k=rn$ for some positive $r$, then we have
\begin{equation}\label{eq:U-const}
\mathcal U_{const} = \{(I_n \otimes V)(F_n \otimes I_r)(I_n \otimes W) \, : \,  V,W \in \mathcal U_k\},
\end{equation}
where $F_n \in \mathcal U_{n^2}$ denotes the flip operator \eqref{eq:flip}; see Figure \ref{fig:U-const} for the diagrammatic representation of such an operator.
\end{theorem}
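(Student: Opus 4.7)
The plan is to reduce the constancy condition to an intertwining relation of the form $U^*(A\otimes I_k)U = I_n\otimes B_A$, recognize $A\mapsto B_A$ as a unital $*$-homomorphism of matrix algebras, invoke the classification of such maps to force $n\mid k$, and finally rearrange using the flip $F_n$ to extract the stated decomposition. Throughout, the reverse inclusion will reduce to a routine partial-trace computation.

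First I would linearize. Since $L_{U,\beta}$ is linear in $\rho$, constancy on states is equivalent to $L_{U,\beta}$ vanishing on every traceless matrix. Combined with linearity in $\beta$, the hypothesis becomes
$$\mathrm{Tr}_B\bigl(U(X\otimes Y)U^*\bigr) = 0, \qquad \forall\, X\in\mathcal M_n(\mathbb C) \text{ with } \mathrm{Tr} X = 0,\; \forall\, Y\in\mathcal M_k(\mathbb C).$$
Taking the Hilbert--Schmidt pairing against $A\otimes I_k$ for arbitrary $A\in\mathcal M_n(\mathbb C)$ and applying the analog of Lemma \ref{lem:orthogonal-zero-trace} with tensor factors swapped (i.e.\ trace-zero condition on the first factor), I would conclude that $U^*(A\otimes I_k)U\in I_n\otimes\mathcal M_k(\mathbb C)$, defining a linear map $\phi:A\mapsto B_A$ via
$$U^*(A\otimes I_k)U = I_n\otimes B_A.$$

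Next, a short algebraic check shows that $\phi$ is a unital $*$-homomorphism $\mathcal M_n(\mathbb C)\to\mathcal M_k(\mathbb C)$: unitality comes from $A=I_n$ together with the unitarity of $U$, self-adjointness from conjugating the defining relation, and multiplicativity from rewriting $(A_1 A_2\otimes I_k)U = (A_1\otimes I_k)U(I_n\otimes B_{A_2}) = U(I_n\otimes B_{A_1}B_{A_2})$. Since $\mathcal M_n(\mathbb C)$ is simple, such a homomorphism is automatically injective, and the classification of representations of full matrix algebras forces $n\mid k$. This already yields the first half of the theorem: if $k$ is not a positive multiple of $n$, the set $\mathcal U_{const}$ is empty. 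When $k=rn$, the same classification furnishes a unitary $W\in\mathcal U_k$ for which $B_A = W^*(A\otimes I_r)W$ under the identification $\mathbb C^k\cong\mathbb C^n\otimes\mathbb C^r$.

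For the decomposition, I would set $U_1 := U(I_n\otimes W^*)$ and rewrite the intertwining as $(A\otimes I_k)U_1 = U_1(I_n\otimes A\otimes I_r)$, valid for every $A\in\mathcal M_n(\mathbb C)$. The role of the flip is to relocate the middle factor $A$ into the leftmost slot: using $(F_n\otimes I_r)(I_n\otimes A\otimes I_r)(F_n\otimes I_r) = A\otimes I_k$, the operator $U_2 := U_1(F_n\otimes I_r)$ commutes with every $A\otimes I_k$, hence belongs to $(\mathcal M_n(\mathbb C)\otimes I_k)' = I_n\otimes\mathcal M_k(\mathbb C)$ by the standard commutant identity. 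Writing $U_2 = I_n\otimes V$ (necessarily $V\in\mathcal U_k$) gives $U=(I_n\otimes V)(F_n\otimes I_r)(I_n\otimes W)$ as claimed. The opposite inclusion is verified by checking that $L_{F_n\otimes I_r,\beta}(\rho) = \mathrm{Tr}_r(\beta)$ is independent of $\rho$, a property preserved under pre- and post-composition with the local unitaries $I_n\otimes W$ and $I_n\otimes V$. I expect the main obstacle to be the bookkeeping around the flip: the ``technical complication'' the authors allude to for $k\neq n$ is precisely the need to split $\mathbb C^k\cong\mathbb C^n\otimes\mathbb C^r$ in a manner compatible with the representation classification before $F_n\otimes I_r$ can perform the desired rearrangement.
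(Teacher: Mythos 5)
Your argument is correct, and it reaches the conclusion by a genuinely different route from the paper. Both proofs begin with the same linearization (constancy of $L_{U,\beta}$ is equivalent to $\mathrm{Tr}_B\bigl(U(X\otimes Y)U^*\bigr)=0$ for traceless $X$ and arbitrary $Y$) and both invoke the orthogonality principle of Lemma \ref{lem:orthogonal-zero-trace}. But where you apply that lemma directly to the family $U^*(A\otimes I_k)U$ and then exploit the resulting map $A\mapsto B_A$ as a unital $*$-homomorphism $\mathcal M_n(\mathbb C)\to\mathcal M_k(\mathbb C)$ --- deriving $n\mid k$ from the classification of representations of a full matrix algebra, and extracting $F_n\otimes I_r$ by turning the intertwining relation $(A\otimes I_k)U_1=U_1(I_n\otimes A\otimes I_r)$ into a commutation relation and invoking the commutant identity $(\mathcal M_n\otimes I_k)'=I_n\otimes\mathcal M_k$ --- the paper instead reshapes $U$ into a $k\times n^2k$ matrix $\hat U$, deduces $\hat U^*\hat U=I_n^{(2)}\otimes A^{(1,3)}$, obtains the divisibility from the rank identity $k=n\cdot\mathrm{rank}(A)$, and builds the decomposition by hand from a factorization $A=B^*B$ and a verification that the reshaped $\hat B$ is unitary. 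Your representation-theoretic version is more conceptual: the obstruction $n\mid k$ and the unitary $W$ both fall out of one standard classification theorem, and no rank bookkeeping or reshaping diagrams are needed. The paper's version is more self-contained and explicit (it constructs $V$ and $W$ concretely from $\hat U$ and meshes with the graphical calculus used throughout), at the cost of more matrix manipulation. Your closing computation $L_{F_n\otimes I_r,\beta}(\rho)=[\mathrm{id}_n\otimes\mathrm{Tr}_r](\beta)$ for the reverse inclusion matches the paper's exactly.
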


\begin{figure}[htbp] 
\includegraphics{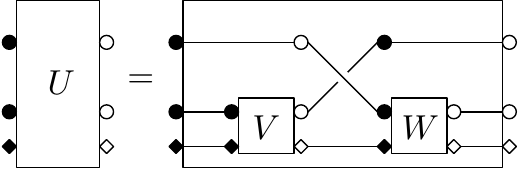} 
\caption{The diagrammatic representation of a general element from $\mathcal U_{const}$, in the case where $k=nr$. The diamond shaped labels correspond to the vector space $\mathbb C^r$.} 
\label{fig:U-const}
\end{figure}

\begin{proof}
Let us start with the easy implication, considering an operator $U$ as in \eqref{eq:U-const}. By direct computation, one can see that
$$L_{U,\beta}(\rho) = [\mathrm{id}_n \otimes \mathrm{Tr}_r](\beta),$$
proving that, for arbitrary $\beta$, the channel $L_{U,\beta}$ is constant.

The proof of the difficult implication starts in the same way as the one of Theorem \ref{thm:single}. The hypothesis that the channels $L_{U,\beta}$ are constant translates to the following condition: 
$$\forall X,Y \in \mathcal M_n(\mathbb C),\, \forall Z \in \mathcal M_k(\mathbb C) \text{ s.t. } \mathrm{Tr} X = 0, \quad \mathrm{Tr}[U(X \otimes Z) U^* (Y \otimes I_k)] = 0.$$
As before, after reshaping the matrix $U$ into $\hat U$, the previous relation becomes \eqref{eq:hat-U}, with the difference that this time, we have $\mathrm{Tr} X = 0$. Using Lemma \ref{lem:orthogonal-zero-trace}, we conclude that there exists an operator $A \in \mathcal M_{nk}(\mathbb C)$ such that 
\begin{equation}\label{eq:hat-U-star-hat-U}
\hat U^* \hat U = I_n^{(2)} \otimes A^{(1,3)},
\end{equation}
where the superscripts indicate on which factor of the tensor products the operators are acting ($A$ acts on the first copy of $\mathbb C^n$ and on $\mathbb C^k$, while the identity operator acts on the second copy of $\mathbb C^n$). Using the fact that $U$ is unitary, we have that the rank of the matrix $\hat U^* \hat U$ is \emph{precisely} $k$, so we must have $k = n\times \mathrm{rank}( A)$. We conclude that if $k$ is not a multiple of $n$, then operators $U$ with the above property cannot exist. We put now $r:=\mathrm{rank}(A)$, so that $k=rn$. Since $\hat U^* \hat U$ is a positive semidefinite matrix, $A$ is a positive semidefinite matrix of rank $r$, so we write $A = B^*B$, for a matrix $B \in \mathcal M_{r \times nk}(\mathbb C) = \mathcal M_{r \times n^2r}(\mathbb C)$. The fact that $U$ is unitary translates to the following equality
\begin{equation}\label{eq:hat-B}
\hat B \hat B^* = I_{nr},
\end{equation}
where 
$$\mathcal M_{nr}(\mathbb C) \ni \hat B = \sum_{i,j=1}^n \sum_{x,y=1}^r \langle g_x, B e_i \otimes e_j \otimes g_y \rangle e_i \otimes g_x \cdot e_j^* \otimes g_y^*,$$
for some orthonormal basis $\{g_1, \ldots g_r\}$ of $\mathbb C^r$; see Figure \ref{fig:hat-B} for a graphical representation of the previous equalities. Thus $W:=\hat B$ is a unitary operator. From equation \eqref{eq:hat-U-star-hat-U}, we find that there is another unitary operaor $V \in \mathcal U_{nr}$ such that 
$$U^{(1,2,3)} = \left(I_n^{(1)} \otimes V^{(2,3)}\right) \cdot \left(F_n^{(1,2)} \otimes I_r^{(3)}\right) \cdot \left(I_n^{(1)} \otimes W^{(2,3)}\right),$$
where the three tensor legs correspond to $\mathbb C^n \otimes \mathbb C^k \cong \mathbb C^n \otimes \mathbb C^n \otimes \mathbb C^r$. The conclusion follows. 
\begin{figure}[htbp] 
\includegraphics{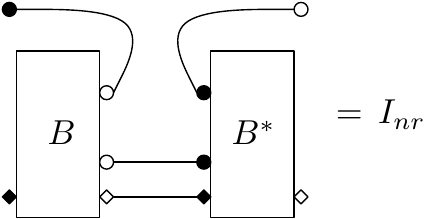} 
\caption{A diagramatic representation of \eqref{eq:hat-B}, from which the unitarity of $\hat B$ follows.} 
\label{fig:hat-B}
\end{figure}
\end{proof}

\begin{remark}
If $n=k$ in the previous result, equation \eqref{eq:U-const} can be written as $U = F_n (V \otimes W)$ for a pair of unitary operators $V,W \in \mathcal U_n$, so in this case we have
$$\mathcal U_{const} = F_n \mathcal U_{single} = \mathcal U_{single}F_n.$$
\end{remark}

\section{Bipartite unitary operators producing unital channels}

In this section we study the set $\mathcal U^{A}_{unital}$ of bipartite unitary operations which yield unital channels for every choice of the state on the auxiliary space. 

Using linearity, one can extend the definition \eqref{eq:def-U-unital} to the whole space of $k \times k$ complex matrices:
$$\mathcal U_{unital} =  \{U \in \mathcal U \, | \, \forall B \in \mathcal M_k(\mathbb C), L_{U,B}(I) =(\mathrm{Tr} B) I\}.$$

\begin{theorem}\label{unital}
One has 
$$\mathcal U_{unital} = \mathcal U_{nk} \cap \mathcal U_{nk}^\Gamma.$$
In particular, $\mathcal U^{B}_{unital} = \mathcal U^{A}_{unital}$.
\end{theorem}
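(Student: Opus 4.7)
The plan is to reduce the defining condition of $\mathcal U_{unital}$ and the condition ``$U^{\Gamma}$ is unitary'' to the same explicit system of block equations and then compare. Since the map $B \mapsto \mathrm{Tr}_B(U(I_n \otimes B) U^*)$ is linear, the extension of the unital condition given just above the statement is at our disposal: $U \in \mathcal U_{unital}$ iff
\[
\mathrm{Tr}_B\bigl(U(I_n \otimes B) U^*\bigr) = (\mathrm{Tr} B)\, I_n \qquad \text{for every } B \in \mathcal M_k(\mathbb C).
\]
Fix an orthonormal basis $\{f_x\}_{x=1}^k$ of $\mathbb C^k$ and expand $U$ in block form
\[
U = \sum_{x,y=1}^k A_{xy} \otimes f_x f_y^*, \qquad A_{xy} \in \mathcal M_n(\mathbb C).
\]

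Multiplying out, using $f_x f_y^* \cdot f_p f_q^* = \delta_{yp} f_x f_q^*$ and taking the trace on the second tensor factor, yields
\[
\mathrm{Tr}_B\bigl(U(I_n \otimes B) U^*\bigr) = \sum_{y,w=1}^k B_{yw} \sum_{x=1}^k A_{xy} A_{xw}^*.
\]
Equating this with $(\mathrm{Tr} B)\, I_n = \sum_y B_{yy}\, I_n$ for every matrix $B = (B_{yw})$ is equivalent to the system
\[
\sum_{x=1}^k A_{xy} A_{xw}^* = \delta_{yw}\, I_n \qquad \text{for all } y,w \in \{1,\ldots,k\}.
\]

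Now partial transposition gives $U^{\Gamma} = \sum_{x,y} A_{xy} \otimes f_y f_x^*$, hence $(U^{\Gamma})^* = \sum_{x,y} A_{xy}^* \otimes f_x f_y^*$, and the analogous block computation produces
\[
U^{\Gamma} (U^{\Gamma})^* = \sum_{y,w} \Bigl( \sum_{x} A_{xy} A_{xw}^* \Bigr) \otimes f_y f_w^*,
\]
which equals $I_{nk}$ precisely under the same system of equations derived above. Since $U^{\Gamma}$ is square, $U^{\Gamma}(U^{\Gamma})^* = I_{nk}$ is equivalent to $U^{\Gamma} \in \mathcal U_{nk}$, i.e., $U \in \mathcal U_{nk}^{\Gamma}$. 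This gives the equality $\mathcal U_{unital} = \mathcal U_{nk} \cap \mathcal U_{nk}^{\Gamma}$. Finally, for $\mathcal U^{B}_{unital} = \mathcal U^{A}_{unital}$, the identical argument with the roles of $A$ and $B$ swapped characterises $\mathcal U^{B}_{unital}$ as the set of unitary $U$ for which the partial transpose on the $A$ factor $U^{\Gamma_A}$ is unitary; but $U^{\Gamma_A} = (U^{\Gamma})^{\top}$ and the full transpose preserves unitarity, so this is the same condition. No serious obstacle is anticipated: the whole argument is a direct bilinear expansion in the blocks $A_{xy}$, the only mildly clever point being the extension from density matrices to all of $\mathcal M_k(\mathbb C)$ by linearity, which turns the defining condition into a clean finite system of matrix identities that matches the unitarity of $U^{\Gamma}$ on the nose.
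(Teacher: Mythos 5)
Your proposal is correct and follows essentially the same route as the paper: both reduce the unital condition, extended by linearity to all $B \in \mathcal M_k(\mathbb C)$, to the single matrix identity $U^\Gamma (U^\Gamma)^* = I_{nk}$, the paper by introducing $W = U^\Gamma(U^\Gamma)^*$ and testing against $B = e_ie_j^*$, you by writing out the equivalent block system $\sum_x A_{xy}A_{xw}^* = \delta_{yw} I_n$ explicitly. Your explicit treatment of the ``in particular'' claim via $U^{\Gamma_A} = (U^{\Gamma})^\top$ is a welcome addition, as the paper leaves that step implicit.
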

\begin{proof}
Let us first show the ``$\subseteq$'' inclusion in the above equality. Take $U \in \mathcal U_{unital}$ and put $V = U^\Gamma \in \mathcal M_{nk}(\mathbb C)$, and $W=VV^*$. One has, for any $B \in \mathcal M_k(\mathbb C)$,
\begin{align}
\label{eq:UGamma-first}(\mathrm{Tr} B)I_n = L_{U,B}(I_n) &= [\mathrm{id} \otimes \mathrm{Tr}] (U \cdot I\otimes B \cdot U^*) \\
\label{eq:UGamma-last}&=  [\mathrm{id} \otimes \mathrm{Tr}] (VV^* \cdot I \otimes B^\top)\\
\nonumber&=  [\mathrm{id} \otimes \mathrm{Tr}] (W \cdot I \otimes B^\top)
\end{align}
By block-decomposing $W$ in an arbitrary orthonormal basis $e_i$ of $\mathbb C^k$
$$W = \sum_{i,j=1}^k W_{ij} \otimes e_ie_j^*,$$
we get that for all $B \in \mathcal M_k(\mathbb C)$,
$$(\mathrm{Tr} B)I_n = \sum_{i,j=1}^k W_{ij} \cdot \langle e_j, Be_i \rangle.$$
Choosing $B = e_ie_j^*$, we get $W_{ij}=\delta_{ij}I_n$ and hence $W=I_{nk}$. In other words, $V=U^\Gamma \in \mathcal U_{nk}$, which finishes the proof of the first inclusion.

The second inclusion follows by working backwards the previous arguments: since $V=U^\Gamma \in \mathcal U_{nk}$, equations \eqref{eq:UGamma-last} and \eqref{eq:UGamma-first} hold.
\end{proof}

Since both sets $\mathcal U_{nk}$ and $\mathcal U_{nk}^\Gamma$ are algebraic varieties (i.e.~ they can be describes as the zero-set of a system of polynomial equations), we obtain the following corollary.
\begin{corollary}
The set $\mathcal U_{unital}$ is a real algebraic variety.
\end{corollary}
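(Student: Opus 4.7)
The plan is to deduce the corollary immediately from the preceding theorem, which identifies $\mathcal U_{unital}$ with $\mathcal U_{nk} \cap \mathcal U_{nk}^\Gamma$. The only thing that needs to be verified is that each factor in this intersection is itself a real algebraic variety, and that intersections of real algebraic varieties are again real algebraic varieties.

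First I would check that $\mathcal U_{nk}$ is a real algebraic variety. Writing $U = X + iY$ with $X, Y \in \mathcal M_{nk}(\mathbb R)$, the unitarity condition $UU^* = U^*U = I$ unfolds into a finite system of quadratic equations in the entries of $X$ and $Y$ (the real and imaginary parts of the diagonal and off-diagonal entries of $UU^* - I$), so $\mathcal U_{nk}$ is cut out by polynomial equations in $\mathbb R^{2(nk)^2}$.

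Next I would observe that the partial transposition $\Gamma = \mathrm{id} \otimes \mathrm{t}$ is a real-linear (in fact complex-linear) involution on $\mathcal M_{nk}(\mathbb C)$ which merely permutes matrix entries. Therefore $\mathcal U_{nk}^\Gamma = \Gamma^{-1}(\mathcal U_{nk})$ is the preimage of an algebraic variety under a linear map, and so, by substituting the linear expressions defining $\Gamma$ into the quadratic defining equations of $\mathcal U_{nk}$, one obtains a polynomial description of $\mathcal U_{nk}^\Gamma$ as a real algebraic variety.

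Finally, the union of the two finite systems of polynomial equations defining $\mathcal U_{nk}$ and $\mathcal U_{nk}^\Gamma$ cuts out their intersection $\mathcal U_{unital}$, which is thus a real algebraic variety. There is no substantive obstacle here once Theorem~\ref{unital} is in hand; the whole content of the corollary is the elementary observation that partial transposition is a linear operation.
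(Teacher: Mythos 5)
Your argument is correct and is essentially the paper's own: the authors likewise invoke Theorem \ref{unital} and the observation that both $\mathcal U_{nk}$ and $\mathcal U_{nk}^\Gamma$ are zero-sets of polynomial systems, so their intersection is a real algebraic variety. You simply spell out the details (unitarity as quadratic equations in real and imaginary parts, partial transposition as an entry permutation) that the paper leaves implicit.
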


We are going to investigate next $\mathcal U_{unital}$, as an algebraic variety. We compute first the dimension of the ``enveloping tangent space''  of $\mathcal U_{unital}$ at a point which is a block-diagonal unitary
$$U = \sum_{i=1}^k U_i \otimes e_i f_i^* \in \mathcal U_{block-diag}^A.$$
The notion of enveloping tangent spaces was introduced in \cite{tzy} (also called defect), and it is simply defined by (see also \cite{ban})
$$\widetilde T_U(\mathcal U_{unital}) := T_U(\mathcal U_{nk}) \cap T_U(\mathcal U_{nk}^\Gamma).$$

\begin{proposition}
The dimension of the enveloping tangent space of $\mathcal U_{unital}$ at a point which is a block-diagonal unitary of the form
$$U = \sum_{i=1}^k U_i \otimes e_i f_i^*$$
is given by
\begin{equation}\label{eq:dimension-tangent-space-classical}
D_U = \sum_{i,j=1}^k \sum_{(\lambda_{x},d_x) \in \Lambda_{ij}} d_x^2,
\end{equation}
where $\Lambda_{ij}$ is the set $\{(\lambda_x, d_x)\}$ where $\lambda_x$ is an eigenvalue of the unitary operator $U_i U_j^*$ having multiplicity $d_x$.
\end{proposition}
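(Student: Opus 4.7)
The approach I would take is to work out the two tangent spaces explicitly in a well-chosen frame, translate the partial-transpose condition into a block-commutation statement, and then count dimensions.

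First I would use the local unitary invariance of both $\mathcal U_{nk}$ and $\mathcal U_{nk}^\Gamma$ to reduce to the case where $e_i = f_i = |i\rangle$ is the canonical basis, so that $U = \sum_{i=1}^k U_i \otimes |i\rangle\langle i|$ and consequently $U^\Gamma = U$. Invariance of $\mathcal U_{nk}^\Gamma$ under the local group follows from the easy identity
\[((V_1 \otimes W_1) X (V_2 \otimes W_2))^\Gamma = (V_1 \otimes W_2^\top) X^\Gamma (V_2 \otimes W_1^\top).\]
Neither the enveloping tangent space dimension nor the data $\Lambda_{ij}$ change under this reduction. In this frame one has $T_U(\mathcal U_{nk}) = \{UH : H = -H^*\}$, and since $\Gamma$ is a real-linear involution, $T_U(\mathcal U_{nk}^\Gamma) = \{X : U^* X^\Gamma \text{ is anti-Hermitian}\}$.

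Next I would parametrize $X = UH$ with $H = -H^*$, decompose $H = \sum_{i,j} H_{ij} \otimes |i\rangle\langle j|$ with $H_{ij} \in \mathcal M_n(\mathbb C)$, and compute
\[U^* (UH)^\Gamma = \sum_{i,j} U_j^* U_i H_{ij} \otimes |j\rangle\langle i|.\]
Imposing anti-Hermiticity of the right-hand side, together with the anti-Hermiticity constraints $H_{ii}^* = -H_{ii}$ and $H_{ji} = -H_{ij}^*$ inherited from $H$, collapses everything to the single requirement that every block $H_{ij}$ commute with $U_j^* U_i$. One checks this is self-consistent: taking adjoints, $[H_{ij}, U_j^* U_i] = 0$ implies $[H_{ji}, U_i^* U_j] = 0$, so the $(i,j)$ and $(j,i)$ conditions are not independent.

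For the dimension count, diagonal blocks $H_{ii}$ range over $n \times n$ anti-Hermitian matrices (real dimension $n^2$), with the trivial commutation constraint $[H_{ii}, I] = 0$; this matches $\sum_{(\lambda,d)\in\Lambda_{ii}} d^2 = n^2$ since $U_i U_i^* = I$. For $i < j$, $H_{ij}$ ranges freely over the complex commutant of $U_j^* U_i$; since $U_j^* U_i$ and $U_i U_j^*$ are similar (both are unitary, so the usual $AB$-$BA$ argument applies), they share the same multiplicity data, so the commutant has complex dimension $\sum_{(\lambda,d)\in\Lambda_{ij}} d^2$ and real dimension $2\sum_{(\lambda,d)\in\Lambda_{ij}} d^2$, with $H_{ji}$ then determined. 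Using $\Lambda_{ij} = \Lambda_{ji}$ and combining diagonal and off-diagonal contributions yields $D_U = \sum_{i,j=1}^k \sum_{(\lambda_x,d_x)\in\Lambda_{ij}} d_x^2$, as claimed. The main obstacle is the block-form step: extracting one clean per-pair commutation relation from the two simultaneously imposed anti-Hermiticity constraints, and verifying that this does not introduce spurious equations across the $(i,j)$–$(j,i)$ pairing; once that is done, the dimension bookkeeping is routine.
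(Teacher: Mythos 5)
Your argument is correct and reaches the same dimension count, but it packages the two first-order unitarity constraints differently from the paper, and the difference is worth noting. The paper works directly with the perturbation $A=\sum_{i,j}A_{ij}\otimes e_if_j^*$, writes the two conditions $U_iA_{ji}^*+A_{ij}U_j^*=0$ and $U_iA_{ij}^*+A_{ji}U_j^*=0$, eliminates $A_{ji}$ from the first, and is left with the Sylvester equation $RA_{ij}-A_{ij}S=0$ with $R=U_iU_j^*$, $S=U_j^*U_i$, whose solution-space dimension it reads off from Gantmacher's general analysis. You instead parametrize the perturbation as $X=UH$ with $H$ anti-Hermitian, so that membership in $T_U(\mathcal U_{nk})$ is automatic, and the single remaining condition (anti-Hermiticity of $U^*X^\Gamma$) combines with $H_{ji}=-H_{ij}^*$ to give the commutation relation $[H_{ij},U_j^*U_i]=0$; the count then follows from the elementary fact that the commutant of a unitary with multiplicities $d_x$ has complex dimension $\sum_x d_x^2$. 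Since $R$ and $S$ are similar via $U_j$, the paper's Sylvester equation is equivalent to your commutant condition after the substitution $A_{ij}=U_jC$, so the two computations are the same in substance; what your route buys is that it avoids the external citation and makes the decoupling of the $(i,j)$ and $(j,i)$ constraints transparent (taking adjoints), and it also makes explicit the reduction to $e_i=f_i$ canonical, which the paper uses implicitly when it asserts $U=U^\Gamma$. Your checks that $\Lambda_{ij}=\Lambda_{ji}$ and that $U_iU_j^*$ and $U_j^*U_i$ carry the same multiplicity data close the remaining small gaps, so the bookkeeping is complete.
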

\begin{proof}

A matrix $A \in \mathcal M_{nk}(\mathbb C)$ is an element of the enveloping tangent space at $U$ if and only if both matrices $U + \varepsilon A$ and $(U + \varepsilon A)^\Gamma$ are unitary, up to the first order in $\varepsilon$. The unitarity of $U + \varepsilon A$ is equivalent to the condition $UA^* + AU^* = 0$, while the unitarity of $(U + \varepsilon A)^\Gamma$ is equivalent to  $U(A^\Gamma)^* + A^\Gamma U^* = 0$ (note that we have used the fact $U = U^\Gamma$).

Writing $A$ as a block matrix 
$$A = \sum_{i,j} A_{ij} \otimes e_i f_j^*,$$
the first condition  $UA^* + AU^* = 0$ is equivalent to the following system of equations
$$\forall i,j, \qquad U_i A_{ji}^* + A_{ij} U_j^* = 0,$$
while the condition $U(A^\Gamma)^* + A^\Gamma U^* = 0$ is equivalent to 
$$\forall i,j, \qquad U_i A_{ij}^* + A_{ji} U_j^* = 0.$$

First, let us note that the diagonal blocks $A_{ii}$ appear only in two identical equations 
$$U_i A_{ii}^* + A_{ii}U_i^* = 0.$$
The general solution to the equation above is $A_{ii} = B_{ii} U_i$, where $B_{ii}$ is an arbitrary anti-hermitian matrix ($B_{ii} + B_{ii}^*=0$). Hence, the total dimension of the diagonal blocks of $A$ is $kn^2$. Note that this corresponds to the case $i=j$ in formula \eqref{eq:dimension-tangent-space-classical}: there, $\Lambda_{ii} = \{(1,n)\}$.

Let us now study off-diagonal blocks of $A$. Again, the equations are decoupled: for $i<j$, one has to solve
\begin{align}
\label{eq:system-Aij-first}U_i A_{ji}^* + A_{ij} U_j^* &= 0\\
\label{eq:system-Aij-second}U_i A_{ij}^* + A_{ji} U_j^* &= 0.
\end{align}
From the first equation, one finds $A_{ji} = -U_j A_{ij}^* U_i$. Plugging this into the second equation, we have to solve now
$$RA_{ij} - A_{ij}S = 0,$$
where $R =  U_i U_j^*$ and $S = U_j^* U_i$. This is the well-known \emph{Sylvester equation}. From the analysis in \cite[Chapter VIII]{gan}, the dimension of the solution space of this homogenous equation depends of the Jordan block structure of the matrices $R$ and $S$. Since in our case both $R$ and $S$ are unitary (hence diagonalizable), the Jordan blocks have unit dimension. Moreover, $R$ and $S$ have the same spectrum $\Lambda_{ij}$. It follows from \cite[Chapter VIII, eq.~ (19)]{gan} that the \emph{complex} dimension of the solutions of the system \eqref{eq:system-Aij-first}-\eqref{eq:system-Aij-second} is precisely 
$$\sum_{(\lambda_{x},d_x) \in \Lambda_{ij}} d_x^2,$$
and the proof is complete.
\end{proof}
The proof above can be adapted mutatis mutandis to the case of $(B)$-classical unitary operators, as follows.
\begin{corollary}
The same result holds for $B$-block-diagonal unitary operators of the from 
$$U = \sum_{i=1}^n e_i f_i^* \otimes U_i.$$
\end{corollary}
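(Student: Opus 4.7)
The plan is to mimic the proof of the previous proposition verbatim, swapping the roles of the two tensor factors. First I would write a candidate tangent vector as $A = \sum_{i,j=1}^n e_i f_j^* \otimes A_{ij}$ with $A_{ij} \in \mathcal M_k(\mathbb C)$, so that blocks are now indexed by $(i,j) \in \{1,\dots,n\}^2$ instead of $\{1,\dots,k\}^2$. Expanding the two infinitesimal unitarity conditions $UA^* + AU^* = 0$ and $U^\Gamma (A^\Gamma)^* + A^\Gamma (U^\Gamma)^* = 0$ in this block decomposition is a direct calculation. The first condition yields $U_i A_{ji}^* + A_{ij} U_j^* = 0$, literally the same shape as in the proposition. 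The partial-transpose condition now acts on the factor carrying the $U_i$'s, but after transposing both sides it collapses to the analogous system $U_j^* A_{ij} + A_{ji}^* U_i = 0$, which differs from the first only by where the $U$'s appear.

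The block-by-block analysis then runs unchanged. Diagonal blocks $A_{ii}$ are parametrized by an arbitrary $k \times k$ antihermitian $B_{ii}$ via $A_{ii} = B_{ii} U_i$, contributing $k^2$ complex dimensions per diagonal block, matching the term $i=j$ in the formula (with $\Lambda_{ii} = \{(1,k)\}$). For each off-diagonal pair $i<j$, eliminating $A_{ij}$ via the first equation and substituting into the second reduces the problem to a Sylvester equation $RX = XS$ for $X = A_{ji}^*$, with $R = U_j^* U_i$ and $S = U_i U_j^*$. Since $R$ and $S$ are unitary and similar (conjugation by $U_i$ intertwines them), they share the common spectrum $\Lambda_{ij}$ (the eigenvalues, with multiplicities, of $U_i U_j^* \in \mathcal U_k$), and the same result from \cite[Ch.~VIII]{gan} invoked in the proposition produces exactly $\sum_{(\lambda_x,d_x) \in \Lambda_{ij}} d_x^2$ complex solutions.

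Summing over all pairs $(i,j) \in \{1,\dots,n\}^2$ yields the expected formula
$$D_U = \sum_{i,j=1}^n \sum_{(\lambda_x,d_x) \in \Lambda_{ij}} d_x^2,$$
with $\Lambda_{ij}$ now encoding the spectrum of $U_i U_j^* \in \mathcal U_k$. There is no genuine obstacle: the only point requiring care is bookkeeping how the partial transposition, which now targets the factor containing the $U_i$'s rather than the one carrying the block positions $e_i f_j^*$, interacts with taking adjoints of blocks. This merely rearranges the intermediate identities and leaves the Sylvester-equation dimension count intact, which is why the proof can be said to go through \emph{mutatis mutandis}.
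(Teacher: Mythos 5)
Your adaptation is correct and is exactly the \emph{mutatis mutandis} argument the paper has in mind: the paper offers no separate proof for this corollary, and your block computation (the conditions $U_i A_{ji}^* + A_{ij}U_j^* = 0$ and $U_j^* A_{ij} + A_{ji}^* U_i = 0$, the elimination leading to the Sylvester equation $RX=XS$ with $R=U_j^*U_i$, $S=U_iU_j^*$ unitarily similar) is the intended one. The only nitpick is that the diagonal blocks $A_{ii}=B_{ii}U_i$ with $B_{ii}$ antihermitian contribute $k^2$ \emph{real} (not complex) dimensions each, which is what matches the $i=j$ term $\Lambda_{ii}=\{(1,k)\}$ in the formula, consistent with the bookkeeping convention of the original proposition.
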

\begin{corollary}
The dimension of the enveloping tangent space of $\mathcal U_{unital}$ at a product unitary operator $U = V \otimes W$ is $n^2k^2$, which is also the dimension of $\mathcal U_{nk}$.
\end{corollary}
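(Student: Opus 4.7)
The plan is to exhibit a product unitary $V \otimes W$ as an $A$-block-diagonal unitary with all blocks equal, and then directly apply the formula \eqref{eq:dimension-tangent-space-classical} of the preceding proposition. To do this, I would first use that $W \in \mathcal U_k$ admits a trivial ``singular-value-like'' factorization: pick any orthonormal basis $\{e_i\}_{i=1}^k$ of $\mathbb C^k$, set $f_i := W^* e_i$ (also an orthonormal basis by unitarity of $W$), and write $W = \sum_{i=1}^k e_i f_i^*$. Tensoring on the left with $V$ gives
$$V \otimes W = \sum_{i=1}^k V \otimes e_i f_i^* = \sum_{i=1}^k U_i \otimes e_i f_i^*, \qquad U_i := V,$$
so $U = V \otimes W$ lies in $\mathcal U_{block\text{-}diag}^A$ with constant blocks $U_i \equiv V$.

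Next I would substitute into the proposition: since $U_i U_j^* = V V^* = I_n$ for all $i,j$, the unitary $U_i U_j^*$ has the single eigenvalue $1$ with multiplicity $n$, i.e.\ $\Lambda_{ij} = \{(1,n)\}$ for every $i,j$. Formula \eqref{eq:dimension-tangent-space-classical} then immediately yields
$$D_U = \sum_{i,j=1}^k \sum_{(\lambda_x,d_x) \in \Lambda_{ij}} d_x^2 = \sum_{i,j=1}^k n^2 = n^2 k^2.$$
Recalling that $\mathcal U_{nk}$ is a compact real Lie group whose tangent space at any point is isomorphic to the real vector space of anti-Hermitian $nk \times nk$ matrices, of real dimension $(nk)^2 = n^2 k^2$, this matches the claimed value.

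I do not anticipate any genuine obstacle: the argument is essentially a one-line reduction to the previous proposition followed by a standard dimension count for the unitary group. Conceptually, the equality $D_U = \dim \mathcal U_{nk}$ reflects the fact that at a product point the partial transpose $(V \otimes W)^\Gamma = V \otimes W^\top$ is itself unitary, so the two varieties $\mathcal U_{nk}$ and $\mathcal U_{nk}^\Gamma$ have coinciding tangent spaces there, forcing their intersection's enveloping tangent space to have maximal dimension.
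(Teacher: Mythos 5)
Your argument is correct and is exactly the intended derivation: the paper states this corollary without proof precisely because, as you observe, writing $W=\sum_i e_i f_i^*$ with $f_i=W^*e_i$ exhibits $V\otimes W$ as a block-diagonal unitary with all blocks equal to $V$, so every $U_iU_j^*=I_n$ contributes $n^2$ to the formula of the preceding proposition, giving $n^2k^2$. Your closing remark about $(V\otimes W)^\Gamma=V\otimes W^\top$ being unitary is also the right conceptual explanation for why the enveloping tangent space is maximal at such points.
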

\begin{corollary}
For $k=2$, the dimension of the enveloping tangent space of $\mathcal U_{unital}$ at a point $U = I \oplus V =  I \otimes e_1 f_1^* + V \otimes e_2 f_2^*$ is 
$$D_{I \oplus V} = 2n^2 + 2 \sum_{\lambda} d_\lambda^2,$$
where $d_\lambda$ are the multiplicities of the eigenvalues $\lambda$ of $V$.
\end{corollary}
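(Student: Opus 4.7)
The plan is to simply specialize the dimension formula \eqref{eq:dimension-tangent-space-classical} from the preceding proposition to the case $k=2$ with the explicit blocks $U_1 = I_n$ and $U_2 = V$. The sum over $(i,j) \in \{1,2\}^2$ consists of only four terms, and each is determined by the spectrum of $U_i U_j^*$, so the computation reduces to a bookkeeping exercise.

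First I would isolate the two diagonal contributions $(i,j) \in \{(1,1), (2,2)\}$. In both cases $U_i U_j^* = I_n$, whose only eigenvalue is $1$ with multiplicity $n$, so $\Lambda_{ii} = \{(1,n)\}$ and each diagonal term contributes $n^2$. This accounts for the $2n^2$ summand in the claimed formula, and matches the observation already made in the proof of the main proposition about diagonal blocks.

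Next I would handle the two off-diagonal terms. For $(i,j) = (2,1)$ one has $U_2 U_1^* = V$, whose spectrum is exactly $\{(\lambda, d_\lambda)\}$ as in the statement, so this term contributes $\sum_\lambda d_\lambda^2$. For $(i,j) = (1,2)$, we have $U_1 U_2^* = V^*$; since $V$ is unitary, its eigenvalues are $\{\bar\lambda\}$ with the same multiplicities $d_\lambda$, so this term also contributes $\sum_\lambda d_\lambda^2$. Summing the four contributions yields
\[
D_{I \oplus V} = n^2 + n^2 + \sum_\lambda d_\lambda^2 + \sum_\lambda d_\lambda^2 = 2n^2 + 2\sum_\lambda d_\lambda^2,
\]
as desired. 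There is no real obstacle here: the only point requiring a moment's attention is checking that $V^*$ has the eigenvalue multiplicities of $V$ (which is obvious by unitarity), so the statement is a direct corollary of the previous proposition.
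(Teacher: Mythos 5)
Your proof is correct and is exactly what the paper intends: the corollary is stated without proof as a direct specialization of formula \eqref{eq:dimension-tangent-space-classical} to $k=2$, $U_1=I_n$, $U_2=V$, and your bookkeeping of the four terms (including the observation that $V^*$ has the same eigenvalue multiplicities as $V$) matches the intended argument.
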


\begin{corollary}
Consider a block-diagonal unitary operator
$$U = \sum_{i=1}^k U_i \otimes e_i f_i^*,$$
where the operators $U_i$ are in \emph{generic position}:
$$\forall i \neq j, \qquad U_i U_j^* \text{ has a simple spectrum}.$$
The dimension of the enveloping tangent space of $\mathcal U_{unital}$ at $U$ is then
\begin{equation}\label{eq:dim-U-unital}
D_{U} = kn^2 + nk^2 - nk.
\end{equation}
Note that the expression above is symmetric in $n$ and $k$.
\end{corollary}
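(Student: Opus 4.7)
The plan is to specialize the general formula \eqref{eq:dimension-tangent-space-classical} from the previous proposition, splitting the double sum into its diagonal and off-diagonal contributions, and then reading off the eigenvalue multiplicities in each piece.

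First I would handle the diagonal terms $i=j$. Here $U_i U_j^* = U_i U_i^* = I_n$, which has the single eigenvalue $1$ with multiplicity $n$, so $\Lambda_{ii} = \{(1,n)\}$ and the inner sum equals $n^2$. Summing over the $k$ diagonal indices contributes $k n^2$ to $D_U$.

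Next I would turn to the off-diagonal terms $i \neq j$. The genericity hypothesis states that each $U_i U_j^*$ has a simple spectrum, meaning it has $n$ distinct eigenvalues, each of multiplicity $1$. Thus $\Lambda_{ij}$ consists of $n$ pairs $(\lambda_x, 1)$, and the inner sum $\sum d_x^2$ equals $n \cdot 1^2 = n$. There are $k(k-1)$ ordered off-diagonal pairs $(i,j)$, contributing $n k(k-1) = nk^2 - nk$.

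Adding the two contributions yields $D_U = kn^2 + nk^2 - nk$, which is \eqref{eq:dim-U-unital}. There is no real obstacle here; the work is entirely bookkeeping inside the formula supplied by the previous proposition, together with the trivial observation that a simple spectrum on an $n$-dimensional space means exactly $n$ distinct eigenvalues. The symmetry in $n$ and $k$ follows by inspection of the final expression and needs no separate argument.
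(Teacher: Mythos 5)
Your proposal is correct and is exactly the intended argument: the paper states this as an immediate corollary of the formula \eqref{eq:dimension-tangent-space-classical}, and the only work is the bookkeeping you carry out (diagonal terms give $k\cdot n^2$ since $\Lambda_{ii}=\{(1,n)\}$, and each of the $k(k-1)$ ordered off-diagonal pairs gives $n\cdot 1^2$ by simplicity of the spectrum). Nothing is missing.
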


We conjecture that the expression \eqref{eq:dim-U-unital} is the dimension of $\mathcal U_{unital}$, as an algebraic variety, see Conjecture \ref{cnj:dim-U-unital}.

\section{Bipartite unitary operators producing PPT channels}
\label{sec:PPT}

We consider in this section $PPT$ channels and bipartite unitary operators which produce such channels via the Stinespring formula, independent of the state of the environment. 
 
Recall that the \emph{maximally entangled state} is the matrix (here, we drop the normalization constant)
$$\mathcal M_{n^2}(\mathbb C) \ni \Omega_n := \sum_{i,j=1}^n e_iej^* \otimes e_i ej^*.$$
A quantum channel $L$ is said to be \emph{PPT} if and only if its Choi matrix 
 $$C_L := [\mathrm{id} \otimes L](\Omega_n)$$
 is PPT, i.e.~ $C_L^\Gamma \geq 0$. Hence, the set $\mathcal U_{PPT}$ admits the following characterization:
 $$\mathcal U_{PPT} = \{U \in \mathcal U_{nk} \, : \, \text{the map } \beta \mapsto C_{L_{U,\beta}}^\Gamma \text{ is positive}\}.$$
 
 \begin{figure}[htbp] 
\includegraphics{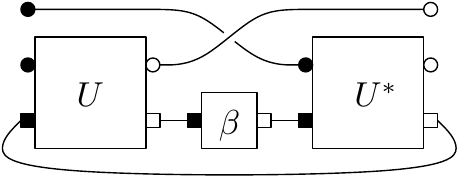} \qquad  \includegraphics{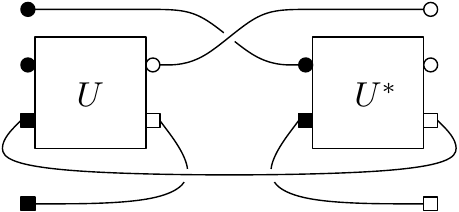} \qquad \includegraphics{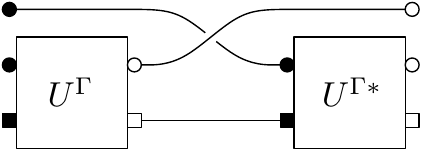} 
\caption{From left to right: the matrix $C_{L_{U,\beta}}^\Gamma$, the Choi matrix of the map $\beta \mapsto C_{L_{U,\beta}}^\Gamma$, and the same Choi matrix, with $U$ replaced by $U^\Gamma$.} 
\label{fig:PPT}
\end{figure}
 
 Since the structure of positive maps between matrix algebras is rather poorly understood, we focus for the moment on a subset of $\mathcal U_{PPT}$, namely
$$ \mathcal U_{PPT} \supseteq \mathcal U_{CPPT} :=  \{U \in \mathcal U_{nk} \, : \, \text{the map } \beta \mapsto C_{L_{U,\beta}}^\Gamma \text{ is \emph{completely} positive}\}.$$

We have the following description of the set $\mathcal U_{CPPT}$, in which, remarkably, the partial transpose of $U$ plays a special role. 

\begin{proposition}
For all $n,k$, we have 
$$ \mathcal U_{CPPT} =  \{U \in \mathcal U_{nk} \, : \, (I_n \otimes U^\Gamma)(F_n \otimes I_k)(I_n \otimes U^\Gamma)^* \geq 0\}.$$
\end{proposition}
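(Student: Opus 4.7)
The plan is to apply Choi's theorem, which states that a linear map $\Phi: \mathcal M_k \to \mathcal M_{n^2}$ is completely positive if and only if its Choi matrix $C_\Phi := [\mathrm{id}_k \otimes \Phi](\Omega_k) \in \mathcal M_k \otimes \mathcal M_{n^2}$ is positive semidefinite. Applied to our situation, $U \in \mathcal U_{CPPT}$ is by definition the condition that the linear map $\Phi_U : \beta \mapsto C_{L_{U,\beta}}^\Gamma$ is CP, so the task reduces to showing that, after an obvious permutation of tensor factors, $C_{\Phi_U}$ equals $(I_n \otimes U^\Gamma)(F_n \otimes I_k)(I_n \otimes U^\Gamma)^*$.

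First I would unfold the definitions and write everything on a four-fold tensor product $\mathbb C^{n_1}\otimes \mathbb C^{n_2}\otimes \mathbb C^{k_1}\otimes \mathbb C^{k_2}$. Using the Stinespring form
\[
C_{L_{U,\beta}} = [\mathrm{id}_{n_1}\otimes \mathrm{Tr}_{k_2}]\bigl((I_{n_1}\otimes U_{n_2 k_2})(\Omega_{n_1 n_2}\otimes \beta_{k_2})(I_{n_1}\otimes U^*_{n_2 k_2})\bigr),
\]
plugging $\beta = e_a e_b^*$ into the Choi construction, and noting that $\sum_{ab}(e_a e_b^*)_{k_1}\otimes (e_a e_b^*)_{k_2} = \Omega_{k_1 k_2}$, one arrives at the compact formula
\[
C_{\Phi_U} = [\mathrm{id}_{k_1 n_1}\otimes \mathrm{t}_{n_2}\otimes \mathrm{Tr}_{k_2}]\bigl((I\otimes I\otimes U)(\Omega_{n_1 n_2}\otimes \Omega_{k_1 k_2})(I\otimes I\otimes U^*)\bigr),
\]
where the partial transpose $\mathrm{t}_{n_2}$ is precisely the one used to define $C_L^\Gamma$. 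This is the content of the middle panel of Figure~\ref{fig:PPT}.

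The main (and most delicate) step is to move the partial transpose $\mathrm{t}_{n_2}$ past the $(I\otimes U)(\cdot)(I\otimes U^*)$ conjugation, using two identities: the flip identity $\Omega_n^\Gamma = F_n$, and the shuttle identity $(A\otimes I)|\omega_n\rangle = (I\otimes A^\top)|\omega_n\rangle$. Concretely, writing $U = \sum_\mu D_\mu\otimes E_\mu$ and applying $\mathrm{t}_{n_2}$ blockwise gives $\mathrm{t}_{n_2}\bigl((I\otimes D_\mu)\Omega_n(I\otimes D_\nu^*)\bigr) = (I\otimes \bar{D_\nu})\,F_n\,(I\otimes D_\mu^\top)$, and when the sum over $\mu,\nu$ is reassembled, the transposed Kraus summands $D_\mu^\top \otimes E_\mu$ recombine into $U^\Gamma$ (up to the explicit $F_n$ pulled out of $\Omega_n^\Gamma$). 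Finally, contracting $\Omega_{k_1 k_2}$ against the partial trace $\mathrm{Tr}_{k_2}$ collapses the $k_2$ leg and leaves a single $\mathbb C^k$ factor on which $U^\Gamma$ and $(U^\Gamma)^*$ act, producing exactly $(I_n\otimes U^\Gamma)(F_n\otimes I_k)(I_n\otimes U^\Gamma)^*$. The hard part is really the bookkeeping: on a four-fold tensor product with $\mathrm{t}_{n_2}$ interacting nontrivially with the conjugation, tracking which leg of $U$ carries the partial transpose is much clearer in Penrose graphical calculus (as in Figure~\ref{fig:PPT}) than in index coordinates, so I would carry out the identification pictorially and only verify the final matching of tensor factors at the level of formulas.
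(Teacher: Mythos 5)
Your proposal follows exactly the paper's route: the paper's proof consists precisely of invoking Choi's theorem for the map $\beta \mapsto C_{L_{U,\beta}}^\Gamma$ and then identifying its Choi matrix with $(I_n\otimes U^\Gamma)(F_n \otimes I_k)(I_n\otimes U^\Gamma)^*$ via a Penrose graphical-calculus computation (the three panels of Figure~\ref{fig:PPT}), which is the same manipulation you describe --- pulling the partial transpose through the conjugation using $\Omega_n^\Gamma = F_n$ and the ricochet identity so that the transposed blocks of $U$ reassemble into $U^\Gamma$. The algebraic details you sketch (and sensibly defer to the diagrams) are exactly what the paper's figures encode, so the two arguments coincide.
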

\begin{proof}
We use again the fact that complete positivity is characterized by the fact that the Choi matrix is positive semidefinite. In Figure \ref{fig:PPT}, we have depicted in the left image the matrix  $C_{L_{U,\beta}}^\Gamma$, while in the center panel we have the Choi matrix of the map $\beta \mapsto C_{L_{U,\beta}}^\Gamma$. The right-most panel contains the diagram of the same Choi matrix, where we have replaced $U$ by its partial transpose $U^\Gamma$, in order to obtain a nicer expression. The equality of the last two panels contains the proof of the claim. 
\end{proof}

In order to further simplify the description given above, by conjugating the above expression by the pseudo-inverse of the matrix $U^\Gamma$, we are focusing next on the study of the set
\begin{equation}\label{eq:P-cPPT}
\mathcal P_{CPPT} := \{P \in \mathcal M_{nk}^{sa}(\mathbb C) \text{ orthogonal projection} \, : \,  (I_n \otimes P)(F_n \otimes I_k)(I_n \otimes P) \geq 0\},
\end{equation}
and we have $ \mathcal U_{CPPT} =  \{U \in \mathcal U_{nk} \, : \, \mathrm{Ran}(U^\Gamma) \in \mathcal P_{CPPT}\}$.

We have gathered the following properties of the set $\mathcal P_{CPPT}$; we leave the proofs of these simple facts to the reader. 
\begin{enumerate}
\item It is locally unitarily invariant: for all $U \in \mathcal U_n$ and $V \in \mathcal U_k$, if $\tilde P = (U \otimes V)P(U \otimes V)^*$, then
$$ (I_n \otimes \tilde P)(F_n \otimes I_k)(I_n \otimes \tilde P ) =  (U \otimes U \otimes V)(I_n \otimes P)(F_n \otimes I_k)(I_n \otimes P)(U \otimes U \otimes V)^*.$$
\item $I_{nk} \notin \mathcal P_{CPPT}$. In other words, no product unitary lies inside $\mathcal U_{CPPT}$, nor inside $\mathcal U_{PPT}$. As a consequence, we have $\mathcal U_{unital} \cap \mathcal U_{PPT} = \emptyset$.
\item Since $\mathcal U_{const} \subseteq \mathcal U_{PPT}$, if $k=nr$, for any unitary operator $V \in \mathcal U_{nr}$, $P_V \in \mathcal P_{CPPT}$, where $P_V$ is depicted in Figure \ref{fig:P-V}.
\begin{figure}[htbp] 
\includegraphics{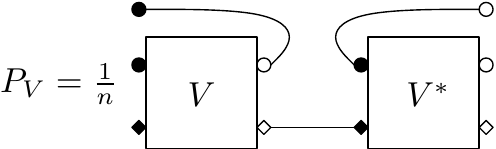} 
\caption{For any unitary operator $V \in \mathcal U_{nr}$, the orthogonal projection $P_V$ depicted here is an element of $\mathcal P_{CPPT}$.} 
\label{fig:P-V}
\end{figure}
\item If $x \otimes y \in \mathrm{Ran} P$, then, for any $x' \in \mathbb C^n$ such that $x' \perp x$, $x' \otimes y \notin \mathrm{Ran} P$.
\item For any $x \in \mathbb C^n$, $\|x\|=1$, and any orthogonal projection $Q \in \mathcal M_k(\mathbb C)$, $xx^* \otimes Q \in \mathcal P_{CPPT}$.
\end{enumerate}

At the level of examples, the only observation here is that $\mathcal U_{const} \subseteq \mathcal U_{PPT}$. We refer the reader to Section \ref{sec:conclusions} for some related open problems.

\section{Bipartite unitary operators producing mixed unitary channels}
\label{sec:mixed}

In this section we investigate the set $\mathcal U_{mixed}$. We provide necessary conditions for a bipartite unitary operator $U$ to belong to $\mathcal U_{mixed}$, and we show that in the case of qubit channels ($n=2$), the sets $\mathcal U_{mixed}$ and $\mathcal U_{unital}$ are equal. 

Recall that the \emph{Kraus operator space} of a quantum channel $L(\rho) = \sum_i E_i \rho E_i^*$ is the space $K(L)=\mathrm{span}\{E_i\}$ \cite{dwi,ydx}; note that $K(L)$ does not depend on the choice of Kraus operators for $L$, since all Kraus representations are related by unitary transformations \cite[Theorem 8.2]{nch}. One of the main observations in \cite{ydx} was that for a mixed unitary channel $L$, $K(L) \cap \mathcal U_n \neq \emptyset$. The next result builds on this remark. 

\begin{proposition}
Let $U \in \mathcal U_{mixed}$ be a bipartite unitary operator. Then, for any unit vector $f \in \mathbb C^k$ and any orthonormal basis $\{e_i\}$ of $\mathbb C^k$, we have
$$\mathrm{span} \left\{ I_n \otimes e_i^* \cdot U \cdot I_n \otimes f \right\}_{i=1}^k \cap \mathcal U_n \neq \emptyset.$$
\end{proposition}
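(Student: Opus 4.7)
The plan is straightforward: recognize the operators $E_i := (I_n \otimes e_i^*)\, U\, (I_n \otimes f)$ as Kraus operators for a specific channel in the family $\{L_{U,\beta}\}$, and then apply the fact (recalled just before the statement) that the Kraus operator space of a mixed unitary channel contains a unitary.

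First I would specialize the Stinespring formula \eqref{eq:Stinespring} to the rank-one state $\beta = ff^*$. Expanding the partial trace in the orthonormal basis $\{e_i\}$ of $\mathbb{C}^k$ gives
\begin{equation*}
L_{U,ff^*}(X) \;=\; \sum_{i=1}^{k} (I_n \otimes e_i^*)\, U\, (X \otimes ff^*)\, U^* (I_n \otimes e_i) \;=\; \sum_{i=1}^{k} E_i X E_i^*,
\end{equation*}
so $\{E_i\}_{i=1}^k$ is a valid Kraus representation of the channel $L_{U,ff^*}$. In particular, $K(L_{U,ff^*}) = \mathrm{span}\{E_i\}_{i=1}^k$.

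Next I would invoke the hypothesis $U \in \mathcal{U}_{mixed}$: by definition \eqref{eq:def-U-mixed}, the channel $L_{U,ff^*}$ admits a decomposition $L_{U,ff^*}(X) = \sum_{j=1}^r p_j V_j X V_j^*$ with $V_j \in \mathcal{U}_n$ and $p_j > 0$ (discard any zero weights). Then $\{\sqrt{p_j}\, V_j\}_{j=1}^r$ is another Kraus representation of the same channel. Since all Kraus representations of a fixed channel span the same subspace of $\mathcal{M}_n(\mathbb{C})$ (they are related by a partial isometry, as recalled from \cite[Theorem 8.2]{nch}), we obtain
\begin{equation*}
\mathrm{span}\{E_i\}_{i=1}^k \;=\; K(L_{U,ff^*}) \;=\; \mathrm{span}\{V_j\}_{j=1}^r.
\end{equation*}
In particular $V_1 \in \mathrm{span}\{E_i\}$, and since $V_1 \in \mathcal{U}_n$ the intersection $\mathrm{span}\{E_i\} \cap \mathcal{U}_n$ is nonempty, as claimed.

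There is essentially no obstacle here: the proof is a one-line observation (that $E_i$ are Kraus operators of $L_{U,ff^*}$) combined with a direct invocation of the structural fact about Kraus spaces of mixed unitary channels from \cite{ydx}. The only care one needs to take is to choose $\beta$ of rank one so that the Kraus representation produced by the Stinespring formula involves exactly the operators $E_i$ appearing in the statement; a generic $\beta$ would give operators of the form $(I_n \otimes e_i^*)U(I_n \otimes \sqrt{\beta})$ instead, and one would have to further rewrite them in terms of the $E_i$'s.
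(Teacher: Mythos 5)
Your proof is correct and follows essentially the same route as the paper's: both identify $E_i = (I_n \otimes e_i^*)\, U\, (I_n \otimes f)$ as Kraus operators of the channel $L_{U,ff^*}$ and then invoke the result of \cite{ydx} that the Kraus operator space of a mixed unitary channel contains a unitary. Your write-up is in fact slightly more detailed than the paper's, spelling out the invariance of the Kraus operator space under change of Kraus representation.
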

\begin{proof}
For any choice of $f$ and $\{e_i\}$, the operators 
$$E_i := I_n \otimes e_i^* \cdot U \cdot I_n \otimes f $$
are Kraus operators for the channel $L_{U, ff^*}$. Since the channel is mixed unitary, it follows from \cite[Section IV]{ydx} that the linear span of the $E_i$ should contain a unitary operator. 
\end{proof}
\begin{remark}
Note that in the statement above, the set 
$$\mathrm{span} \left\{ I_n \otimes e_i^* \cdot U \cdot I_n \otimes f \right\}_{i=1}^k$$
does not depend on the particular choice of the basis $\{e_i\}$, but only on the vector $f$.
\end{remark}

As a direct consequence of the above result, we obtain the following simple criterion for deciding if a given unitary matrix $U$ is an element of $\mathcal U_{mixed}$.

\begin{corollary}\label{cor:not-mixed}
Let $U \in \mathcal U_{nk}$ be a bipartite unitary operator with the following property:
$$\exists f \in \mathbb C^k \text{ s.t. } \forall e \in \mathbb C^k, \quad I_n \otimes e^* \cdot U \cdot I_n \otimes f \notin \mathcal U_n.$$
Then, $U \notin \mathcal U_{mixed}$.
\end{corollary}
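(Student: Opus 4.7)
The plan is to obtain this as an immediate contrapositive of the preceding proposition, together with one linearity observation that converts a statement about a spanning set into a statement about arbitrary elements of the span.

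First I would fix any orthonormal basis $\{e_i\}_{i=1}^k$ of $\mathbb C^k$ and, given the distinguished vector $f$ from the hypothesis, set
\[
E_i := (I_n \otimes e_i^*) \cdot U \cdot (I_n \otimes f), \qquad i=1,\dots,k.
\]
The key (trivial but essential) observation is that for any complex coefficients $c_1,\dots,c_k$, writing $e := \sum_i \overline{c_i}\, e_i \in \mathbb C^k$, one has
\[
\sum_{i=1}^k c_i E_i = (I_n \otimes e^*) \cdot U \cdot (I_n \otimes f).
\]
Conversely, every $e \in \mathbb C^k$ can be expanded in the basis $\{e_i\}$, so this correspondence between $e$ and $(c_i)$ is a bijection. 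Hence
\[
\mathrm{span}\{E_1,\dots,E_k\} = \bigl\{(I_n \otimes e^*) \cdot U \cdot (I_n \otimes f) : e \in \mathbb C^k\bigr\}.
\]

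Now I would invoke the contrapositive. The hypothesis of the corollary is exactly that every element of the right-hand side fails to be unitary, i.e.\ $\mathrm{span}\{E_1,\dots,E_k\} \cap \mathcal U_n = \emptyset$. If $U$ were in $\mathcal U_{mixed}$, the preceding proposition (applied to this particular $f$ and this particular orthonormal basis $\{e_i\}$) would force the intersection to be nonempty, a contradiction. Therefore $U \notin \mathcal U_{mixed}$.

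There is essentially no obstacle here; the only thing to double-check is the identification of an arbitrary linear combination of the $E_i$ with an operator of the form $(I_n \otimes e^*) U (I_n \otimes f)$, which is just bilinearity of the tensor product and the fact that $e^*$ is conjugate-linear in $e$. All the substantive content is in the proposition above.
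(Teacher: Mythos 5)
Your argument is correct and matches the paper's intent exactly: the paper states this corollary as a direct consequence of the preceding proposition, and the linearity/basis-independence observation you spell out (that $\mathrm{span}\{E_i\}$ coincides with $\{(I_n\otimes e^*)U(I_n\otimes f): e\in\mathbb C^k\}$) is precisely the content of the remark the paper places after that proposition. The only quibble, which concerns the statement rather than your proof, is that $f$ must implicitly be nonzero (the proposition assumes a unit vector, and the hypothesis is scale-invariant in $f\neq 0$ since the scalar can be absorbed into $e$).
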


With the help of the criterion above, we present next an example of an element $U \in \mathcal U^B_{block-diag} \setminus \mathcal U_{mixed}$, which shows, in particular, that the inclusion $\mathcal U_{mixed} \subset \mathcal U_{unital}$ is strict; this example is motivated by \cite[Section 4.3]{lst} and \cite[Example 1]{ydx}. Let $U \in \mathcal U_{4\cdot 2}$ be
$$U = e_1e_1^* \otimes I_2 + e_2 e_2^* \otimes \begin{bmatrix} 0 & 1 \\ 1 & 0 \end{bmatrix} + e_3 e_3^* \otimes \frac{1}{\sqrt 2} \begin{bmatrix} 1 & 1 \\ 1 & -1 \end{bmatrix} + e_4 e_4^* \otimes \frac{1}{\sqrt 2} \begin{bmatrix} 1 & i \\ -i & -1 \end{bmatrix}.$$
Obviously, $U \in \mathcal U^B_{cl}$. In the spirit of the criterion above, compute
$$I_4 \otimes \begin{bmatrix} z & w \end{bmatrix} \cdot U \cdot I_4 \otimes \begin{bmatrix} 1 \\ 0 \end{bmatrix} = \mathrm{Diag} \left( z, w, \frac{z + w}{\sqrt 2}, \frac{z - i w}{\sqrt 2}\right).$$
Asking for the diagonal matrix above to be unitary leads to a contradiction, and thus, by Corollary \ref{cor:not-mixed}, we conclude $U  \notin \mathcal U_{mixed}$.

\bigskip

Let us now consider the qubit case $n=2$, which is special because the quantum Birkhoff result holds for qubits \cite{tre, lst,mwo}. 

\begin{proposition}
Let $L:\mathcal M_2(\mathbb C) \to \mathcal M_2(\mathbb C)$ be a completely positive, unital and trace preserving map. Then, there exist  unitary operators $U_1, \ldots, U_k \in \mathcal U_n$ and probabilities $p_i \geq 0$, $\sum_i p_i =1$ such that
$$\forall X \in \mathcal M_2(\mathbb C), \qquad L(X) = \sum_{i=1}^k p_i U_i X U_i^*.$$
\end{proposition}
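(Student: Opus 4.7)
The strategy is the classical one based on the Bloch-sphere parametrization of qubit channels, together with an explicit decomposition of the resulting diagonal ``Pauli'' channels as convex combinations of Pauli unitary conjugations.

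\textbf{Step 1: Bloch representation.} First I would parametrize $\mathcal M_2(\mathbb C)^{sa} = \mathbb R I_2 \oplus \mathrm{span}_{\mathbb R}(\sigma_1,\sigma_2,\sigma_3)$, where $\sigma_1,\sigma_2,\sigma_3$ are the Pauli matrices. Since $L$ is unital and trace preserving (hence self-adjointness preserving with $L(I)=I$), it acts on $\mathcal M_2(\mathbb C)^{sa}$ in the form $L((I + \vec r\cdot\vec\sigma)/2) = (I + (T\vec r)\cdot\vec\sigma)/2$ for a unique real $3\times 3$ matrix $T$.

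\textbf{Step 2: Reduction to a diagonal channel via local unitaries.} I would apply the real singular value decomposition $T = O_1 D O_2^\top$ with $O_1,O_2\in O(3)$ and $D=\mathrm{diag}(\lambda_1,\lambda_2,\lambda_3)$. By moving a sign of $\det O_i$ into $D$ (the $\lambda_i$ are now allowed to be negative), I may take $O_1,O_2\in SO(3)$. Using the double cover $SU(2)\to SO(3)$, each $O_i$ is realized by a unitary conjugation: there exist $V_1,V_2\in\mathcal U_2$ such that conjugation by $V_i$ acts on the Bloch vector exactly as $O_i$. Consequently $L = \mathrm{Ad}(V_1)\circ \Phi_D \circ \mathrm{Ad}(V_2)$, where $\Phi_D$ is the unital, trace preserving map whose Bloch matrix is $D$. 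Since $\mathrm{Ad}(V_1)$ and $\mathrm{Ad}(V_2)$ are already unitary conjugations and the set of mixed unitary channels is invariant under pre- and post-composition with unitary conjugations, it is enough to show $\Phi_D$ is mixed unitary.

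\textbf{Step 3: Fujiwara--Algoet tetrahedron.} A direct computation of the Choi matrix of $\Phi_D$ shows that complete positivity is equivalent to $(\lambda_1,\lambda_2,\lambda_3)$ lying in the tetrahedron $\mathcal T$ with vertices
\begin{equation*}
v_0=(1,1,1),\quad v_1=(1,-1,-1),\quad v_2=(-1,1,-1),\quad v_3=(-1,-1,1).
\end{equation*}
On the other hand, conjugation by the four Pauli unitaries $\sigma_0=I, \sigma_1,\sigma_2,\sigma_3$ yields diagonal Bloch matrices corresponding exactly to the vertices $v_0,v_1,v_2,v_3$ (conjugating by $\sigma_j$ flips the two components of the Bloch vector orthogonal to $\sigma_j$). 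Writing $(\lambda_1,\lambda_2,\lambda_3) = \sum_{a=0}^{3} p_a v_a$ with $p_a\geq 0$, $\sum p_a=1$, we obtain the mixed unitary decomposition
\begin{equation*}
\Phi_D(X) = \sum_{a=0}^3 p_a\, \sigma_a X \sigma_a^*.
\end{equation*}
Substituting into Step 2 and setting $U_a := V_1 \sigma_a V_2$ gives the desired representation of $L$.

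\textbf{Main obstacle.} The only non-routine ingredient is the Fujiwara--Algoet characterization of the tetrahedron $\mathcal T$: one must check both that $(\lambda_1,\lambda_2,\lambda_3)\in\mathcal T$ is necessary for complete positivity (by testing the Choi matrix of $\Phi_D$ on the maximally entangled state) and that every such triple is a convex combination of the $v_a$'s (elementary linear algebra on the tetrahedron). The surjectivity $SU(2)\to SO(3)$ used in Step 2 is standard, and the rest is bookkeeping.
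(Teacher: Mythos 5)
Your proof is correct. The paper itself gives no proof of this proposition --- it is quoted as the known qubit case of the quantum Birkhoff theorem, with citations to Tregub, Landau--Streater and Mendl--Wolf --- and your route (Bloch representation, reduction to a Pauli-diagonal channel via the singular value decomposition and the $SU(2)\to SO(3)$ cover, then the Fujiwara--Algoet tetrahedron with Pauli conjugations at the vertices) is precisely the standard argument from that literature. One small point of precision: to obtain all four inequalities cutting out $\mathcal T$ you should diagonalize the Choi matrix of $\Phi_D$ in the Bell basis (its four eigenvalues are $\tfrac12(1\pm\lambda_1\pm\lambda_2\pm\lambda_3)$ with an even number of minus signs, up to normalization), rather than test it only against the maximally entangled state, which yields just one of them.
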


As a corollary, since every unital channel $L_{U,\beta}$ must be mixed unitary, we obtain the following result. 

\begin{proposition}
For $n=2$ and any $k \geq 2$, we have that
$$\mathcal U_{mixed} = \mathcal U_{unital}.$$
In particular. $U \in \mathcal U_{mixed}$ iff $U \in \mathcal U \cap \mathcal U^\Gamma$.
\end{proposition}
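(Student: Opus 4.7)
The plan is very short because almost all the work has been done in the preceding statements. The inclusion $\mathcal U_{mixed} \subseteq \mathcal U_{unital}$ is immediate and general (mixed unitary channels are always unital, and the property is preserved when we require it for every ancilla state $\beta$), so the only thing to establish is the reverse inclusion $\mathcal U_{unital} \subseteq \mathcal U_{mixed}$ in the qubit case $n=2$.

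For that, I would take an arbitrary $U \in \mathcal U_{unital}$ with $n=2$ and an arbitrary state $\beta \in \mathcal M_k^{1,+}(\mathbb C)$. By the Stinespring formula \eqref{eq:Stinespring}, the map $L_{U,\beta}$ is completely positive and trace preserving; by the definition of $\mathcal U_{unital}$, it is also unital. Thus $L_{U,\beta}:\mathcal M_2(\mathbb C) \to \mathcal M_2(\mathbb C)$ is a unital CPTP map, and by the preceding proposition (the qubit version of Birkhoff's theorem), it is a convex combination of unitary conjugations. Since this holds for every $\beta$, we have $U \in \mathcal U_{mixed}$, which completes the equality $\mathcal U_{mixed} = \mathcal U_{unital}$.

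The final ``In particular'' clause is then just a direct invocation of Theorem \ref{unital}, which gives $\mathcal U_{unital} = \mathcal U_{nk} \cap \mathcal U_{nk}^\Gamma$; combined with the equality just proven, this yields $U \in \mathcal U_{mixed}$ iff $U \in \mathcal U \cap \mathcal U^\Gamma$.

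There is no real obstacle here: the content of the statement is entirely absorbed by the qubit Birkhoff theorem cited in the previous proposition and by Theorem \ref{unital}. The only thing to be careful about is that we apply the qubit Birkhoff result pointwise in $\beta$, so the decomposition $L_{U,\beta}(X) = \sum_i p_i(\beta) U_i(\beta) X U_i(\beta)^*$ is allowed to depend on $\beta$, which is exactly what the definition \eqref{eq:def-U-mixed} of $\mathcal U_{mixed}$ permits.
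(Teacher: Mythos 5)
Your proof is correct and follows exactly the paper's route: the paper also deduces this proposition as an immediate corollary of the preceding qubit Birkhoff statement, applied pointwise in $\beta$, with the ``in particular'' clause coming from Theorem \ref{unital}. Your remark that the decomposition may depend on $\beta$, which is all that the definition of $\mathcal U_{mixed}$ requires, is precisely the (implicit) content of the paper's one-line argument.
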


\section{Block-diagonal bipartite unitary operations}
\label{sec:block-diag}

In this section we study the set of block-diagonal operators, $\mathcal U_{block-diag}^{A,B}$. Before proving any results on this class, let us provide another way of writing equation \eqref{eq:def-U-block-diag-A}, which has the benefit of being unique in a certain sense. As a corollary, we deduce that the only unitary transformations which are blcok-diagonal with respect to both sub-systems $A$ and $B$ are given by partial isometries.

\begin{definition}
Two unitary operators $U$ and $V$ are said to be in relation, denoted by $U \sim V$, if there exists a constant $\lambda$ in $\mathbb C$ with $\vert \lambda \vert =1$, such that 
$$U= \lambda\ V\,.$$
\end{definition}

\begin{proposition}\label{prop:uniqueness-decomp-cl}
A bipartite unitary transformation $U \in \mathcal U$ is an element of $\mathcal U_{block-diag}^A$ if and only if it can be written as 
\begin{equation}\label{eq:U-block-diag-Ri}
U = \sum_{i=1}^r U_i \otimes R_i,
\end{equation}
where $U_i$ are unitary operators acting on $\mathbb C^n$ and $R_i$ are partial isometries $R_i : \mathbb C^k \to \mathbb C^k$ such that
\begin{equation}
\sum_{i=1}^r R_iR_i^* = \sum_{i=1}^r R_i^*R_i = I_k
\end{equation}
and $U_i \nsim U_j$ for all $i \neq j$.

Moreover, the decomposition \eqref{eq:U-block-diag-Ri} is unique, up to $\sim$ and permutation of the terms in the sum. 
\end{proposition}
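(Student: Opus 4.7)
The plan is to prove the proposition in three steps: existence of a decomposition of the form \eqref{eq:U-block-diag-Ri} for every $U \in \mathcal U_{block-diag}^A$, the converse implication, and the uniqueness statement. For existence, starting from the defining form $U = \sum_{j=1}^k \tilde U_j \otimes e_j f_j^*$ with $\tilde U_j \in \mathcal U_n$ and $\{e_j\}, \{f_j\}$ orthonormal bases of $\mathbb{C}^k$, I would group the indices according to the $\sim$-equivalence classes of the $\tilde U_j$: picking one representative $U_g \in \mathcal U_n$ per class $I_g$, together with unit-modulus scalars $\lambda_j$ satisfying $\tilde U_j = \lambda_j U_g$ for $j \in I_g$, I set $R_g := \sum_{j \in I_g} \lambda_j e_j f_j^*$. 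A direct computation using $|\lambda_j|=1$ gives $R_g R_g^* = \sum_{j \in I_g} e_j e_j^*$ and $R_g^* R_g = \sum_{j \in I_g} f_j f_j^*$, so each $R_g$ is a partial isometry, the closure relations $\sum_g R_g R_g^* = \sum_g R_g^* R_g = I_k$ hold, and the representatives $U_g$ are pairwise non-$\sim$ by construction.

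For the converse, the key observation is that the hypothesis $\sum_i R_i R_i^* = I_k$, combined with each $P_i := R_i R_i^*$ being an orthogonal projection, forces the $P_i$ to be pairwise orthogonal. Indeed, $P_i(I_k - P_i) = 0$ rewrites as $\sum_{j \neq i} P_i P_j = 0$; taking trace yields $\sum_{j \neq i} \mathrm{Tr}(P_i P_j) = 0$, and since $\mathrm{Tr}(P_i P_j) = \|P_i P_j\|_{\mathrm{HS}}^2 \geq 0$ every summand vanishes. The analogous argument applied to $R_i^* R_i$ shows the coranges are also orthogonal. Consequently the ranges $\mathrm{Ran}(R_i)$ form an orthogonal direct sum decomposition of $\mathbb{C}^k$, and similarly for the coranges; a singular-value decomposition $R_i = \sum_{l=1}^{d_i} e_{i,l} f_{i,l}^*$ of each partial isometry, followed by concatenating the resulting vectors, then produces the orthonormal bases required in \eqref{eq:def-U-block-diag-A}.

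For uniqueness, I would compare two decompositions $U = \sum_i U_i \otimes R_i = \sum_j V_j \otimes S_j$ via the projections $P_i := R_i R_i^*$ and $P'_j := S_j S_j^*$. Using the mutual orthogonality established above, one gets $(I_n \otimes P_i) U = U_i \otimes R_i$ and $(I_n \otimes P'_j) U = V_j \otimes S_j$, and multiplying the second on the right by the adjoint of the first yields $V_j U_i^* \otimes S_j R_i^* = I_n \otimes P'_j P_i$. The right-hand side being a simple tensor whose first factor is proportional to $I_n$, either both sides vanish or $V_j \sim U_i$. The hypothesis $V_j \nsim V_{j'}$ for $j \neq j'$ then forces, for each $i$, a unique $j = \sigma(i)$ with $P'_j P_i \neq 0$, and by symmetry $\sigma$ is a bijection. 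Writing $P_i = P_i \sum_j P'_j = P_i P'_{\sigma(i)}$ and taking adjoints yields $P_i = P'_{\sigma(i)} P_i$, so $\mathrm{Ran}(P_i) \subseteq \mathrm{Ran}(P'_{\sigma(i)})$; the symmetric argument gives the reverse inclusion, so $P_i = P'_{\sigma(i)}$. Substituting back into $(I_n \otimes P_i) U$ gives $U_i \otimes R_i = V_{\sigma(i)} \otimes S_{\sigma(i)}$, which forces $U_i = \lambda V_{\sigma(i)}$ and $R_i = \bar\lambda S_{\sigma(i)}$ for some $|\lambda|=1$, establishing uniqueness up to $\sim$ and permutation.

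The main obstacle I anticipate is in the uniqueness step, specifically the passage from the pointwise matching $V_{\sigma(i)} \sim U_i$ to the equality $P_i = P'_{\sigma(i)}$ of projections, rather than merely a containment of ranges. This equality is what allows one to pair the scalar phase on the unitary factor with the reciprocal phase on the partial-isometry factor; without it, one only obtains one-sided range inclusions insufficient to conclude $R_i = \bar\lambda S_{\sigma(i)}$.
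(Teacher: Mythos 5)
Your proof is correct and, for the uniqueness part (the only part the paper proves in detail), it follows essentially the same strategy: compress the two decompositions with the range/corange projections of the partial isometries, obtain a tensor equation forcing either vanishing or $V_j\sim U_i$, and use pairwise non-equivalence to extract a bijection matching the terms. You additionally spell out the equivalence with the defining form \eqref{eq:def-U-block-diag-A} and the mutual orthogonality of the $R_iR_i^*$ forced by the closure relations, both of which the paper leaves implicit; these additions are sound.
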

\begin{proof}
Consider two decompositions of a same operator in $\mathcal U_{block-diag}^A$ of the form of \eqref{eq:U-block-diag-Ri} 
\begin{equation}\label{eq:UiVj}
\sum_{i=1}^r U_i \otimes R_i = \sum_{j=1}^s V_j \otimes Q_j,
\end{equation}
with 
\begin{equation}\label{eq:RiQj}
\sum_{i=1}^r R_iR_i^* = \sum_{i=1}^r R_i^*R_i =\sum_{j=1}^s Q_jQ_j^* = \sum_{j=1}^s Q_j^*Q_j = I_k\,,
\end{equation}
$U_p \nsim U_q$ for all $p \neq q$ and $V_l \nsim V_m$ for all $l \neq m$.

For all $i$ in $\{1,\dots,r \}$ and $j$ in $\{1,\dots,s \}$, applying $I \otimes R_i^*$ on the left and $I \otimes Q_j^*Q_j$ on the right, Equation \eqref{eq:UiVj} becomes
$$U_i\otimes R_i^*R_iQ_j^*Q_j= V_j\otimes R_i^*Q_jQ_j^*Q_j =  V_j \otimes R_i^*Q_j\,.$$
This particularly means that
$$R_i^*Q_j=0 \quad \text{ or } \quad \exists \lambda_{ij},  \ \vert \lambda_{ij} \vert =1, \  U_i = \lambda_{ij}\, V_j \quad \text{ and }\quad 
R_i^*R_iQ_j^*Q_j = \dfrac1{\lambda_{ij}}\,  R_i^*Q_j\,.$$
Now note that we have
$$U_i\otimes R_i^*R_i=\sum_{j=1}^sU_i\otimes R_i^*R_iQ_j^*Q_j \neq 0\,.$$
This implies that at least one of the terms in the sum is non-trivial. Moreover, since $V_l \nsim V_m$ for all $l \neq m$, the operator $U_i$ can be in relation with only one of the $V_j$'s. Therefore, we obtain $r=s$ and for all $i$, there exist a unique $j$ such that $U_i = \lambda_{ij}\, V_j $ and $R_i^*R_iQ_j^*Q_j = 1/\lambda_{ij}\,  R_i^*Q_j$. After following the same strategy with $I \otimes Q_jQ_j^*$ on the left and $I \otimes R_i^*$ on the right, we now can deduce that $R_i = 1/\lambda_{ij} Q_j$. The result follows
\end{proof}

Another point of view on block-diagonal unitaries is the fact captured in the next proposition.

\begin{proposition}\label{prop:block-diag-matrix-unitary}
A bipartite unitary operation is block-diagonal if and only if it admits a block-singular value decomposition with respect to $B$:
\begin{equation}
\mathcal U_{block-diag}^A =  \mathcal U_{nk} \cap  \mathcal M_{block-diag}^A,
\end{equation}
where $ \mathcal M_{block-diag}^A$ is the set (see Appendix \ref{sec:block-svd})
$$\mathcal M_{block-diag}^A = \{X = \sum_{i=1}^k X_i \otimes e_if_i^*, \text{ for } X_i \in \mathcal M_n(\mathbb C) \text{ and orthonormal bases } e_i, f_i \text{ of } \mathbb C^k\}.$$
In particular, the sets $\mathcal U_{block-diag}^A$, $\mathcal U_{block-diag}^B$, and $\mathcal U_{block-diag}^A \cap \mathcal U_{block-diag}^B$ are algebraic varieties. 
\end{proposition}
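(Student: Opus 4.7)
The plan is a direct verification using orthonormality, combined with the algebraic characterization of the block-SVD from Appendix~\ref{sec:block-svd}. The statement splits naturally into a set-theoretic equality $\mathcal{U}^A_{block-diag} = \mathcal{U}_{nk} \cap \mathcal{M}^A_{block-diag}$ and an algebraic variety claim, and I would handle them in that order.

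The inclusion $\mathcal{U}^A_{block-diag} \subseteq \mathcal{U}_{nk} \cap \mathcal{M}^A_{block-diag}$ is tautological: by \eqref{eq:def-U-block-diag-A}, elements of $\mathcal{U}^A_{block-diag}$ are unitaries already presented in the block-SVD form, the blocks $X_i = U_i$ being unitary and hence, \emph{a fortiori}, in $\mathcal{M}_n(\mathbb{C})$.

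For the reverse inclusion, I would start from $X \in \mathcal{U}_{nk}$ of the form $X = \sum_{i=1}^k X_i \otimes e_i f_i^*$, with $X_i \in \mathcal{M}_n(\mathbb{C})$ and $\{e_i\}, \{f_i\}$ orthonormal bases of $\mathbb{C}^k$, and expand the two unitarity relations directly. Using $f_i^* f_j = \delta_{ij}$ and $e_i^* e_j = \delta_{ij}$ one reads off
$$XX^* = \sum_{i=1}^k X_i X_i^* \otimes e_i e_i^*, \qquad X^*X = \sum_{i=1}^k X_i^* X_i \otimes f_i f_i^*.$$
Matching these against $I_{nk} = \sum_i I_n \otimes e_i e_i^* = \sum_i I_n \otimes f_i f_i^*$ and invoking the linear independence of the rank-one operators $e_i e_i^*$ (resp.~$f_i f_i^*$) forces $X_i X_i^* = X_i^* X_i = I_n$ for every $i$, so each $X_i$ is unitary and $X \in \mathcal{U}^A_{block-diag}$.

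For the algebraic variety claim, I would invoke two facts: $\mathcal{U}_{nk}$ is cut out by the polynomial system $UU^* - I = 0$, and $\mathcal{M}^A_{block-diag}$ is a real algebraic variety, which is exactly the content of the block-SVD results in Appendix~\ref{sec:block-svd}. The intersection of two real algebraic varieties is again a real algebraic variety, giving the claim for $\mathcal{U}^A_{block-diag}$; the statement for $\mathcal{U}^B_{block-diag}$ is symmetric (swap the tensor factors), and for $\mathcal{U}^A_{block-diag} \cap \mathcal{U}^B_{block-diag}$ it is one further intersection. The only non-routine ingredient is the algebraic variety structure of $\mathcal{M}^A_{block-diag}$, which is the main obstacle and is precisely what the appendix is for; everything else reduces to bookkeeping with orthonormal bases.
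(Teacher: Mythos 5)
Your proof is correct and follows essentially the same route as the paper: expand $XX^*$ (and $X^*X$) using orthonormality of the bases, conclude each diagonal block must be the identity so the $X_i$ are unitary, and obtain the variety claim by intersecting $\mathcal U_{nk}$ with $\mathcal M_{block-diag}^A$, whose algebraic-variety structure is supplied by the appendix. The paper's own proof is a terser version of exactly this argument.
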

\begin{proof}
For $U \in  \mathcal M_{block-diag}^A$, write
\begin{equation}
UU^* = \sum_{i=1}^k X_iX_i^* \otimes e_ie_i^*.
\end{equation}
The above matrix is the identity if and only if each of its diagonal blocks $X_iX_i^*$ is the identity, and the claim follows. 
\end{proof}

Let us now investigate the relation between the two classes $\mathcal U_{block-diag}^A$ and $\mathcal U_{block-diag}^B$. We start by presenting an algorithm allowing to check if a unitary matrix $U$ in $\mathcal U_{nk} $ belongs to $\mathcal U_{block-diag}^A$. This key result relies on Theorem \ref{thm:block-SVD}.
\begin{proposition}\label{prop:CNS-block-diag-unitary}
Let $U$ be in $\mathcal U_{nk} $. Consider the operators $(X_\alpha)_{\alpha=1,\dots,n^2}$ in $\mathcal M_k(\mathbb C)$ defined, for all $\alpha=1,\dots,n^2$, by
$$X_{\alpha}:=[\mathrm{Tr}\otimes \id]((E_\alpha^*\otimes I)\, U)\,,$$
with $\{E_\alpha\}_{\alpha=1}^{n^2}$ an orthonormal basis of $\mathcal M_n(\mathbb C)$.
Then, $U$ belongs to $\mathcal U_{block-diag}^A$ if and only if the families $\{X_\alpha X_\beta^*\}_{\alpha,\beta=1}^{n^2}$, resp. $\{X_\alpha^* X_\beta\}_{\alpha,\beta=1}^{n^2}$, consist of commuting, normal operators. 
\end{proposition}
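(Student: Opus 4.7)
The plan is to expand $U$ in the orthonormal basis $\{E_\alpha\}$ of $\mathcal M_n(\mathbb C)$, writing $U = \sum_\alpha E_\alpha \otimes X_\alpha$, where the $X_\alpha$ are recovered via the given formula by orthonormality. Under this expansion, the condition $U \in \mathcal U_{block-diag}^A$ becomes a statement about the family $\{X_\alpha\}_\alpha$: by Proposition~\ref{prop:block-diag-matrix-unitary}, it is equivalent to the existence of orthonormal bases $\{e_i\}, \{f_i\}$ of $\mathbb C^k$ and scalars $c_{\alpha, i}$ with $X_\alpha = \sum_i c_{\alpha,i}\, e_i f_i^*$ \emph{simultaneously} for every $\alpha$. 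The proof reduces to showing this common biorthogonal form is characterized by the stated commuting-normal conditions.

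For the forward implication, I would decompose $U = \sum_i U_i \otimes e_i f_i^*$ and directly obtain $X_\alpha = \sum_i c_{\alpha,i}\, e_i f_i^*$ with $c_{\alpha,i} = \mathrm{Tr}(E_\alpha^* U_i)$. A one-line computation then gives
$$X_\alpha X_\beta^* = \sum_i c_{\alpha,i}\, \overline{c_{\beta,i}}\, e_i e_i^*, \qquad X_\alpha^* X_\beta = \sum_i \overline{c_{\alpha,i}}\, c_{\beta, i}\, f_i f_i^*,$$
both of which are diagonal in a fixed orthonormal basis, hence form commuting families of normal operators.

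For the converse, the key tool is the block singular value decomposition of Theorem~\ref{thm:block-SVD}. Assuming $\{X_\alpha X_\beta^*\}$ is a commuting family of normal operators, the spectral theorem for commuting normal operators produces an orthonormal basis $\{e_i\}$ of $\mathbb C^k$ simultaneously diagonalizing the family; the second hypothesis similarly yields $\{f_i\}$ diagonalizing $\{X_\alpha^* X_\beta\}$. Applying Theorem~\ref{thm:block-SVD} to $U$ then produces a common biorthogonal decomposition $X_\alpha = \sum_i c_{\alpha, i}\, e_i f_i^*$ (in the same bases, for all $\alpha$). Setting $U_i := \sum_\alpha c_{\alpha, i}\, E_\alpha$, one obtains $U = \sum_i U_i \otimes e_i f_i^*$; unitarity of $U$ then forces each $U_i$ to be unitary since $UU^* = \sum_i U_i U_i^* \otimes e_i e_i^* = I$ and similarly $U^*U = I$, so $U \in \mathcal U_{block-diag}^A$.

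The main obstacle is the alignment/phase step in the converse: individually each $X_\alpha$ has its own SVD, in bases that a priori differ from the common $\{e_i\}, \{f_i\}$ by unitary rotations inside singular-value eigenspaces. One must show that these rotations can be absorbed consistently into the scalars $c_{\alpha, i}$ to produce a \emph{single} pair of bases realizing the decomposition of every $X_\alpha$ at once. This uniformity, especially in the presence of degenerate singular values where eigenspace ambiguity is genuine, is precisely what the block SVD theorem of the Appendix is engineered to deliver; invoking it cleanly handles the multiplicity and phase bookkeeping that would otherwise complicate the argument.
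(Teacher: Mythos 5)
Your proposal is correct and follows essentially the same route as the paper: both directions reduce to the block singular value decomposition of Theorem~\ref{thm:block-SVD}, after which unitarity of the blocks $U_i$ is extracted from $UU^*=U^*U=I$ exactly as in Proposition~\ref{prop:block-diag-matrix-unitary}. Your explicit forward computation and the remark on phase/multiplicity alignment merely spell out details the paper delegates wholesale to the appendix theorem.
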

\begin{proof}
Fix $U \in \mathcal U_{nk} $. Thanks to Theorem \ref{thm:block-SVD}, the matrix $U$ has a block-diagonal SVD, that is, there exists matrices $U_1, \ldots U_p \in \mathcal M_n(\mathbb C)$ and partial isometries $R_1, \ldots, R_p \in \mathcal M_k(\mathbb C)$ having orthogonal initial, resp.~ final, projections such that
\begin{equation}\label{eq:X-block-SVD}
U = \sum_{i=1}^p U_i \otimes R_i,
\end{equation}
 if and only if the families $\{X_\alpha X_\beta^*\}_{\alpha,\beta=1}^{n^2}$, resp. $\{X_\alpha^* X_\beta\}_{\alpha,\beta=1}^{n^2}$, consist of commuting, normal operators. Then the unitarity of the $(U_i)_{i=1,\dots,p}$'s directly follows from the unitarity of $U$.
\end{proof}

As a first application of the above result let us present the relation between $U_{block-diag}^A$ and $U_{block-diag}^B$ in the case of the qubit space $\mathbb C^2$.

\begin{proposition}
If $n=2$, then
$$\mathcal U_{block-diag}^B \subseteq \mathcal U_{block-diag}^A.$$
\end{proposition}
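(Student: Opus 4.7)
The plan is to take an arbitrary $U \in \mathcal U_{block-diag}^B$, written in its defining form
$U = e_1 f_1^* \otimes U_1 + e_2 f_2^* \otimes U_2$
with orthonormal bases $\{e_i\}_{i=1}^2,\{f_i\}_{i=1}^2$ of $\mathbb C^2$ and unitaries $U_1, U_2 \in \mathcal U_k$, and produce an explicit $A$-block-diagonal decomposition by spectrally resolving the single \emph{relative} unitary $W := U_2 U_1^* \in \mathcal U_k$.

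Concretely, I would write a spectral decomposition $W = \sum_{\ell=1}^k \lambda_\ell\, v_\ell v_\ell^*$ with $\{v_\ell\}$ an orthonormal basis of $\mathbb C^k$ and $|\lambda_\ell|=1$, and set $\tilde v_\ell := U_1^* v_\ell$ (another orthonormal basis of $\mathbb C^k$, since $U_1$ is unitary). Then $U_1 = \sum_\ell v_\ell \tilde v_\ell^*$ and $U_2 = WU_1 = \sum_\ell \lambda_\ell\, v_\ell \tilde v_\ell^*$, so substituting back into the expression for $U$ and collecting on the index $\ell$ yields
\begin{equation*}
U = \sum_{\ell=1}^k M_\ell \otimes v_\ell \tilde v_\ell^*, \qquad M_\ell := e_1 f_1^* + \lambda_\ell\, e_2 f_2^*.
\end{equation*}
Each $M_\ell$ maps the orthonormal basis $\{f_1, f_2\}$ of $\mathbb C^2$ to the orthonormal pair $\{e_1, \lambda_\ell e_2\}$, hence $M_\ell \in \mathcal U_2$. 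Combined with the orthonormality of $\{v_\ell\}$ and $\{\tilde v_\ell\}$, this exhibits $U$ in the form \eqref{eq:def-U-block-diag-A}, so $U \in \mathcal U_{block-diag}^A$.

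The hypothesis $n=2$ intervenes in exactly one place: we get away with diagonalizing a \emph{single} operator $W = U_2 U_1^*$, whereas for $n \geq 3$ the analogous argument would require simultaneous diagonalization of the $n-1$ operators $U_i U_1^*$, which generically fails without commutativity. Consequently, there is no real obstacle at $n=2$ once one has spotted the correct change of basis; the only point demanding verification is that each coefficient $M_\ell$ actually lies in $\mathcal U_2$ rather than in some larger set of $2\times 2$ matrices, and this is immediate from $|\lambda_\ell|=1$.
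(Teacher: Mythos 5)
Your proof is correct, and it takes a different route from the paper's. The paper proves this inclusion by appealing to its general criterion for membership in $\mathcal U_{block-diag}^A$ (Proposition \ref{prop:CNS-block-diag-unitary}, which rests on the block-SVD machinery of Theorem \ref{thm:block-SVD} in the appendix): writing $U = e_1f_1^*\otimes U_1 + e_2f_2^*\otimes U_2$, the relevant operator families reduce to $\{I, U_1U_2^*, U_2U_1^*\}$ and $\{I, U_1^*U_2, U_2^*U_1\}$, which consist of commuting normal operators because each unitary commutes with its own adjoint. You instead give a direct, self-contained construction: spectrally resolving $W = U_2U_1^*$ and regrouping yields the explicit $A$-block-diagonal form $U = \sum_\ell M_\ell \otimes v_\ell\tilde v_\ell^*$ with $M_\ell = e_1f_1^* + \lambda_\ell e_2f_2^* \in \mathcal U_2$, and all the required checks ($U_1 = \sum_\ell v_\ell\tilde v_\ell^*$, $U_2 = \sum_\ell \lambda_\ell v_\ell\tilde v_\ell^*$, unitarity of $M_\ell$, orthonormality of $\{v_\ell\}$ and $\{\tilde v_\ell\}$) go through. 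The two arguments hinge on the same underlying fact --- for $n=2$ only the single relative unitary $U_2U_1^*$ needs diagonalizing, whereas for $n\geq 3$ one would need the family $\{U_iU_j^*\}$ to be simultaneously diagonalizable, which fails in general (cf.\ the paper's $n=3$ counterexample) --- but yours buys an explicit decomposition without invoking the appendix, while the paper's buys uniformity with its treatment of the general $n$ case via the same criterion.
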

\begin{proof}
Any element $U\in \mathcal U_{block-diag}^{B}$ can be written as
$$U=e_1f_1^* \otimes U_1+ e_2f_2^* \otimes U_2.$$
In order to apply Proposition \ref{prop:CNS-block-diag-unitary} we consider the orthonormal basis $\{E_{ij}=f_ie_j^*, i,j=1,2\}$. In particular
$$X_{ij}=\delta_{ij}U_j,$$
for all $i,j=1,2$. Consider now the sets 
\begin{align*}
&\{X_{ij}X_{kl}^*, i,j,k,l=1,2\}=\{I,U_1U_2^*,U_2U_1^*\} \text{ and } \\
&\{X_{ij}^*X_{kl}, i,j,k,l=1,2\}=\{I,U_1^*U_2,U_2^*U_1\}.
\end{align*}
It is obvious that these sets consist of commuting, normal operators, finishing the proof.
\end{proof}
\begin{corollary}
In the case $n=k=2$, we have
$$\mathcal U^{A}_{block-diag}  = \mathcal U^{B}_{block-diag}.$$
\end{corollary}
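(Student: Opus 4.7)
The plan is to obtain the corollary essentially for free by combining the preceding proposition with a symmetry argument that swaps the two tensor factors.

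The preceding proposition shows that whenever the $A$-factor has dimension $n=2$, one has $\mathcal U_{block-diag}^B \subseteq \mathcal U_{block-diag}^A$. Its proof only used that $n=2$, making the sum over the index of the ONB $\{E_{ij}\}$ of $\mathcal M_n(\mathbb C)$ small enough for the normality/commutativity test of Proposition \ref{prop:CNS-block-diag-unitary} to be trivially satisfied. Thus the statement is really a one-sided result about the dimension of the $A$-factor; once $n=2$, any $B$-block-diagonal unitary is automatically also $A$-block-diagonal.

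To get the reverse inclusion in the case $n=k=2$, I would invoke the symmetry between the tensor factors. Concretely, conjugation by the flip operator $F_2$ from \eqref{eq:flip} is an involution on $\mathcal U_{nk}=\mathcal U_{2\cdot 2}$ that exchanges $\mathcal U_{block-diag}^A$ and $\mathcal U_{block-diag}^B$: if $U=\sum_i U_i\otimes e_if_i^*\in\mathcal U_{block-diag}^A$ with $U_i\in\mathcal U_2$ and $\{e_i\},\{f_i\}$ ONBs of $\mathbb C^2$, then $F_2 U F_2 = \sum_i e_if_i^*\otimes U_i\in \mathcal U_{block-diag}^B$, and similarly in the other direction. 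Applying the previous proposition with the roles of $A$ and $B$ interchanged (which is legitimate precisely because $k=2$ now plays the role that $n=2$ played there) yields $\mathcal U_{block-diag}^A \subseteq \mathcal U_{block-diag}^B$.

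Combining both inclusions gives the claimed equality. There is no real obstacle here; the only thing one has to be careful about is stating the symmetry cleanly, since the proposition was formulated asymmetrically in $n$ and $k$. One can either re-run the proof of the proposition verbatim with the orthonormal basis of $\mathcal M_k(\mathbb C)$ (which has cardinality $k^2=4$, again making the test trivial), or invoke the flip conjugation argument above. Either way the proof fits in a couple of lines.
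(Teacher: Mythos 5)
Your proposal is correct and matches the paper's intended (implicit) argument: the corollary is meant to follow immediately from the preceding proposition, which gives $\mathcal U_{block-diag}^B \subseteq \mathcal U_{block-diag}^A$ for $n=2$, combined with the reverse inclusion obtained by exchanging the roles of the two tensor factors when $k=2$ as well. Your flip-conjugation formulation of the symmetry (conjugation by $F_2$ interchanges the two classes, so the proposition applied to $F_2 U F_2$ yields $\mathcal U_{block-diag}^A \subseteq \mathcal U_{block-diag}^B$) is a clean way to make this rigorous.
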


Note however that the inclusion in the above result is strict (in the case $n \geq 3$, $k=2$). For $n=3$, and arbitrary $k \geq 2$, we construct next an example of a unitary operator being in $\mathcal U^{B}_{block-diag}$ but not in $\mathcal U^{A}_{block-diag}$. 

Consider two non commuting unitary operators $V$, $W$ in $\mathcal M_k (\mathbb C)$ and an orthonormal basis $( e_i)$  of $\mathbb C^3$ and define
$$U= e_1e_1^* \otimes I +e_2e_2^* \otimes V+ e_3e_3^* \otimes W.$$
By construction, the operator $U$ belongs to $\mathcal U^{B}_{block-diag}$. Let us now check that it is not in $\mathcal U^{A}_{block-diag}$. Consider the orthonormal basis $\{E_ij=e_ie_j^*, i,j=1,2,3\}$ of $\mathcal M_3(\mathbb C)$, we have for example 
$$X_{11}= I,\quad  X_{22}= V, \quad X_{33}= W.$$
We immediately note that the operators $X_{11}^*X_{22}=V$ and $X_{11}^*X_{33}=W$ do not commute. Since this commutativity is necessary to be in $\mathcal U^{A}_{block-diag}$ (Proposition \ref{prop:CNS-block-diag-unitary}), we conclude that $U$ doesn't belong to $\mathcal U^{A}_{block-diag}$.

Another class of interesting block-diagonal (with respect to the second system, $B$) operators are {circulant} unitary matrices. 

\begin{proposition}
Let $X \in \mathcal M_{nk}(\mathbb C)$ be a circulant matrix. Then $X \in \mathcal M_{block-diag}^B$. In particular, any circulant unitary operator $U$ is block-diagonal with respect to the $B$ factor. 
\end{proposition}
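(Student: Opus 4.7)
My plan is to exploit the fact that every circulant matrix of size $nk$ is diagonalized by the Discrete Fourier Transform $F_{nk}$, and to observe that the DFT eigenvectors, after we identify $\mathbb C^{nk}$ with $\mathbb C^n\otimes\mathbb C^k$, factor as simple tensors in a way that collapses the spectral sum into only $n$ distinct ``angular'' projectors on the $A$ factor.

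First, I would record the standard fact: if $X\in\mathcal M_{nk}(\mathbb C)$ is circulant, then with $\omega=e^{2\pi i/(nk)}$ and $v_s:=(nk)^{-1/2}\sum_{I=0}^{nk-1}\omega^{Is}e_I$, we have $X=\sum_{s=0}^{nk-1}\lambda_s v_s v_s^*$ for some scalars $\lambda_s\in\mathbb C$. Then I would identify $\mathbb C^{nk}\cong\mathbb C^n\otimes\mathbb C^k$ through $e_{ik+j}\leftrightarrow e_i\otimes f_j$ for $0\le i<n$, $0\le j<k$, and compute
\begin{equation*}
v_s=\frac{1}{\sqrt{nk}}\sum_{i,j}\omega^{(ik+j)s}\,e_i\otimes f_j=u_s\otimes w_s,
\end{equation*}
where $u_s=n^{-1/2}\sum_i(\omega^k)^{is}e_i$ and $w_s=k^{-1/2}\sum_j\omega^{js}f_j$.

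The key observation, and the only step that requires any work, is that $\omega^k=e^{2\pi i/n}$, so $u_s$ depends only on the residue of $s$ modulo $n$. Hence the family $\{u_s\}_{s=0}^{nk-1}$ consists of only $n$ distinct vectors, which are precisely the columns of the size-$n$ DFT matrix and therefore form an orthonormal basis $u_0,\ldots,u_{n-1}$ of $\mathbb C^n$.

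Finally, I would group the spectral sum by the class $s_2:=s\bmod n$, writing
\begin{equation*}
X=\sum_{s_2=0}^{n-1}u_{s_2}u_{s_2}^*\otimes Y_{s_2},\qquad Y_{s_2}:=\sum_{\substack{0\le s<nk\\ s\equiv s_2\,(\mathrm{mod}\,n)}}\lambda_s\,w_sw_s^*\in\mathcal M_k(\mathbb C).
\end{equation*}
This is exactly of the form $\sum_{i}e_if_i^*\otimes X_i$ with $e_i=f_i=u_{i-1}$, so $X\in\mathcal M_{block-diag}^B$. For the unitary part of the statement, I would simply invoke Proposition \ref{prop:block-diag-matrix-unitary}, which gives $\mathcal U_{block-diag}^B=\mathcal U_{nk}\cap\mathcal M_{block-diag}^B$, so a unitary circulant $U$ automatically lies in $\mathcal U_{block-diag}^B$. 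No real obstacle is expected; the only point deserving care is the bookkeeping of the index identification $I=ik+j$ and checking that $\omega^k$ is an $n$-th root of unity, which is what forces the $B$-block-diagonal (rather than $A$-block-diagonal) conclusion.
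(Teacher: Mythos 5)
Your proof is correct, and it takes a genuinely different route from the paper's. The paper does not touch the spectral decomposition of $X$ at all: it instead shows that for every $A \in \mathcal M_k(\mathbb C)$ the partial contraction $X_A = [\mathrm{id} \otimes \mathrm{Tr}](X \cdot I \otimes A)$ is itself a circulant $n \times n$ matrix (by a direct index computation exploiting $[(j'-i')k + (t-s)]_{nk} = [(j-i)k+(t-s)]_{nk}$ when $j'-i' \equiv j-i \pmod n$), so that all these contractions are simultaneously diagonal in the size-$n$ Fourier basis; it then invokes the commuting-normal-families criterion of Theorem \ref{thm:block-SVD} to conclude. You instead argue constructively: you write $X = \sum_s \lambda_s v_s v_s^*$ in its Fourier eigenbasis, observe that under $e_{ik+j} \leftrightarrow e_i \otimes f_j$ each DFT vector factors as $v_s = u_s \otimes w_s$ with $u_s$ depending only on $s \bmod n$ (since $\omega^k$ is a primitive $n$-th root of unity), and group the sum by residue class to exhibit the decomposition $X = \sum_{s_2} u_{s_2}u_{s_2}^* \otimes Y_{s_2}$ explicitly. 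Your version buys a self-contained argument that bypasses the block-SVD machinery entirely and produces the blocks $Y_{s_2}$ in closed form; the paper's version buys brevity by reusing its general characterization theorem, at the cost of leaving the decomposition implicit. Both hinge on the same underlying fact (Fourier diagonalization of circulants), your index bookkeeping $I = ik+j$ matches the paper's convention for blocks, and your appeal to Proposition \ref{prop:block-diag-matrix-unitary} for the unitary case is exactly what the paper does as well.
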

\begin{proof}
In the proof of this result, since we are going to make us of circularity properties, the indices for the matrices we consider are starting at zero. Recall that a matrix $X \in \mathcal M_{nk}(\mathbb C)$ is circulant iff 
$$\forall \, 0 \leq i,j < nk, \qquad X_{ij} = x_{[j-i]_n},$$
where $x \in \mathbb C^{nk}$ is the first row of $X$ and we write $[a]_p = a \mod p$, for any integers $a$ and $p$. Circulant matrices are known to be precisely the matrices which are diagonal in the Fourier basis. Recall that the Fourier matrix (which implements the change of bases between the canonical basis and the Fourier basis) is given by
$$\forall \, 0 \leq i,j < p, \qquad F_p(i,j) = \omega^{ij},$$
where $\omega = \exp(2\pi i /p)$ is a primitive $p$-th root of unity. Finally, since we are going to work with matrices living in a tensor product space, the element $(s,t)$ of the block $(i,j)$ of a matrix $A \in \mathcal M_n(\mathbb C) \otimes \mathcal M_k(\mathbb C)$ is $A_{i\cdot k + s, j \cdot k + t}$.

Now that the notation is fixed, consider a circular unitary matrix $X \in \mathcal M_{nk}(\mathbb C)$ and let $x$ be its first row. For any matrix $A \in \mathcal M_k(\mathbb C)$, define
$$X_A = [\mathrm{id} \otimes \mathrm{Tr}](X \cdot I \otimes A).$$
We show next that the matrices $X_A$ are all circulant, fact which, by Theorem \ref{thm:block-SVD}, suffices to conclude, since all the matrices appearing in the theorem will be simultaneously diagonalizable in the Fourier basis. 

For all $0 \leq i,j < n$, we have
\begin{align*}
X_A(i,j) &= \sum_{s,t = 0}^{k-1} U_{ik+s, jk+t} A_{ts}\\
&= \sum_{s,t = 0}^{k-1} u_{[(j-i)k+(t-s)]_{nk}} A_{ts}.
\end{align*}
The crucial observation is that the above quantity only depends on the difference $j-i$: indeed, if $[(j-i) = (j'-i')]_n$, then there exists some $r$ such that $j'-i' = j-i + nr$, and thus
$$[(j'-i')k+(t-s)]_{nk} = [(j-i)k+(t-s) + nk r]_{nk} = [(j'-i')k+(t-s)]_{nk},$$
showing that the matrix $X_A$ is circular, and finishing the proof. 

The statement about circular unitary operators follows from the general case using Proposition \ref{prop:block-diag-matrix-unitary}.
\end{proof}

We turn next to the study of the unitary operators which are block-diagonal with respect to both systems $A$ and $B$.

\begin{proposition}
A unitary operator $U$ is block diagonal with respect to both tensor factors $A$ and $B$ (i.e.~ $U \in \mathcal U^{A}_{block-diag}\cap \mathcal U^{B}_{block-diag}$) iff 
$$U=\sum_{i=1}^s\sum_{j=1}^r \lambda_{ij} \ Q_i \otimes R_j,$$
where, for all $i=1,\dots,s,\, j=1,\dots,r$, $\vert \lambda_{ij}\vert=1$,
and where $(Q_i)_{i=1,\dots,s}$ , $(R_j)_{j=1,\dots,r}$ are two family of orthogonal partial isometries respectively on $\mathbb C ^n$ and $\mathbb C ^k$ satisfying
\begin{equation}\label{eq:QiRj}
\sum_{i=1}^s Q_iQ_i^* = \sum_{i=1}^s Q_i^*Q_i = I_n \quad \text{, }\quad \sum_{j=1}^r R_jR_j^* = \sum_{j=1}^r R_j^*R_j =I_k \, .
\end{equation}
\end{proposition}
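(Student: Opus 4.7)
The proof splits naturally into the easy reverse implication and the structural forward implication. I will handle the forward direction first with a basis-picking argument comparing the two block-diagonal decompositions of $U$, and then verify the reverse direction by regrouping the double sum.

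For the forward direction, I will start from the two block-diagonal decompositions available by hypothesis:
$$U = \sum_{i=1}^k U_i \otimes e_i f_i^* = \sum_{j=1}^n e'_j (f'_j)^* \otimes V_j,$$
with $U_i \in \mathcal U_n$, $V_j \in \mathcal U_k$ and ONBs $\{e_i\}, \{f_i\}$ of $\mathbb C^k$, resp.\ $\{e'_j\}, \{f'_j\}$ of $\mathbb C^n$. The key step is to apply both expressions to the product vector $f'_j \otimes f_i$: each side becomes a simple tensor of unit vectors, and the resulting equality
$$U_i f'_j \otimes e_i = e'_j \otimes V_j f_i$$
together with the rigidity of rank-one tensors of nonzero vectors forces each factor to agree up to a common phase $\lambda_{ij}$ with $|\lambda_{ij}|=1$, namely $U_i f'_j = \lambda_{ij} e'_j$ and $V_j f_i = \lambda_{ij} e_i$. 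Substituting $U_i = \sum_j \lambda_{ij} e'_j (f'_j)^*$ back into the first decomposition yields the stated form with rank-one partial isometries $Q_j := e'_j (f'_j)^*$ on $\mathbb C^n$ and $R_i := e_i f_i^*$ on $\mathbb C^k$; the orthogonality and resolution-of-identity relations \eqref{eq:QiRj} then follow directly from the fact that $\{e_i\}, \{f_i\}, \{e'_j\}, \{f'_j\}$ are ONBs.

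For the reverse implication, I will group the double sum as $U = \sum_i Q_i \otimes \tilde R_i$ with $\tilde R_i := \sum_j \lambda_{ij} R_j$. Using that the $R_j$ have mutually orthogonal initial projections and mutually orthogonal final projections (so $R_j R_{j'}^* = 0$ and $R_j^* R_{j'} = 0$ for $j \neq j'$) together with $|\lambda_{ij}|=1$, a direct computation gives $\tilde R_i \tilde R_i^* = \sum_j R_j R_j^* = I_k$ and similarly $\tilde R_i^* \tilde R_i = I_k$, so $\tilde R_i \in \mathcal U_k$. Refining each $Q_i$ into rank-one pieces via an SVD $Q_i = \sum_a g^{(i)}_a (h^{(i)}_a)^*$ compatible with its initial/final subspaces (the aggregated families $\{g^{(i)}_a\}$ and $\{h^{(i)}_a\}$ are then ONBs of $\mathbb C^n$, by \eqref{eq:QiRj}) places $U$ in the canonical $\mathcal U^{B}_{block-diag}$ form; the symmetric grouping $U = \sum_j (\sum_i \lambda_{ij} Q_i) \otimes R_j$ places it in $\mathcal U^{A}_{block-diag}$.

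I expect no serious obstacle in this proof: the phase-rigidity of rank-one tensors does all the heavy lifting in the forward direction, pinning down both decompositions against a common combinatorial scaffold. The only mild care required is in the reverse direction, where one must not conflate the partial isometries $Q_i, R_j$ (possibly of higher rank) with the rank-one pieces appearing in the definitions of $\mathcal U^{A,B}_{block-diag}$; this is handled by the straightforward refinement described above.
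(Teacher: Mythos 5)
Your proof is correct. It rests on the same underlying idea as the paper's---play the two block-diagonal decompositions of $U$ against each other and extract unimodular proportionality constants---but the mechanics are genuinely different. The paper starts from the grouped decompositions of Proposition \ref{prop:uniqueness-decomp-cl}, namely $U=\sum_i Q_i\otimes V_i=\sum_j U_j\otimes R_j$ with possibly higher-rank partial isometries $Q_i,R_j$ and unitaries $V_i,U_j$; it left-multiplies by $Q_i^*\otimes R_j^*$ to obtain $Q_i^*Q_i\otimes R_j^*V_i=Q_i^*U_j\otimes R_j^*R_j$, reads off a scalar $\mu_{ij}$ from the proportionality of simple tensors of operators, and then sums over $i$ using $\sum_i Q_i^*Q_i=I_n$ to get $U_j=\sum_i\overline{\mu_{ij}}\,Q_i$, with unitarity of $U_j$ forcing $|\mu_{ij}|=1$. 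You instead work directly with the rank-one decompositions from the definitions \eqref{eq:def-U-block-diag-A}--\eqref{eq:def-U-block-diag-B} and evaluate $U$ on the product basis vectors $f'_j\otimes f_i$; the rigidity of simple tensors of unit vectors then hands you the phases $\lambda_{ij}$ at the vector level (and indeed the same $\lambda_{ij}$ appears in both factor relations, as you claim). Your route is more elementary and self-contained, since it does not invoke the uniqueness/structure result of Proposition \ref{prop:uniqueness-decomp-cl}; the trade-off is that it produces the decomposition with rank-one $Q$'s and $R$'s rather than the canonical grouped one, which is immaterial for the statement as written. You also spell out the converse carefully---checking that $\sum_j\lambda_{ij}R_j$ is unitary via the mutual orthogonality of the $R_j$, and refining the possibly higher-rank partial isometries into rank-one pieces to match the definitions of $\mathcal U^{A}_{block-diag}$ and $\mathcal U^{B}_{block-diag}$---a direction the paper leaves essentially implicit.
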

\begin{proof}
Let $U$ be an element in the intersection $\mathcal U_{block-diag}^A \cap \mathcal U_{block-diag}^B $. Then, $U$ admits both decompositions 
\begin{equation*}
U = \sum_{i=1}^s Q_i \otimes V_i= \sum_{j=1}^r U_j \otimes R_j\,.
\end{equation*}
Applying $Q_i^* \otimes R_j^*$ on the left, we obtain
$$ Q_i^*Q_i \otimes R_j^*V_i =Q_i^* U_j \otimes R_j^* R_j\,.$$
Then there exists $\mu_{ij}\neq 0$ such that 
$$Q_i^*Q_i = \mu_{ij}  Q_i^* U_j  \qquad \text{ and }\qquad R_j^*V_i = \dfrac1{\mu_{ij} }\,R_j^* R_j\,.$$
Now since the $Q_i$'s and the $R_j$'s satisfy \eqref{eq:QiRj} we end up with
$$ U_j = \sum_{i=1}^s \overline{\mu_{ij} } \ Q_i \qquad \text{ and }\qquad V_i =  \sum_{j=1}^r \dfrac1{\overline{\mu_{ij} }}\ R_j\,.$$
Since the operators $U_j$ and $V_i$ are unitary, we conclude that $\vert \mu_{ij} \vert=1$ and that gives the result.
\end{proof}

Finally, we compute next the (real) dimension of $\mathcal U_{block-diag}^A$ and $\mathcal U_{block-diag}^A \cap \mathcal U_{block-diag}^B$.
\begin{proposition}
The real dimension of the algebraic variety $\mathcal U_{block-diag}^A$ is 
$$\dim \mathcal U_{block-diag}^A = \begin{cases}
k^2 & \qquad \text{ if } n=1\\
k(n^2+2k-2) & \qquad \text{ if } n>1.
\end{cases}$$

The real dimension of the algebraic variety $\mathcal U_{block-diag}^A \cap \mathcal U_{block-diag}^B$
$$\dim (\mathcal U_{block-diag}^A \cap \mathcal U_{block-diag}^B)= \begin{cases}
(nk)^2& \qquad \text{ if } \min(n,k)=1\\
 2n^2+ 2k^2+nk -2n-2k& \qquad \text{ if } \min(n,k)>1.
\end{cases}
$$
\end{proposition}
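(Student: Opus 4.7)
The plan is to realize each variety as the image of an explicit smooth parametrization, and to compute its real dimension as $(\dim \text{source}) - (\dim \text{generic fiber})$, where the generic fiber is identified with an orbit under a natural torus action that encodes the ambiguities of the parametrization.

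For $\mathcal U_{block-diag}^A$ with $n \ge 2$, I would introduce the smooth, surjective map
\begin{equation*}
\Phi:\mathcal U_n^k \times \mathcal U_k \times \mathcal U_k \longrightarrow \mathcal U_{block-diag}^A, \qquad \bigl((U_i)_{i=1}^k, E, F\bigr) \longmapsto \sum_{i=1}^k U_i \otimes (Ee_i)(Fe_i)^*,
\end{equation*}
where $\{e_i\}$ is the canonical basis of $\mathbb C^k$; its source has real dimension $kn^2 + 2k^2$. The torus $U(1)^{2k}$ acts freely on the source by
\begin{equation*}
\bigl((U_i), E, F\bigr) \longmapsto \bigl((\bar\alpha_i \beta_i U_i), E D_\alpha, F D_\beta\bigr), \qquad D_\alpha = \mathrm{diag}(\alpha_i),\; D_\beta = \mathrm{diag}(\beta_i),
\end{equation*}
and this action preserves $\Phi$. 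Proposition~6.2 tells us that whenever the $U_i$'s are pairwise non-proportional (a condition defining a Zariski-open dense subset of $\mathcal U_n^k$ as soon as $n \ge 2$), the fiber of $\Phi$ is exactly this torus orbit, together with its images under the finite permutation group on indices. The generic fiber therefore has real dimension $2k$, giving
\begin{equation*}
\dim \mathcal U_{block-diag}^A = kn^2 + 2k^2 - 2k = k(n^2 + 2k - 2).
\end{equation*}
When $n = 1$, every element of $\mathcal U_k$ lies in $\mathcal U_{block-diag}^A$ (take the SVD with unimodular singular values), so the variety equals $\mathcal U_k$ and has dimension $k^2$.

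For the intersection with $\min(n,k) \ge 2$, I would invoke the characterization from the preceding proposition, writing each element as $U = \sum_{i,j} \lambda_{ij}\, Q_i \otimes R_j$. The top stratum is the one in which all partial isometries $Q_i, R_j$ have rank one (so $s = n$, $r = k$); the strata indexed by coarser rank-compositions of $n$ and $k$ have strictly smaller dimension. On the top stratum we set $Q_i = u_i v_i^*$ and $R_j = w_j x_j^*$ for orthonormal bases of $\mathbb C^n$ and $\mathbb C^k$ respectively, and obtain the parametrization
\begin{equation*}
\Psi: \mathcal U_n \times \mathcal U_n \times \mathcal U_k \times \mathcal U_k \times U(1)^{nk} \longrightarrow \mathcal U_{block-diag}^A \cap \mathcal U_{block-diag}^B,
\end{equation*}
whose source has real dimension $2n^2 + 2k^2 + nk$. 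The ambiguity is the free $U(1)^{2n+2k}$-action
\begin{equation*}
(u_i, v_i, w_j, x_j, \lambda_{ij}) \longmapsto (\alpha_i u_i, \beta_i v_i, \gamma_j w_j, \delta_j x_j, \bar\alpha_i \beta_i \bar\gamma_j \delta_j \lambda_{ij}).
\end{equation*}
Applying Proposition~6.2 simultaneously to the $A$- and $B$-block decompositions shows that the generic fiber of $\Psi$ coincides with this torus orbit up to discrete permutations of $i$ and $j$. Subtracting the orbit dimension yields
\begin{equation*}
\dim(\mathcal U_{block-diag}^A \cap \mathcal U_{block-diag}^B) = 2n^2 + 2k^2 + nk - 2n - 2k.
\end{equation*}
When $\min(n,k) = 1$, the intersection coincides with all of $\mathcal U_{nk}$, of dimension $(nk)^2$.

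The main obstacle lies in checking that the generic fiber equals exactly the identified torus orbit. For $\Phi$, this is essentially Proposition~6.2, but one still needs the remark that the pairwise non-proportionality condition on $(U_1,\dots,U_k)$ is Zariski-open dense (which is automatic for $n \ge 2$). For $\Psi$, the delicate points are (i) running Proposition~6.2 on both tensor factors simultaneously in order to exclude any extra continuous degeneracy in a joint $A$-$B$ decomposition, and (ii) performing the stratification argument by rank-compositions of $n$ and $k$ to confirm that any stratum involving a partial isometry of rank $\ge 2$ is parametrized by strictly fewer free parameters than the all-rank-one stratum, so that the top stratum dictates the dimension of the variety.
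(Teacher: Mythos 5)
Your argument is correct and takes essentially the same route as the paper: a surjective parametrization by products of unitary groups (and a phase torus for the intersection), with the dimension obtained by subtracting the dimension of the generic fiber, identified via the uniqueness of the block decomposition as a torus orbit up to permutations — quotienting by your free $U(1)$-actions is exactly the paper's passage to the flag manifolds $\mathcal U_k/\mathcal U_1^k$ in the rigorous version it defers to in the appendix. If anything you are more explicit than the paper about the fiber identification and about the need to check that the degenerate (higher-rank partial isometry, or proportional-blocks) strata have strictly smaller dimension, a point the paper leaves to the reader.
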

\begin{proof}

 Let us first perform a heuristic parameter counting for a generic element
$$\mathcal U_{block-diag}^A \ni U = \sum_{i=1}^k U_i \otimes e_i f_i^*.$$
The choice of the two orthonormal bases $\{e_i\}$ and $\{f_i\}$ in \eqref{eq:X-block-SVD} corresponds to a total of $2k^2$ real parameters, since $\dim_{\mathbb R} \mathcal U_k = k^2$. Each matrix $U_i$ accounts for $n^2$ real parameters, so, in total, we get $kn^2$ extra real parameters. However, in each term $U_i \otimes e_if_i^*$, two of the three complex phases of $X_i, e_i, f_i$ are redundant, so we have over counted $2k$ real parameters. We conclude that the real dimension of $\mathcal U_{block-diag}^A$ should be $2k^2+kn^2-2k$. The above reasoning over-counts in the case $n=1$. Indeed, in that case one can ignore the matrices $U_i$ (the phase can be absorbed in the $e_if_i^*$ part), and we are left with an unitary matrix, which counts for $k^2$ real parameters. The rigourous proof of this result is very similar to the one of Proposition \ref{prop:dim-M-block-diag}, and is left to the reader. 

Let us now find the dimension of the intersection. Similarly, let us count parameters for a generic element of the form
$$
\mathcal U_{block-diag}^A \cap \mathcal U_{block-diag}^B \ni U=\sum_{i=1}^n\sum_{j=1}^k \lambda_{ij} e_if_i^* \otimes g_jh_j^*
$$
where, for all $i=1,\dots,n,\, j=1,\dots,k$, $\vert \lambda_{ij}\vert=1$, and $(e_i)$, $(f_i)$, $(g_j)$ and $(h_j)$ are orthonormal bases of $\mathbb C^n$ and $\mathbb C^k$, respectively.

The choice of the four orthonormal bases corresponds to a total of $2n^2+2k^2$ real parameters, the choice of the coefficients $nk=2nk-nk$. Since, in  $\lambda_{ij} e_if_i^* \otimes g_jh_j^*$, all the phases can be absorbed in the coefficient $\lambda_{ij}$, we have over counted $2n+2k$ real parameters. Again, the case $\min(n,k)=1$ is degenerated, since any unitary operator is of the desired form. 
\end{proof}

\section{Further relations between unitary classes}\
\label{sec:relations}

As discussed in the introduction, the following chain of inclusions holds:
\begin{equation}\label{eq:chain-inclusions}
\mathcal U_{block-diag}^A \subseteq \mathcal U_{probl-lin} \subseteq \mathcal U_{prob} \subseteq \mathcal U_{mixed} \subseteq \mathcal U_{unital}.
\end{equation}

We discuss in this section the situations when some of the above inclusions are equalities. See Section \ref{sec:conclusions} for some related open questions. 

\begin{theorem}\label{thm:block-diag-prob-lin}
The sets $\mathcal U_{block-diag}^A$ and $\mathcal U_{prob-lin}$ are equal.
\end{theorem}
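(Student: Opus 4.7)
The plan is to prove the two inclusions separately. The direction $\mathcal{U}_{block-diag}^A \subseteq \mathcal{U}_{prob-lin}$ is a direct calculation. Starting from $U = \sum_{i=1}^k U_i \otimes e_i f_i^*$ with unitary $U_i$ and orthonormal bases $\{e_i\}, \{f_i\}$ of $\mathbb{C}^k$, the cross terms in $L_{U,\beta}(X) = \sum_{i,j} U_iXU_j^* \langle e_j, e_i\rangle \langle f_i, \beta f_j\rangle$ vanish by orthonormality of the $e_i$, yielding $L_{U,\beta}(X) = \sum_i \langle f_i, \beta f_i\rangle\,U_iXU_i^*$. The weights $p_i(\beta) = \langle f_i, \beta f_i\rangle$ are linear in $\beta$, nonnegative on positive $\beta$, and sum to $\mathrm{Tr}(\beta) = 1$ whenever $\beta$ is a state.

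For the reverse inclusion, take $U \in \mathcal{U}_{prob-lin}$ with decomposition $L_{U,\beta}(X) = \sum_{i=1}^r \mathrm{Tr}(P_i\beta)\,U_iXU_i^*$, where, after extending $p_i$ linearly to $\mathcal{M}_k(\mathbb{C})$, one has $P_i \geq 0$ and $\sum_i P_i = I_k$, and where the $U_i$ are taken pairwise non-proportional and, for the sake of clarity, linearly independent in $\mathcal{M}_n(\mathbb{C})$. My first step is to recover the structural form $U = \sum_i U_i \otimes M_i$ for some $M_i \in \mathcal{M}_k(\mathbb{C})$. Since the span of the Kraus operators of a channel is invariant under the choice of Kraus representation, and since the Stinespring Kraus operators of $L_{U,I_k/k}$ are precisely the matrix elements $(I_n \otimes v^*)\,U\,(I_n \otimes w)$ for $v,w \in \mathbb{C}^k$, every such matrix element lies in $\mathrm{span}\{U_i\}$, which forces the claimed form.

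Next, I match coefficients. Expanding directly yields $L_{U,\beta}(X) = \sum_{i,j} U_iXU_j^*\,\mathrm{Tr}(M_j^*M_i\beta)$, and comparing with the prob-lin expression, the linear independence of the maps $X \mapsto U_iXU_j^*$ --- whose Choi matrices $\mathrm{vect}(U_i)\mathrm{vect}(U_j)^*$ are linearly independent as soon as the $U_i$ are --- combined with the freedom in $\beta$ forces
\begin{equation*}
M_j^*M_i = \delta_{ij}\,P_i,
\end{equation*}
so that the ranges $\mathrm{Ran}(M_i)$ are pairwise orthogonal and $M_i^*M_i = P_i$.

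Finally, I exploit $UU^* = I_{nk}$. Let $E_i$ denote the orthogonal projection onto $\mathrm{Ran}(M_i)$; by the previous step the $E_i$ are mutually orthogonal, and the inequality $\sum_i \mathrm{rank}(P_i) \geq \mathrm{rank}(\sum_i P_i) = k$ combined with $\sum_i E_i \leq I_k$ forces $\sum_i E_i = I_k$. Sandwiching $UU^* = I_{nk}$ between $I_n \otimes E_l$ on the left and $I_n \otimes E_k$ on the right, and using $E_l M_i = \delta_{li}M_i$ (from $M_i = E_iM_i$) together with the adjoint identity $M_j^*E_k = \delta_{jk}M_j^*$, collapses the double sum to a single term and extracts $M_lM_l^* = E_l$ (so each $M_i$ is a partial isometry and each $P_i$ an orthogonal projection) as well as $M_lM_k^* = 0$ for $l \neq k$. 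Thus the family $\{M_i\}$ consists of partial isometries with $\sum_i M_i^*M_i = \sum_i M_iM_i^* = I_k$ and pairwise orthogonal initial and final spaces, and Proposition \ref{prop:uniqueness-decomp-cl} identifies $U = \sum_i U_i \otimes M_i$ as an element of $\mathcal{U}_{block-diag}^A$. The hardest point to justify rigorously will be the initial reduction to a linearly independent family of unitaries preserving the prob-lin form, on which the coefficient matching in step two depends; this should be obtainable by working with the minimal mixed-unitary decomposition of $L_{U,\beta}$ at a generic environment state and verifying that the associated weights inherit the required linearity from the original decomposition.
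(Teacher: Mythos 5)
Your easy inclusion and your opening reduction (extending the $p_i$ by linearity, representing them as $p_i(\beta)=\mathrm{Tr}(P_i\beta)$ with $P_i\geq 0$ and $\sum_i P_i=I_k$) coincide with the paper's. The divergence, and the problem, is in the middle of the reverse inclusion: both your recovery of the form $U=\sum_i U_i\otimes M_i$ from the Kraus operator space and your coefficient matching $M_j^*M_i=\delta_{ij}P_i$ require the $U_i$ to be \emph{linearly independent}, and this is not a removable technicality. Passing to pairwise non-proportional $U_i$ is genuinely without loss of generality (merge proportional terms and add the weights), but linear independence is strictly stronger and fails for actual elements of $\mathcal U_{prob-lin}$. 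Concretely, take $n=2$, $k=3$, $U_1=I_2$, $U_2=\mathrm{diag}(1,-1)$, $U_3=\mathrm{diag}(1,i)=\tfrac{1+i}{2}U_1+\tfrac{1-i}{2}U_2$, and $U=\sum_{i=1}^3 U_i\otimes e_ie_i^*$. This $U$ lies in $\mathcal U^A_{block-diag}\subseteq\mathcal U_{prob-lin}$ with pairwise non-proportional but linearly dependent unitaries, and by the uniqueness statement of Proposition \ref{prop:uniqueness-decomp-cl} there is no alternative decomposition with linearly independent unitaries and partial-isometry coefficients. If you instead expand over a linearly independent subset, say $\{U_1,U_2\}$, the coefficient matching produces nonzero cross terms ($M_2^*M_1=\tfrac{i}{2}\,e_3e_3^*\neq 0$ in this example), so the orthogonality on which your final unitarity bookkeeping rests is simply false in that basis. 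Your proposed fix via a ``minimal mixed-unitary decomposition with linear weights'' is exactly the unproved content here --- indeed it is close in spirit to the paper's open Conjecture that $\mathcal U^A_{block-diag}=\mathcal U_{mixed}$ --- so as written the proof does not cover all of $\mathcal U_{prob-lin}$.

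The paper circumvents this entirely by never writing $U$ in the form $\sum_i U_i\otimes M_i$ at the outset. Instead, from the channel identity it derives the operator equation $U^*(A\otimes I)U=\sum_i U_i^*AU_i\otimes M_i$ for all $A\in\mathcal M_n(\mathbb C)$ (which needs no independence assumption), then exploits the multiplicativity of $A\mapsto U^*(A\otimes I)U$: evaluating the product relation at $A=ee^*$, $B=ff^*$ with $e\perp f$ and taking traces yields a vanishing sum of non-negative terms, forcing $\mathrm{Tr}(M_iM_j)=0$ (the alternative would make $U_iU_j^*$ scalar, contradicting $U_i\not\sim U_j$). Hence $M_iM_j=0$ for $i\neq j$, the $M_i$ are orthogonal projections summing to $I_k$, and $V=\sum_i U_i\otimes M_i$ is a block-diagonal unitary inducing the same channels as $U$; Lemma \ref{lem:same-channels} then gives $U=(I\otimes W)V$. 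If you want to keep your unitarity-based endgame, you should replace your Kraus-space step by this multiplicativity argument; otherwise the gap at the linear-independence reduction is fatal.
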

\begin{proof}
Without loss of generality, we assume that the unitary operators $U_i$ are different up to a phase, i.e. $U_i \not\sim U_j$, for all $i \neq j$.

Due to the decomposition of any matrix into a linear combination of at most four matrices in $\mathcal M_k^{1,+}(\mathbb C)$ (positive and negative hermitian parts and their equivalents for the anti-hermitian part), the $p_i$'s can be extended by linearity to positive functionals on $\mathcal M_k(\mathbb C)$. By Riesz Theorem, for each $i$, there exists a matrix $M_i$ such that
$$\forall X \in \mathcal M_k(\mathbb C), \quad p_i(X)= \mathrm{Tr}(M_i X)\,.$$
Note now that the values $p_i(\beta)= \mathrm{Tr}(M_i \beta)$ are non-negative for all $\beta$ in $\mathcal M_k^{1,+}(\mathbb C)$. Therefore the matrix $M_i$ is actually positive semi-definite since we have
$$\mathrm{Tr}(M_i \beta)= p_i(\beta)=\overline{p_i(\beta)}= \overline {\mathrm{Tr}(M_i \beta)}= \mathrm{Tr}(M_i^* \beta)\,, \quad \forall \beta \in \mathcal M_k^{1,+}(\mathbb C) .$$ 
We then conclude that the matrices $M_i$ can be written as $M_i= R_i R_i^*=R_i^*R_i=R_i^2$ for some hermitian and positive semi-definite $R_i$.

Moreover, using that for all $\beta$ in $\mathcal M_k^{1,+}(\mathbb C)$
$$\sum_{i=1}^r p_i(\beta)= \mathrm{Tr}\Big( \Big[\sum_{i=1}^r M_i \Big] \beta\Big)=1,$$
it follows that $\sum_{i=1}^r M_i=I$.
Then, for all $\beta$ in $\mathcal M_k^{1,+}(\mathbb C)$, we can write the quantum channel $L_{U,\beta}$ as
$$L_{U,\beta}(X) = \mathrm{Tr}_B(U\cdot X \otimes \beta \cdot U^*)=\mathrm{Tr}_B\Big( \sum_{i=1}^r U_i \otimes R_i\cdot X \otimes \beta \cdot (U_i \otimes R_i)^*\Big).$$
By linearity, the previous equality gives
$$\forall Y \in \mathcal M_{nk}(\mathbb C), \quad \mathrm{Tr}_B(U\cdot Y \cdot U^*)=\mathrm{Tr}_B\Big( \sum_{i=1}^r U_i \otimes R_i\cdot Y \cdot (U_i \otimes R_i)^*\Big).$$
Then, by definition of the partial trace, we obtain the following equivalences
\begin{align}
\nonumber \forall Y \in \mathcal M_{nk}(\mathbb C), A \in \mathcal M_{n}(\mathbb C), \quad& \mathrm{Tr}(U\cdot Y \cdot U^* A\otimes I)=\mathrm{Tr}\Big( \sum_{i=1}^r U_i \otimes R_i\cdot Y \cdot (U_i \otimes R_i)^* A \otimes I\Big) \\
\nonumber\forall Y \in \mathcal M_{nk}(\mathbb C), A \in \mathcal M_{n}(\mathbb C), \quad& \mathrm{Tr}(Y \cdot U^* A\otimes IU)=\mathrm{Tr}\Big(Y \cdot \sum_{i=1}^r   U_i ^*AU_i\otimes M_i\Big)\\
\forall A \in \mathcal M_{n}(\mathbb C), \quad&U^* A\otimes IU=\sum_{i=1}^r   U_i ^*AU_i\otimes M_i\label{eq:AtenseurI}.
\end{align}
Let us now study \eqref{eq:AtenseurI}. Consider two matrices $A,B$ in $\mathcal M_{n}(\mathbb C)$. Since $U^* AB\otimes IU=U^* A\otimes IU\cdot U^* B\otimes IU$, we obtain from \eqref{eq:AtenseurI}
\begin{equation}\label{eq:ABtenseurI}
\sum_{i=1}^r   U_i ^*ABU_i\otimes M_i = \sum_{i,j=1}^r   U_i ^*AU_iU_j ^*BU_j\otimes M_iM_j.
\end{equation}
Applying \eqref{eq:ABtenseurI} with $A=ee^*$ and $B=ff^*$ where $e$ and $f$ are orthogonal vectors of $\mathbb C^n$, it directly follows
$$0= \sum_{i\neq j}  U_i ^* ee^* U_i  U_j ^*ff^*U_j \otimes M_iM_j.$$
Taking the trace, we obtain
$$\forall e \perp f, \quad 0= \sum_{i\neq j}   \vert \langle U_i ^* e, U_j ^*f\rangle \vert^2\ \mathrm{Tr}( M_iM_j)= \sum_{i\neq j}   \vert \langle  e,U_i  U_j ^*f\rangle \vert^2 \ \mathrm{Tr}( M_iM_j).$$
Note that the previous equation is actually a sum of non-negative terms equals to $0$. Therefore, we conclude that for $i\neq j$
\begin{equation}\label{eq:M_iM_j=0}
\mathrm{Tr}( M_iM_j)=0\quad \text{ or } \quad  \forall e \perp f, \quad \langle  e,U_i  U_j ^*f\rangle =0.
\end{equation}
Now since for all $i \neq j$, $U_i \not\sim U_j$, the unitary matrix $U_i  U_j ^*$ is not a multiple of the identity. Thus, we claim that we can find orthogonal vectors $e$ and $f$, such that $ \langle  e,U_i  U_j ^*f\rangle \neq0$. Indeed, the matrix $U_i  U_j ^*$ is diagonalizable in an orthonormal basis $(u_p)_{p=1,...,n}$ with related eigenvalues $(\mu_p)_{p=1,...,n}$. Since $U_i  U_j ^*\not\in \mathbb C I$, all the eigenvalues are not equal, for instance $\mu_1 \neq \mu_2$. Let us consider the orthogonal vectors $e= u_1 -u_2$ and $f=u_1+u_2$. We can easily check that $\langle  e,U_i  U_j ^*f\rangle=\mu_1-\mu_2\neq 0$.

Finally, we deduce from \eqref{eq:M_iM_j=0} that  $\mathrm{Tr}( M_iM_j)=0$ for all $i\neq j$ and thus $M_i M_j=0$. The $M_i$'s being orthogonal and sum to the identity, we define the matrix $V= \sum_{i=1}^r U_i \otimes R_i$ belonging to $\mathcal U^{A}_{cl}$. We can now easily check that the unitary operators $U$ and $V$ induce the same quantum channels for all $\beta$ in $\mathcal M_k^{1,+}(\mathbb C)$. By Lemma \ref{lem:same-channels}, there exists $W$ of $\mathcal U_k$ such that $U= (I \otimes W) V$; this proves the result.
\end{proof}

\begin{proposition}\label{prop:block-diag-unital-qubit}
If $n=2$, then $\mathcal U_{block-diag}^A = \mathcal U_{unital}$. In particular, the chain of inclusions \eqref{eq:chain-inclusions} collapses:
$$\mathcal U_{block-diag}^A = \mathcal U_{probl-lin} = \mathcal U_{prob} = \mathcal U_{mixed} = \mathcal U_{unital}.$$
\end{proposition}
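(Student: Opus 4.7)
The chain of inclusions \eqref{eq:chain-inclusions}, together with the preceding proposition ($\mathcal U_{mixed} = \mathcal U_{unital}$ when $n=2$) and Theorem \ref{thm:block-diag-prob-lin} ($\mathcal U^A_{block-diag} = \mathcal U_{prob-lin}$), reduces the statement to the single inclusion $\mathcal U_{unital} \subseteq \mathcal U^A_{block-diag}$. Since $\mathcal U^A_{block-diag}$ is stable under left and right multiplication by $I_2 \otimes \mathcal U_k$, it suffices to exhibit, for each $U \in \mathcal U_{unital}$, unitaries $V, W \in \mathcal U_k$ such that all four $k \times k$ blocks of $(I_2 \otimes V) U (I_2 \otimes W)$ are diagonal.

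The first step is to extract the algebraic content of $U \in \mathcal U_{unital}$. By Theorem \ref{unital}, both $U$ and $U^\Gamma$ are unitary, so writing $U$ as a $2 \times 2$ block matrix with $k \times k$ blocks $A, B, C, D$, the equations $UU^* = U^*U = I$ and $U^\Gamma (U^\Gamma)^* = (U^\Gamma)^* U^\Gamma = I$ produce four families of block identities. A short manipulation yields: each of $A, B, C, D$ is normal; $P := A^*A = AA^* = D^*D = DD^*$ and $I - P = B^*B = BB^* = C^*C = CC^*$; the operator $P$ commutes with each of the four blocks; and finally the crossed identities
\begin{equation*}
AC^* = -BD^*,\quad A^*B = -C^*D,\quad A^*C = -B^*D,\quad AB^* = -CD^*.
\end{equation*}

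Because $P$ commutes with $A, B, C, D$, the spectral decomposition $\mathbb C^k = \bigoplus_\lambda V_\lambda$ of $P$ restricts each block to $A_\lambda, B_\lambda, C_\lambda, D_\lambda$, and I may choose $V, W$ block-diagonal for this splitting and work one eigenspace at a time. Fix $\lambda \in (0,1)$: then $A_\lambda, D_\lambda$ are $\sqrt\lambda$ times unitaries and $B_\lambda, C_\lambda$ are $\sqrt{1-\lambda}$ times unitaries, and the crossed identities are inherited. Using the freedom to right-conjugate by the adjoint of the unitary part $A^u_\lambda$ of $A_\lambda$, I reduce to the case $A^u_\lambda = I$ while preserving both normality of the other blocks and the four crossed identities (direct substitution). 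The identities $A^*B = -C^*D$ and $A^*C = -B^*D$ then read, for the unitary parts, $D^u_\lambda = -C^u_\lambda B^u_\lambda$ and $D^u_\lambda = -B^u_\lambda C^u_\lambda$ respectively, forcing $[B^u_\lambda, C^u_\lambda] = 0$. These commuting normal unitaries admit a joint diagonalization by some unitary $F_\lambda$, which simultaneously diagonalizes $I$, $B^u_\lambda$, $C^u_\lambda$, and $D^u_\lambda = -B^u_\lambda C^u_\lambda$; conjugation by $F_\lambda$ on both sides therefore diagonalizes all four restricted blocks. The degenerate cases $\lambda \in \{0,1\}$ are easier, since two of the four blocks vanish identically and the remaining pair of scaled unitaries admits a joint SVD by the standard trick (invert one on the right and diagonalize the resulting unitary product on the left).

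Assembling the block-wise unitaries into global $V, W \in \mathcal U_k$ yields $(I_2 \otimes V) U (I_2 \otimes W) = \sum_{i=1}^k U_i \otimes |i\rangle\langle i|$ with each $U_i \in \mathcal U_2$, placing $U$ in $\mathcal U^A_{block-diag}$ by local-unitary invariance; the chain \eqref{eq:chain-inclusions} then collapses to a string of equalities. The principal obstacle is the commutator step on each generic eigenspace: deducing $[B^u_\lambda, C^u_\lambda] = 0$ hinges on equating the two expressions for $D^u_\lambda$ that come out of two of the four crossed identities, and this identification is really where the special feature of $n=2$ is used --- for larger $n$, the analogous web of cross-relations is too loose to force such commutativity, consistent with the fact that the proposition is restricted to the qubit case.
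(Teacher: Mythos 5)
There is a genuine gap, and it occurs at the very first structural step of your argument. From Theorem \ref{unital} (unitarity of $U$ and of $U^\Gamma$) one obtains the relations $AA^*+BB^*=AA^*+CC^*=I$, $A^*A+B^*B=A^*A+C^*C=I$, etc., and hence $BB^*=CC^*=I-AA^*$, $B^*B=C^*C=I-A^*A$, $AA^*=DD^*$, $A^*A=D^*D$; but nothing in this system forces $AA^*=A^*A$. Your claims that each block is normal, that $P:=A^*A=AA^*$, and that $P$ commutes with the four blocks are all false in general. A concrete counterexample already lies inside the target set: take $U = I_2\otimes e_1f_1^* + \sigma_x\otimes e_2f_2^*$ with $f_1=(e_1+e_2)/\sqrt2$, $f_2=(e_1-e_2)/\sqrt2$; this is in $\mathcal U^A_{block-diag}\subseteq\mathcal U_{unital}$, yet its block $A=e_1f_1^*$ satisfies $AA^*=e_1e_1^*\neq f_1f_1^*=A^*A$, so $A$ is not normal and does not commute with $A^*A$. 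Since your entire construction (spectral decomposition of $P$, restriction of the blocks to the eigenspaces of $P$, polar parts $A^u_\lambda$, etc.) rests on these identities, the argument does not go through as written. The underlying confusion is that membership in $\mathcal U^A_{block-diag}$ requires a simultaneous \emph{singular value} decomposition of the four blocks (two different bases, $V$ on the left and $W$ on the right), so the relevant normality/commutativity conditions concern the products $XY^*$ and $X^*Y$ for $X,Y\in\{A,B,C,D\}$ --- which is exactly the criterion of Proposition \ref{prop:CNS-block-diag-unitary} --- and not the individual blocks.

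The paper's proof proceeds precisely along the repaired version of your plan: after the same reduction to the single inclusion $\mathcal U_{unital}\subseteq\mathcal U^A_{block-diag}$ and the same system of block relations, it verifies the criterion of Proposition \ref{prop:CNS-block-diag-unitary} by checking the identity $XY^*Z=ZY^*X$ for $X,Y,Z\in\{A,B,C,D\}$ (reduced by symmetry to eleven cases), each of which follows from the relations by substitutions such as $AA^*B=(I-BB^*)B=B(I-B^*B)=BA^*A$. If you want to salvage your construction, you should replace the spectral analysis of $P$ by a joint block-SVD argument, i.e., establish commutativity and normality of the families $\{XY^*\}$ and $\{X^*Y\}$ and then invoke Theorem \ref{thm:block-SVD}; the place where $n=2$ enters is indeed the algebraic web of cross-relations, but at the level of these products rather than of the blocks themselves.
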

\begin{proof}
Consider $\{e_1,e_2\}$ the canonical orthonormal basis of $\mathbb C^2$ and consider a matrix $U$ in $ \mathcal U_{unital}$ written as $2\times 2$ block matrices 
$$U=\left(\begin{array}{cc} A&B\\C&D
\end{array}
\right)=e_1e_1^*\otimes A+e_1e_2^*\otimes B+e_2e_1^*\otimes C+e_2e_2^*\otimes D$$
with $A,B,C$ and $D$ in $\mathcal M_{k}(\mathbb C)$. As proved in Theorem \ref{unital}, both $U$ and $(U^\Gamma)^T$ are unitary matrices and therefore
\begin{equation}\label{system_unital}
\begin{array}{cccccccccc}
AA^*+ BB^*&=&A^*A+ B^*B&=&I,& 
AA^*+ CC^*&=&A^*A+ C^*C&=& I,\\
CC^*+ DD^*&=&C^*C+ D^*D&=&I,&
DD^*+ BB^*&=&D^*D+ B^*B&=&I,\\
AC^*+BD^*&=&A^*C+B^*D&=&0, & 
AB^*+CD^*&=&A^*B+C^*D&=&0. 
\end{array}
\end{equation}
Our aim is to applied Proposition \ref{prop:CNS-block-diag-unitary}. Consider the orthonormal basis $\{E_{ij}=e_je_j^*,i,j=1,2\}$ of $\mathcal M_2(\mathbb C)$, it is clear that 
\begin{align*}
\{X_{ij}X_{kl}^*,i,j,k,l=1,2\}&=\lbrace XY^*\, \vert \,X,Y=A,B,C,D\rbrace\\\{X_{ij}^*X_{kl},i,j,k,l=1,2\}&=\lbrace X^*Y\, \vert \,X,Y=A,B,C,D\rbrace.
\end{align*} 
Let us now prove that $\lbrace XY^*\, \vert \,X,Y=A,B,C,D\rbrace$ and $\lbrace X^*Y\, \vert \,X,Y=A,B,C,D\rbrace$ consist of normal commuting matrices. It can be noticed that it is sufficient to check that, for all $X,Y,Z=A,B,C,D$, 
$$XY^*Z=ZY^*X\,.$$
Using the symmetry $A,D$ in \eqref{system_unital} together with the symmetry $B,C$, the 64 different cases boil down to 11 non-trivial cases : $AA^*B, AA^*D, AB^*B, AB^*C, AB^*D, AD^*B, AD^*D, BA^*C, BA^*D, BB^*C$ and $BC^*D$. Each of them can be easily checked as for instance
\begin{align*}
AA^*B=(I-BB^*)B= B (I-B^*B)=BA^*A
\end{align*}
or
\begin{align*}
BC^*D= B(C^*D)= B(-A^*B)= (-BA^*)B= DC^*B\,.
\end{align*}
The result then holds.
\end{proof}

\begin{remark}
A similar result has been obtained in \cite[Theorem 9]{kmwy}, under more stringent assumptions. More precisely, it is shown in \cite{kmwy} that, when $n=2$, $\mathcal U_{block-diag}^A = \mathcal U_{prob}$, assuming that the unitary operators appearing in the mixed-unitary decomposition of channels are linearly independent.
\end{remark}

Swapping the roles of $n$ and $k$, we obtain the following result. \begin{proposition}
If $k=2$, then
$\mathcal U_{block-diag}^B= \mathcal U_{unital}\,.$
\end{proposition}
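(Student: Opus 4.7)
The plan is to reduce the claim to Proposition~\ref{prop:block-diag-unital-qubit} by exchanging the roles of the two tensor factors via the tensor swap. Let $F : \mathbb C^n \otimes \mathbb C^k \to \mathbb C^k \otimes \mathbb C^n$ denote the map $F(x \otimes y) = y \otimes x$, and for $U \in \mathcal U_{nk}$ set $\tilde U := F U F^* \in \mathcal U_{kn}$, viewed as a bipartite unitary on $\mathbb C^k \otimes \mathbb C^n$ in which the first tensor factor (of dimension $k = 2$) plays the role of the new system and the second factor (of dimension $n$) plays the role of the new environment.

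First I would verify that both relevant classes are covariant under this swap. For the block-diagonal class, this is immediate: if $U = \sum_i e_i f_i^* \otimes U_i \in \mathcal U_{block-diag}^B$, then $\tilde U = \sum_i U_i \otimes e_i f_i^*$ lies in the ``$A$''-block-diagonal class of the swapped setup, and conversely. For the unital class, a direct computation using $F^*(\tilde\rho \otimes \alpha)F = \alpha \otimes \tilde\rho$ gives
$$L_{\tilde U, \alpha}(Y) = \mathrm{Tr}_{A}\bigl( U \cdot (\alpha \otimes Y) \cdot U^* \bigr), \qquad \alpha \in \mathcal M_n^{1,+}(\mathbb C),\ Y \in \mathcal M_k(\mathbb C),$$
so the unital condition for $\tilde U$, as $\alpha$ ranges over $\mathcal M_n^{1,+}(\mathbb C)$, is precisely the statement that $U \in \mathcal U^{B}_{unital}$; by Theorem~\ref{unital} this coincides with $\mathcal U^{A}_{unital} = \mathcal U_{unital}$.

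With these two covariance statements established, Proposition~\ref{prop:block-diag-unital-qubit} applied to $\tilde U$ (whose new system factor has dimension $2$) yields
$$\tilde U \in \mathcal U_{block-diag}^{A} \iff \tilde U \in \mathcal U_{unital},$$
and combining this with the two equivalences above produces the desired equality $\mathcal U_{block-diag}^B = \mathcal U_{unital}$.

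The only delicate point is the bookkeeping in the verification of the unital-class covariance: one must track carefully how conjugation by $F$ exchanges the roles of system and environment in the definition of $\mathcal U_{unital}$, and the reduction depends crucially on the symmetry $\mathcal U^A_{unital} = \mathcal U^B_{unital}$ supplied by Theorem~\ref{unital}. Without this symmetry, the swap would only trade the original question for its dual, rather than yield an honest equivalence. An entirely parallel direct proof, mimicking the block-matrix manipulations of Proposition~\ref{prop:block-diag-unital-qubit} but with a $2\times 2$ block decomposition on the $B$-side and Proposition~\ref{prop:CNS-block-diag-unitary} replaced by its $B$-analogue, would also work but is less economical.
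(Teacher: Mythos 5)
Your proof is correct and is essentially the paper's own argument: the paper derives this proposition from Proposition \ref{prop:block-diag-unital-qubit} simply by ``swapping the roles of $n$ and $k$,'' which is exactly the flip-conjugation reduction you carry out, including the appeal to $\mathcal U^A_{unital}=\mathcal U^B_{unital}$ from Theorem \ref{unital}. Your write-up just makes explicit the covariance checks that the paper leaves implicit.
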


\section{Conclusions and open questions}
\label{sec:conclusions}

We end this work with a list of questions that we have left unanswered (or even untouched). We hope to get back to some of these problems in some future work. 

We start with the problem of computing the dimension of the algebraic variety $\mathcal U_{unital}$; recall that previously, we have looked at the enveloping tangent space of this variety, at some particular points. 

\begin{conjecture}\label{cnj:dim-U-unital}
Show that $\dim \mathcal U_{unital} = kn^2+nk^2-nk$.
\end{conjecture}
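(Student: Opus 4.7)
\medskip

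\noindent\textbf{Proof proposal.}
The plan is to combine a lower bound obtained by integrating the enveloping tangent space computation already available to us, with an upper bound obtained by generic tangent-space semicontinuity. Recall that the paper proved $\mathcal U_{unital} = \mathcal U_{nk} \cap \mathcal U_{nk}^\Gamma$ is a real algebraic variety, that $\widetilde T_U(\mathcal U_{unital}) := T_U(\mathcal U_{nk}) \cap T_U(\mathcal U_{nk}^\Gamma)$ always contains $T_U(\mathcal U_{unital})$, and that at a block-diagonal $U_0 = \sum_{i=1}^k U_i \otimes e_i f_i^*$ with all products $U_i U_j^*$ ($i\neq j$) of simple spectrum one has $\dim \widetilde T_{U_0}(\mathcal U_{unital}) = kn^2+nk^2-nk$.

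\emph{Lower bound.} Fix such a generic block-diagonal $U_0$. I would show that $\mathcal U_{unital}$ is a smooth submanifold of $\mathcal M_{nk}(\mathbb C)$ of the predicted dimension in a neighbourhood of $U_0$, by applying a constant-rank / implicit function theorem to the defining real-analytic equations $UU^* = I$ and $U^\Gamma(U^\Gamma)^* = I$. The two blocks of Jacobian equations were diagonalized in the proof of the enveloping-tangent-space formula: the diagonal blocks of a tangent direction $A$ contribute $kn^2$ free anti-Hermitian parameters, and each off-diagonal pair reduces to a Sylvester equation $RA_{ij} - A_{ij}S = 0$ with $R,S$ both unitary with simple, shared spectrum. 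Under this genericity hypothesis each Sylvester system has solution space of dimension equal to $n$, and I expect the linearization to have the expected rank, so that the implicit function theorem produces a genuine submanifold realizing every direction in $\widetilde T_{U_0}$. This gives $\dim_{U_0}\mathcal U_{unital} \geq kn^2+nk^2-nk$.

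\emph{Upper bound.} On any irreducible component $\mathcal C \subseteq \mathcal U_{unital}$, the smooth locus is dense, and at a smooth point $U$ one has $\dim \mathcal C = \dim T_U(\mathcal U_{unital}) \leq \dim \widetilde T_U(\mathcal U_{unital})$. The enveloping tangent dimension $U \mapsto \dim \widetilde T_U(\mathcal U_{unital})$ is upper semicontinuous in $U$ (the kernel dimension of a matrix depending algebraically on $U$), so on each component $\mathcal C$ it attains its minimum on a dense open set. The task reduces to proving that this generic value on every component is $\leq kn^2+nk^2-nk$. To do this, I would use the left-right local unitary action of $\mathcal U_n\times\mathcal U_k\times\mathcal U_n\times\mathcal U_k$ to deform any smooth $U\in \mathcal C$ into a canonical form using a block singular value decomposition, and then read off the dimension by repeating the Sylvester analysis used above. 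The singular points where the enveloping tangent space is strictly bigger (e.g.\ product unitaries $V\otimes W$, where it is $n^2k^2$) should form a proper subvariety of $\mathcal C$, so the generic estimate applies.

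\emph{Main obstacle.} The hard part, and where an honest proof will need real work, is the upper bound on potentially exotic irreducible components: one must rule out the existence of a component composed entirely of ``singular'' points where $\widetilde T_U$ is abnormally large but the true $T_U(\mathcal U_{unital})$ is small. Concretely this amounts to a stratification-by-rank analysis of the linear map $A \mapsto (AU^* + UA^*,\, A^\Gamma U^* + U A^{\Gamma *} )$ over all of $\mathcal U_{nk}$, together with a verification that every irreducible component of $\mathcal U_{unital}$ meets the open stratum where the block SVD of $U$ is generic. Once this stratification is under control, combining it with the explicit infinitesimal count already performed in the paper yields the conjectured equality $\dim \mathcal U_{unital} = kn^2 + nk^2 - nk$.
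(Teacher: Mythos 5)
This statement is left as an open conjecture in the paper; there is no proof there to compare against, so your proposal must stand on its own, and as written it is a programme with two genuine gaps rather than a proof. The more serious one is the lower bound. The enveloping tangent space $\widetilde T_{U_0}(\mathcal U_{unital}) = T_{U_0}(\mathcal U_{nk}) \cap T_{U_0}(\mathcal U_{nk}^\Gamma)$ only bounds the local dimension of $\mathcal U_{unital} = \mathcal U_{nk}\cap\mathcal U_{nk}^\Gamma$ from \emph{above}: two smooth varieties can intersect in something of much smaller dimension than the intersection of their tangent spaces (two internally tangent circles in the plane meet in a point while their tangent lines coincide). So computing $\dim\widetilde T_{U_0} = kn^2+nk^2-nk$ at a generic block-diagonal $U_0$ gives you nothing in the direction you need. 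To invoke the constant rank theorem you would have to show that the rank of $A \mapsto (AU^*+UA^*,\,A^\Gamma (U^\Gamma)^* + U^\Gamma (A^\Gamma)^*)$ is constant on an ambient neighbourhood of $U_0$, i.e.\ that the kernel dimension does not drop at nearby non-block-diagonal points --- which is essentially the content of the conjecture and is not supplied by the Sylvester-equation count at $U_0$ alone. Note also that you cannot fall back on known examples: $\dim\mathcal U^A_{block-diag} = kn^2+2k^2-2k$, which falls short of the target by $k(k-1)(n-2)$, so for $n\geq 3$ a genuine lower bound requires exhibiting unital-producing unitaries that nobody in the paper has constructed.

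The upper bound has a second gap that you partly acknowledge but do not resolve. Your reduction ``deform any smooth $U \in \mathcal C$ into a canonical form using a block singular value decomposition'' is not available: admitting a block SVD with respect to $B$ is precisely membership in $\mathcal M^A_{block-diag}$ (Theorem \ref{thm:block-SVD}), hence in $\mathcal U^A_{block-diag}$, which is conjecturally a \emph{proper} subvariety of $\mathcal U_{unital}$ for $n\geq 3$; a generic element of $\mathcal U_{unital}$ has no such normal form. Moreover the local unitary orbit of a point has real dimension at most $2(n^2+k^2)$, far too small to sweep a component into any chosen stratum. What remains --- ruling out an irreducible component contained entirely in the locus where $\widetilde T_U$ is large (e.g.\ near product unitaries, where it has the full dimension $n^2k^2$) while $T_U(\mathcal U_{unital})$ is small --- is exactly the open problem, and the stratification-by-rank analysis you gesture at is the missing mathematics, not a routine verification.
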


It has been showed in Theorem \ref{thm:block-diag-prob-lin} that any operator in the set $\mathcal U_{prob-lin}$ (which is a subset of $\mathcal U_{mixed}$) is block diagonal, with respect to the system $A$. Moreover, in the qubit case $n=2$, we have $\mathcal U_{block-diag}^A = \mathcal U_{mixed}$, see Proposition \ref{prop:block-diag-unital-qubit}. We conjecture that this equality always hold, and that the technical restrictions appearing in the definition of $\mathcal U_{prob-lin}$ are actually superfluous. 

\begin{conjecture}
For all values of $n,k$, it holds that
$$\mathcal U_{block-diag}^A = \mathcal U_{mixed}.$$
\end{conjecture}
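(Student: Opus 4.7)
The plan is to establish the non-trivial inclusion $\mathcal{U}_{mixed} \subseteq \mathcal{U}_{block-diag}^A$, since the reverse inclusion is already contained in the chain \eqref{eq:chain-inclusions}. Combining that chain with Theorem \ref{thm:block-diag-prob-lin}, which identifies $\mathcal{U}_{block-diag}^A$ with $\mathcal{U}_{prob-lin}$, the problem reduces to showing $\mathcal{U}_{mixed} \subseteq \mathcal{U}_{prob-lin}$. In other words, starting from a mixed-unitary decomposition in which both the unitaries $U_i(\beta)$ and the probabilities $p_i(\beta)$ are allowed to depend arbitrarily on $\beta$, one must show that (a) the unitaries may be chosen from a finite family independent of $\beta$, and (b) the probabilities may be taken to depend linearly on $\beta$.

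The first step, which is the core of the argument, will be to pin down a fixed set of unitaries. Fix $U \in \mathcal{U}_{mixed}$ and, for each unit vector $f \in \mathbb{C}^k$, consider the Kraus operator space
\begin{equation*}
K(f) := \mathrm{span}\bigl\{(I_n \otimes e^*)\, U\, (I_n \otimes f) \, : \, e \in \mathbb{C}^k \bigr\} \subseteq \mathcal{M}_n(\mathbb{C}),
\end{equation*}
which is the Kraus space of the mixed-unitary channel $L_{U,ff^*}$. The proposition preceding Corollary \ref{cor:not-mixed} guarantees $K(f) \cap \mathcal{U}_n \neq \emptyset$ for every unit $f$. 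The plan is to show that the family $\bigl\{K(f) \cap \mathcal{U}_n\bigr\}_{f}$, taken modulo global phase, collapses to a finite set $\{V_1,\dots,V_r\}$ that works for every $f$. The map $f \mapsto K(f)$ is polynomial into the Grassmannian of subspaces of $\mathcal{M}_n(\mathbb{C})$; the intersection of a generic subspace with $\mathcal{U}_n$ (a real algebraic variety) is, up to phase, a finite set parametrizing the extreme points of the face of $\mathcal{L}_{mixed}$ spanned by $L_{U,ff^*}$. A continuity / rigidity argument on the real projective space of directions $[f]$, together with extremality considerations for channels at the boundary of $\mathcal{L}_{mixed}$, should force this finite set to be globally constant.

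The second step is the linearization. Once a universal family $\{V_1,\dots,V_r\}$ is available, one has $L_{U,\beta} = \sum_i p_i(\beta)\, V_i \cdot V_i^*$ for each $\beta \in \mathcal{M}_k^{1,+}(\mathbb{C})$. Because the map $\beta \mapsto L_{U,\beta}$ is linear in $\beta$ and the unitaries $V_i$ are fixed, after reducing to a minimal subfamily whose Choi matrices are linearly independent, the coefficients $p_i(\beta)$ are uniquely determined as linear functionals of $\beta$, and positivity and normalization follow exactly as in the Riesz-representation argument used in the proof of Theorem \ref{thm:block-diag-prob-lin}. This places $U$ in $\mathcal{U}_{prob-lin}$, hence in $\mathcal{U}_{block-diag}^A$.

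The main obstacle is the first step: promoting the pointwise condition $K(f)\cap\mathcal{U}_n\neq\emptyset$ to a globally finite, $\beta$-independent family of unitaries. Two difficulties stand out. First, a possible monodromy, in which the finite set of unitaries inside $K(f)$ permutes as $f$ traverses a loop in the sphere, would need to be ruled out, perhaps by analyzing the real analyticity of the eigenstructure of canonically chosen elements of $K(f)\cap \mathcal{U}_n$. Second, one has to rule out (or handle) positive-dimensional families of unitaries in $K(f)$ for special $f$, where the standard genericity argument breaks down; this likely requires exploiting the block singular value decomposition of $U$ developed in Appendix \ref{sec:block-svd} to describe $K(f)$ explicitly, and invoking Proposition \ref{prop:CNS-block-diag-unitary} to detect block-diagonality directly from a commuting-normal-family condition on the associated operators $X_\alpha$.
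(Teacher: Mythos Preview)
This statement is presented in the paper as an open conjecture (Section \ref{sec:conclusions}); the paper offers no proof, only the partial results $\mathcal U_{block\text{-}diag}^A=\mathcal U_{prob\text{-}lin}$ (Theorem \ref{thm:block-diag-prob-lin}) and the collapse of the whole chain when $n=2$ (Proposition \ref{prop:block-diag-unital-qubit}). So there is no ``paper's own proof'' to compare against, and your write-up is, appropriately, a strategy rather than a proof. That said, both steps of the strategy have genuine gaps.

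For Step 1, you correctly note that every unitary appearing in a mixed-unitary decomposition of $L_{U,ff^*}$ lies in $K(f)\cap\mathcal U_n$, but the hoped-for finiteness and rigidity of this intersection are not established. The set $K(f)\cap\mathcal U_n$ can be positive-dimensional for \emph{all} $f$, not just special ones (take $U=I_n\otimes W$, where $K(f)=\mathbb C I_n$ only if $n=1$; more interestingly, whenever $K(f)$ contains two commuting unitaries it contains a one-parameter family). The monodromy and degeneration issues you flag are exactly why the paper leaves the problem open, and neither the real-analyticity idea nor the appeal to Proposition \ref{prop:CNS-block-diag-unitary} is fleshed out enough to see how it would close the gap; the latter in particular would require you to \emph{produce} the commuting-normal condition on the $X_\alpha X_\beta^*$ from the mixed-unitary hypothesis, which is essentially the whole conjecture.

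Step 2 also has a gap that you gloss over. Passing to a subfamily with linearly independent Choi matrices $v_iv_i^*$ does force the coefficients $q_i(\beta)$ to be the unique, hence linear, solution of $C_{L_{U,\beta}}=\sum_i q_i(\beta)\,v_iv_i^*$; but nothing guarantees $q_i(\beta)\ge 0$, since the reduction replaces $p_j$ by $p_j+\sum_{i>s}c_{ij}p_i$ with possibly negative $c_{ij}$. Your appeal to the proof of Theorem \ref{thm:block-diag-prob-lin} is misplaced: that argument \emph{starts} from linear $p_i$ (this is the definition of $\mathcal U_{prob\text{-}lin}$) and deduces block-diagonality; it does not manufacture linearity from a $\mathcal U_{prob}$ hypothesis. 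In short, even the inclusion $\mathcal U_{prob}\subseteq\mathcal U_{prob\text{-}lin}$ is not settled by your outline, let alone $\mathcal U_{mixed}\subseteq\mathcal U_{prob}$.
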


Regarding bipartite unitary operators producing PPT channels, we have left the following problem open. 
 
 \begin{question}
 Given a subspace $V \subseteq \mathbb C^n \otimes \mathbb C^k$, let $P \in \mathcal M_{nk}^{sa}(\mathbb C)$ be the orthogonal projection on $V$. Characterize the set of subspaces $V$ such that 
 $$\mathcal M_{n^2k}^{sa}(\mathbb C) \ni (I_n \otimes P)(F_n \otimes I_k)(I_n \otimes P) \geq 0,$$
 where $F$ is the flip operator. 

 \end{question}
 
 This brings us to the problem of characterizing the set $\mathcal U_{EB}$ and comparing it to $\mathcal U_{PPT}$ (at the level of quantum states, this would be the fact that the PPT criterion for separability is necessary in all dimensions, and sufficient for $nk\leq 6$). 
 
\begin{question}
Provide a description of the set $\mathcal U_{EB}$. For which values of $n,k$, is it true that $\mathcal U_{PPT} = \mathcal U_{EB}$?
\end{question}

At the level of examples, beside the obvious inclusion $\mathcal U_{const} \subseteq \mathcal U_{EB}$, we also have\footnote{We thank Siddharth Karumanchi for pointing this out to us.}, when $n=k$, 
$$\mathcal U^A_{block-diag} \cdot F_n \subseteq \mathcal U_{EB}.$$
Indeed, for a unitary operator $U = \left(\sum_{i=1}^n U_i \otimes e_i f_i^* \right) \cdot F_n$, the corresponding quantum channel reads
$$L_{U, \beta}(\rho) = \sum_{i=1}^n \langle f_i  , \rho f_i \rangle \,  U_i \beta U_i^*,$$
which is entanglement-breaking.

Finally, we consider the following classe of bipartite unitary matrices yielding channels of interest in quantum information theory. 
\begin{align*}
\mathcal U_{CQ} &= \{U \in \mathcal U \, | \, \forall \beta \in \mathcal M_k^{1,+}(\mathbb C), L_{U,\beta} \text{ is a classical-quantum channel}\}\\
\mathcal U_{QC} &= \{U \in \mathcal U \, | \, \forall \beta \in \mathcal M_k^{1,+}(\mathbb C), L_{U,\beta} \text{ is a quantum-classical  channel}\}\\
\mathcal U_{CC} &= \{U \in \mathcal U \, | \, \forall \beta \in \mathcal M_k^{1,+}(\mathbb C), L_{U,\beta} \text{ is a classical-classical  channel}\}
\end{align*}

The study of these classes has been initiated in \cite{kmwy,kmwy2}, where mainly the qubit case $n=2$ has been discussed. The structure of these operators in the general case remains open.

\begin{question}
Characterize the sets $\mathcal U_{CQ}$, $\mathcal U_{QC}$, and $\mathcal U_{CC}$.
\end{question}

\appendix

\section{Necessary and sufficient conditions for the existence of a block-SVD}\label{sec:block-svd}

In this section, we establish necessary and sufficient conditions for the existence of a block singular value decomposition of a bipartite operator with respect to the second sub-system $B$. Moreover, we present an algorithm for obtaining such a decomposition when it does exist. These results are inspired from \cite{ash,dvb}, see also \cite[Theorem (2.5.5) and Section 7.3, Problem 25]{hjo}.

We shall denote, for any matrices $X \in \mathcal M_n(\mathbb C) \otimes \mathcal M_k(\mathbb C)$ and $A \in \mathcal M_n(\mathbb C)$,
$$X_A := [\mathrm{Tr} \otimes \mathrm{id}](X \cdot A \otimes I_k).$$

\begin{theorem}\label{thm:block-SVD}
Let $X \in \mathcal M_n(\mathbb C) \otimes \mathcal M_k(\mathbb C)$. The following conditions are equivalent:
\begin{enumerate}
\item The matrix $X$ has a \emph{block-diagonal SVD}: there exists matrices $X_1, \ldots X_p \in \mathcal M_n(\mathbb C)$ and partial isometries $R_1, \ldots, R_p \in \mathcal M_k(\mathbb C)$ having orthogonal initial, resp.~ final, projections such that
$$X = \sum_{i=1}^p X_i \otimes R_i.$$
\item The families $\{X_A X_B^*\}_{A,B \in \mathcal M_n(\mathbb C)}$, resp.~ $\{X_A^* X_B\}_{A,B \in \mathcal M_n(\mathbb C)}$, consist of commuting, normal operators. 
\item The families $\{X_\alpha X_\beta^*\}_{\alpha,\beta=1}^{n^2}$, resp.~ $\{X_\alpha^* X_\beta\}_{\alpha,\beta=1}^{n^2}$, consist of commuting, normal operators. We write $X_\alpha := X_{E_\alpha^*}$ for an orthonormal basis $\{E_\alpha\}_{\alpha=1}^{n^2}$ of $\mathcal M_n(\mathbb C)$, in such a way that $X = \sum_{\alpha=1}^{n^2} E_\alpha \otimes X_\alpha$.
\end{enumerate}
We denote the set of matrices $X$ satisfying the above condition(s) by $\mathcal M_{block-diag}^A$
$$\mathcal M_{block-diag}^A := \{X \in \mathcal M_{nk}(\mathbb C) \, | \, X = \sum_{i=1}^k X_i \otimes e_if_i^*\}.$$
The set $\mathcal M_{block-diag}^B$ is defined in a similar way, by swapping the roles of the two tensor factors. 
\end{theorem}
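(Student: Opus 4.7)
The plan is to prove the three equivalences in the order (2) $\Leftrightarrow$ (3), then (1) $\Rightarrow$ (3), and finally the main implication (3) $\Rightarrow$ (1). The equivalence (2) $\Leftrightarrow$ (3) is immediate from the linearity of $A \mapsto X_A$: a general $X_A$ is a linear combination of the basis operators $X_\alpha = X_{E_\alpha^*}$, so the products $X_A X_B^*$ and $X_A^* X_B$ are sesquilinear combinations of the $X_\alpha X_\beta^*$'s and $X_\alpha^* X_\beta$'s respectively, and linear combinations of mutually commuting normal operators from a common commutative $*$-algebra remain commuting and normal.

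For (1) $\Rightarrow$ (3), I would expand $X_\alpha = \sum_i c^\alpha_i R_i$ with $c^\alpha_i = \mathrm{Tr}(X_i E_\alpha^*)$ and compute $X_\alpha X_\beta^* = \sum_{i,j} c^\alpha_i \overline{c^\beta_j}\, R_i R_j^*$. The hypothesis $R_i R_j^* = 0$ for $i \neq j$ collapses this to $\sum_i c^\alpha_i \overline{c^\beta_i}\, R_i R_i^*$: a linear combination of mutually orthogonal projections, hence trivially normal, and different choices of $(\alpha,\beta)$ produce operators that commute with one another since they lie in the abelian algebra spanned by the system $\{R_i R_i^*\}$. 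The case $X_\alpha^* X_\beta$ is symmetric, using orthogonality of the initial projections.

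The main implication (3) $\Rightarrow$ (1) proceeds by simultaneous diagonalization. Since $\{X_\alpha^* X_\beta\}$ is a commuting family of normal operators, the spectral theorem produces an orthonormal basis $\{f_t\}_{t=1}^k$ of $\mathbb C^k$ with $X_\alpha^* X_\beta f_t = \mu^{\alpha\beta}_t f_t$ for scalars $\mu^{\alpha\beta}_t$. Define $V_t := \mathrm{span}\{X_\alpha f_t\}_{\alpha}$; the identity $\langle X_\alpha f_t, X_\beta f_{t'}\rangle = \delta_{tt'}\mu^{\alpha\beta}_{t'}$ immediately shows the subspaces $V_t$ are pairwise orthogonal in $\mathbb C^k$. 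The crux of the argument is the claim $\dim V_t \leq 1$; to prove it I would apply the commutativity identity $X_\gamma X_\delta^* X_{\gamma'} X_{\delta'}^* = X_{\gamma'} X_{\delta'}^* X_\gamma X_\delta^*$ to the vector $X_\alpha f_t$, repeatedly peeling off the inner $X^{*} X$ factors (each acting on $f_t$ as a scalar $\mu$) to obtain
$$\mu^{\delta'\alpha}_t\,\mu^{\delta\gamma'}_t\, X_\gamma f_t \;=\; \mu^{\delta\alpha}_t\,\mu^{\delta'\gamma}_t\, X_{\gamma'} f_t.$$
Specializing $\delta = \delta' = \alpha = \gamma_0$, where $\gamma_0$ is any index with $X_{\gamma_0} f_t \neq 0$ (so that $\mu^{\gamma_0\gamma_0}_t = \|X_{\gamma_0} f_t\|^2 > 0$), this forces $X_{\gamma'} f_t = (\mu^{\gamma_0\gamma'}_t / \mu^{\gamma_0\gamma_0}_t)\, X_{\gamma_0} f_t$ for every $\gamma'$, so all of $V_t$ is contained in the line through $X_{\gamma_0} f_t$.

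Once this dimension bound is in hand, the decomposition assembles itself: for each $t$ with $V_t \neq \{0\}$ pick a unit vector $v_t \in V_t$ and write $X_\alpha f_t = x^\alpha_t v_t$, so $X_\alpha = \sum_t x^\alpha_t\, v_t f_t^*$ and therefore
$$X \;=\; \sum_\alpha E_\alpha \otimes X_\alpha \;=\; \sum_t \Big(\sum_\alpha x^\alpha_t E_\alpha\Big) \otimes \big(v_t f_t^*\big).$$
The rank-one operators $R_t := v_t f_t^*$ are partial isometries with pairwise orthogonal initial projections $f_t f_t^*$ (from orthonormality of $\{f_t\}$) and pairwise orthogonal final projections $v_t v_t^*$ (from $V_t \perp V_{t'}$), which is precisely the block-diagonal SVD of (1). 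The main obstacle is the dimension bound $\dim V_t \leq 1$; a pleasant feature of this route is that only commutativity of $\{X_\alpha X_\beta^*\}$ and the full commuting-normal structure of $\{X_\alpha^* X_\beta\}$ are used, with the normality of $\{X_\alpha X_\beta^*\}$ recovered a posteriori via the already-proved (1) $\Rightarrow$ (3).
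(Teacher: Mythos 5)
Your proof is correct, and the key implication $(3)\Rightarrow(1)$ takes a genuinely different route from the paper's. The paper works at the level of common spectral projections of \emph{both} families: it writes $X_\alpha X_\beta^*=\sum_i\lambda_i^{(\alpha,\beta)}P_i$ and $X_\alpha^*X_\beta=\sum_i\mu_i^{(\alpha,\beta)}Q_i$, extracts from a polar-decomposition argument an expression $X_\alpha=\sum_i\lambda_i^{(\alpha)}R_i^{(\alpha)}$ with $R_i^{(\alpha)}$ carrying $Q_{\sigma_\alpha(i)}$ to $P_i$, and must then argue that the permutations $\sigma_\alpha$ and the partial isometries $R_i^{(\alpha)}$ can be chosen independently of $\alpha$ --- a matching step the paper only sketches. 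You instead diagonalize only the commuting normal family $\{X_\alpha^*X_\beta\}$ in an orthonormal basis $\{f_t\}$ of $\mathbb C^k$, note that the subspaces $V_t=\mathrm{span}\{X_\alpha f_t\}$ are pairwise orthogonal, and invoke the commutativity of $\{X_\gamma X_\delta^*\}$ exactly once, applied to the vector $X_\alpha f_t$, to force $\dim V_t\le 1$; the resolution of the identity along $\{f_t\}$ then hands you the rank-one block SVD $X=\sum_t Y_t\otimes v_tf_t^*$ directly. This buys a completely explicit pairing of initial and final projections with no permutation ambiguity, lands immediately on the finest (rank-one) form used in the definition of $\mathcal M^A_{block-diag}$, and shows as a bonus that normality of $\{X_\alpha X_\beta^*\}$ is not actually needed as a hypothesis for this implication. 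Your other steps ($(2)\Leftrightarrow(3)$ by sesquilinearity inside the commutative $*$-algebra generated by the $X_\alpha X_\beta^*$, and $(1)\Rightarrow(3)$ via $R_iR_j^*=R_iQ_iQ_jR_j^*=0$ for $i\neq j$, which correctly uses orthogonality of \emph{both} the initial and the final projections) agree in substance with the paper's.
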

\begin{proof}
The implication $(2) \implies (3)$ is obvious. Let us first show $(1) \implies (2)$. For a matrix $X$ as in \eqref{eq:X-block-SVD}, we have
$$X_A X_B^* = \sum_{i=1}^p \mathrm{Tr}(X_iA) \overline{\mathrm{Tr}(X_iB)} P_i,$$
where $P_i = R_iR_i^*$ is the orthogonal projection on the image of the partial isometry $R_i$. This relation shows that the matrices $\{X_A X_B^*\}_{A,B \in \mathcal M_n(\mathbb C)}$ are normal and diagonalizable in the same orthonormal basis of $\mathbb C^n$. A similar argument shows the result for the matrices $\{X_A^* X_B\}_{A,B \in \mathcal M_n(\mathbb C)}$

Let us now show $(3) \implies (1)$. The fact that the normal matrices $X_\alpha X_\beta^*$ commute implies they have the same set of eigenprojectors $P_i$:
\begin{equation}\label{eq:alpha-beta-star}
\forall \alpha,\beta, \quad X_\alpha X_\beta^* = \sum_{i=1}^p \lambda_i^{(\alpha,\beta)} P_i.
\end{equation}
In the same vein, we have, for another set of orthogonal eigenprojectors $Q_i$:
\begin{equation}\label{eq:alpha-star-beta}
\forall \alpha,\beta, \quad X_\alpha^* X_\beta = \sum_{i=1}^{p'} \mu_i^{(\alpha,\beta)} Q_i.
\end{equation}
Letting $\alpha=\beta$ in \eqref{eq:alpha-beta-star} and \eqref{eq:alpha-star-beta} and using the fact that the matrices $X_\alpha X_\alpha^*$ and $X_\alpha^* X_\alpha$ have the same (positive) eigenvalues (counting multiplicities), we have that $p=p'$ and there exists permutations $\sigma_\alpha \in \mathcal S_p$ and complex numbers $\lambda_i^{(\alpha)}$ such that
$$X_\alpha = \sum_{i=1}^r \lambda_i^{(\alpha)} R_i^{(\alpha)}$$
for some partial isometries $R_i^{(\alpha)}$ having initial projection $Q_{\sigma_\alpha(i)}$ and final projection $P_i$. Plugging the last expression into \eqref{eq:alpha-beta-star} and \eqref{eq:alpha-star-beta}, we find that the permutations $\sigma_\alpha$ must be equal; we shall assume, by re-ordering the eigenprojectors $Q_i$, that these permutations are all equal to the identity. Using similar arguments, the partial isometries $R_i^{(\alpha)}$ cannot depend on $\alpha$, and we write $R_i^{(\alpha)} = R_i$. We have thus 
$$X_\alpha = \sum_{i=1}^r \lambda_i^{(\alpha)} R_i,$$
and thus
\begin{align*}
X &= \sum_{\alpha=1}^{n^2} E_\alpha \otimes \left( \sum_{i=1}^r \lambda_i^{(\alpha)} R_i \right) \\
&=\sum_{i=1}^r \left( \sum_{\alpha=1}^{n^2} \lambda_i^{(\alpha)} E_\alpha \right) \otimes R_i.
\end{align*}
Setting
$$X_i := \sum_{\alpha=1}^{n^2} \lambda_i^{(\alpha)} E_\alpha$$
concludes the proof.
\end{proof}

\begin{remark}
The above result provides us with an efficient way of checking whether a given bipartite operator $X$ has a block-SVD: pick a basis of $\mathcal M_n(\mathbb C)$ (e.g.~the usual matrix units) and check the condition in (3) above. 
\end{remark}

\begin{remark}
Restricting to  $\alpha = \beta$ in the condition (3) does not yield an equivalent statement, as it is shown in the example below. For $X \in \mathcal M_{2 \cdot 2}(\mathbb C)$ given by
\begin{equation}
X = \begin{bmatrix}
1 & 0 & 0 & 1 \\
0 & 2 & 1 & 0 \\
0 & 1 & 1 & 0 \\
1 & 0 & 0 & 1 
\end{bmatrix}
\end{equation}
with the choice of the canonical matrix units for $\mathcal M_2(\mathbb C)$, the matrices $X_\alpha$ are $X_{11} = \begin{bmatrix} 1 & 0 \\ 0 & 2 \end{bmatrix}$, $X_{12}=X_{21} = \begin{bmatrix} 0 & 1 \\ 1 & 0 \end{bmatrix}$, and $X_{22} = I_2$. All the matrices $X_\alpha X_\alpha^*$ are diagonal, hence they are normal and commute. However, the matrices $X_{11}X_{22}^*$ and $X_{12}X_{22}^*$ do not commute, so the $4 \times 4$ matrix $X$ does not satisfy the equivalent conditions from Theorem \ref{thm:block-SVD}, hence it does not have a block-SVD.
\end{remark}

\begin{corollary}
The set $\mathcal M_{block-diag}^A$ is a real algebraic variety.
\end{corollary}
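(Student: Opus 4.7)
The plan is to leverage the characterization (3) in Theorem \ref{thm:block-SVD} and express the conditions appearing there as polynomial equations in the real and imaginary parts of the entries of $X$.

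First, I would fix an orthonormal basis $\{E_\alpha\}_{\alpha=1}^{n^2}$ of $\mathcal{M}_n(\mathbb{C})$ (for instance, the rescaled matrix units) and observe that the map $X \mapsto X_\alpha := [\mathrm{Tr}\otimes\mathrm{id}](X \cdot E_\alpha^* \otimes I_k)$ is $\mathbb{C}$-linear in the entries of $X$, so each entry of $X_\alpha$ is a linear form in the entries of $X$, and each entry of $X_\alpha^*$ is a linear form in the entries of $\bar{X}$.

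Next I would encode the two conditions of (3):
\begin{itemize}
\item the matrices $X_\alpha X_\beta^*$ and $X_\alpha^* X_\beta$ are normal, i.e.\
\[
[X_\alpha X_\beta^*,\, (X_\alpha X_\beta^*)^*]=0, \qquad [X_\alpha^* X_\beta,\, (X_\alpha^* X_\beta)^*]=0;
\]
\item the families commute pairwise, i.e.\
\[
[X_\alpha X_\beta^*,\, X_\gamma X_\delta^*]=0, \qquad [X_\alpha^* X_\beta,\, X_\gamma^* X_\delta]=0 \qquad \forall \alpha,\beta,\gamma,\delta.
\]
\end{itemize}
Each of these commutators, entry by entry, is a polynomial of degree at most $4$ in the entries of $X$ and $\bar{X}$. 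Writing $X = X_R + iX_I$ with $X_R, X_I \in \mathcal{M}_{nk}(\mathbb{R})$ (so that $\mathcal{M}_{nk}(\mathbb{C}) \simeq \mathbb{R}^{2(nk)^2}$), the real and imaginary parts of each entry of every commutator become real polynomials in the $2(nk)^2$ real variables $(X_R)_{ij}, (X_I)_{ij}$.

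Thus $\mathcal{M}_{block-diag}^A$ is precisely the common zero set of this finite collection of real polynomial equations, which by definition makes it a real algebraic variety. There is no genuine obstacle here: the only mildly delicate point is the observation that the condition ``a family of matrices consists of commuting normal operators'' is a purely polynomial condition once one treats entries of $X^*$ as (conjugate-)linear functions of the entries of $X$; the equivalence from Theorem \ref{thm:block-SVD}(3) does the rest of the work.
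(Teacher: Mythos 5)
Your proposal is correct and follows essentially the same route as the paper: invoke condition (3) of Theorem \ref{thm:block-SVD} and observe that the commutation (and normality) requirements on the families $\{X_\alpha X_\beta^*\}$ and $\{X_\alpha^* X_\beta\}$ are degree-$4$ polynomial equations in the real and imaginary parts of the entries of $X$. You spell out slightly more detail than the paper (noting explicitly that normality of $X_\alpha X_\beta^*$ is commutation with $X_\beta X_\alpha^*$, hence already part of the pairwise-commutation conditions), but the argument is the same.
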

\begin{proof}
A matrix $X$ belongs to $\mathcal M_{block-diag}^A$ if and only if the two families of $n^4$ matrices $\{X_\alpha X_\beta^*\}$ and $\{X_\alpha^* X_\beta\}$ commute; these commutations conditions can be restated as (degree 4) polynomial conditions in the real and imaginary parts of the elements of $X$. 
\end{proof}

Before computing in the next proposition the dimension of the real algebraic variety $\mathcal M_{block-diag}^A$, we would like to give a heuristic argument in the form of parameter counting. The choice of the two orthonormal bases $\{e_i\}$ and $\{f_i\}$ in \eqref{eq:X-block-SVD} corresponds to a total of $2k^2$ real parameters, since $\dim_{\mathbb R} \mathcal U_k = k^2$. Each matrix $X_i$ accounts for $2n^2$ real parameters, so, in total, we get $2kn^2$ extra real parameters. However, in each term $X_i \otimes e_if_i^*$, two of the three complex phases of $X_i, e_i, f_i$ are redundant, so we have over counted $2k^2$ real parameters. We conclude that the real dimension of $\mathcal M_{block-diag}^A$ should be $2k^2+2kn^2-2k$, fact which we rigorously prove next. 

\begin{proposition}\label{prop:dim-M-block-diag}
The real dimension of the algebraic variety $\mathcal M_{block-diag}^A $ is $2k(n^2+k-1)$.
\end{proposition}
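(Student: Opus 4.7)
The plan is to realize $\mathcal M_{block-diag}^A$ as the image of an algebraic parametrization map and to apply the fiber-dimension theorem. Fix a reference orthonormal basis $\{e_1^\circ, \ldots, e_k^\circ\}$ of $\mathbb C^k$ and consider
\[
\Phi : \mathcal U_k \times \mathcal U_k \times \mathcal M_n(\mathbb C)^k \longrightarrow \mathcal M_{block-diag}^A, \qquad \Phi(U,V,X_1,\ldots,X_k) = \sum_{i=1}^k X_i \otimes (Ue_i^\circ)(Ve_i^\circ)^*.
\]
By the very definition of $\mathcal M_{block-diag}^A$, the map $\Phi$ is surjective. Its domain is a smooth real manifold of real dimension $k^2 + k^2 + k \cdot 2n^2 = 2k^2 + 2kn^2$, since $\dim_{\mathbb R} \mathcal U_k = k^2$. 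Moreover, $\Phi$ is polynomial in real coordinates and the unitarity defining constraints for $\mathcal U_k$ are algebraic, so that $\Phi$ is a morphism of real algebraic varieties.

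The core of the argument is to determine the generic fiber of $\Phi$. Consider a point
\[
X = \sum_{i=1}^k X_i \otimes e_i f_i^*
\]
chosen so that all $X_i$ are nonzero and pairwise non-proportional (in the sense of the relation $\sim$ from Proposition~\ref{prop:uniqueness-decomp-cl}). I claim that $\Phi^{-1}(X)$ consists precisely, up to permutation of the index $i$, of the reparametrizations
\[
(X_i, e_i, f_i) \longmapsto (\lambda_i X_i, \mu_i e_i, \nu_i f_i), \qquad \lambda_i \mu_i \bar\nu_i = 1, \qquad |\lambda_i| = |\mu_i| = |\nu_i| = 1.
\]
Each index contributes three phases subject to one constraint, giving a $2k$-dimensional continuous fiber (a torus); permutations of indices are a discrete symmetry and contribute nothing to the dimension. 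The uniqueness claim here is a block-SVD analogue of Proposition~\ref{prop:uniqueness-decomp-cl}: one checks, by the same kind of algebraic manipulation as in that proof, that once the rank-one factors $e_i f_i^*$ are fixed as orthogonal partial isometries, the corresponding blocks $X_i$ are determined up to a phase, and vice versa.

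Applying the fiber dimension theorem for algebraic morphisms, one obtains at a generic smooth point
\[
\dim_{\mathbb R} \mathcal M_{block-diag}^A = (2k^2 + 2kn^2) - 2k = 2k(n^2 + k - 1),
\]
which is the claimed formula. (The boundary case $n=1$ is consistent: the formula yields $2k^2$, matching the fact that $\mathcal M_{block-diag}^A = \mathcal M_k(\mathbb C)$ in that case via the ordinary SVD.)

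The main obstacle is the uniqueness statement for generic points, i.e., that no other pair of orthonormal bases $\{e_i\}, \{f_i\}$ can represent $X$ in block-SVD form. The cleanest way to handle it is to apply Theorem~\ref{thm:block-SVD}: the commuting normal operators $\{X_\alpha X_\beta^*\}$ and $\{X_\alpha^* X_\beta\}$ determine two common eigenbasis decompositions (i.e., the projections $P_i = e_ie_i^*$ and $Q_i = f_if_i^*$), and once these are fixed, only the phase freedom described above remains. A secondary technical point is verifying that $\Phi$ is a submersion at a generic point so that the fiber-dimension formula truly delivers the real dimension; this reduces to checking that the tangent map at a generic $(U,V,X_\bullet)$ has a kernel of dimension exactly $2k$, which matches the infinitesimal version of the phase reparametrization above.
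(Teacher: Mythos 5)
Your proof is correct and follows essentially the same route as the paper: a surjective algebraic parametrization of $\mathcal M_{block-diag}^A$ followed by a generic-fiber dimension count, with the generic uniqueness of the block-SVD (up to permutation and phases) as the key input. The only cosmetic difference is that the paper quotients the two copies of $\mathcal U_k$ by the diagonal phase subgroups (flag manifolds $\mathrm{Fl}_k$) so that the generic fiber becomes finite ($k!$-to-one), whereas you keep the full unitary groups and account for the resulting $2k$-dimensional phase torus inside the fiber; the arithmetic is identical.
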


\begin{proof}
For the terminology and the results used in this proof, we refer the reader to \cite[Chapter 11]{har}. Let us introduce the flag manifold (see \cite[Example 8.34]{har} or \cite[Section 4.9]{bzy})
$$\mathrm{Fl}_k = \mathcal U_k / (\mathcal U_1^k),$$
which has real dimension $\dim_{\mathbb R} \mathrm{Fl}_k = k^2-k$. Consider the map
\begin{align*}
\varphi :\mathcal M_n(\mathbb C)^k \times \mathrm{Fl}_k \times \mathrm{Fl}_k  &\to  \mathcal M_{block-diag}^A\subset\mathcal M_n(\mathbb C)\otimes\mathcal M_k(\mathbb C) \\
\left((A_i)_{i=1}^k,([e_i])_{i=1}^k,([f_i])_{i=1}^k \right)& \mapsto  \sum_{i=1}^kA_i \otimes e_if_i^*.
\end{align*}
Obviously, $\varphi$ is surjective, so we have $\dim_{\mathbb R} \mathcal M_{block-diag}^A \leq 2k(n^2+k-1)$. To get the reverse inequality, define $\widetilde{\mathcal M_n(\mathbb C)^k}$ to be the set of pairwise distinct $k$-tuples of matrices. It is trivial to check that the map $\tilde \varphi$, obtained by restricting $\varphi$ to $\widetilde{\mathcal M_n(\mathbb C)^k}$, is $k!$-to-one, so the conclusion follows.
\end{proof}

\end{document}